\newcommand{\nocontentsline}[3]{}
\newcommand{\tocless}[2]{\bgroup\let\addcontentsline=\nocontentsline#1{#2}\egroup}
\newcommand{\defin}[1]{\textit{#1}}
\renewcommand{\and}{\text{ and }}
\newcommand{\compl}[1]{\,^\mathrm{c} #1}
\newcommand{\cG}{\mathcal G}
\newcommand{\cS}{\mathcal S}
\newcommand{\cP}{\mathcal P}
\newcommand{\col}{\mathrm{col}}
\newcommand{\Eve}{\mathrm{Eve}}
\newcommand{\Adam}{\mathrm{Adam}}
\newcommand{\Parity}{\mathrm{Parity}}
\newcommand{\val}{\mathrm{val}}
\newcommand{\Safety}{\mathrm{Safety}}
\newcommand{\Reachability}{\mathrm{Reachability}}
\newcommand{\Z}{\mathbb Z}
\newcommand{\N}{\mathbb N}
\newcommand{\game}{\mathcal{G}}
\newcommand{\VE}{V_{\text{Eve}}}
\newcommand{\VA}{V_{\text{Adam}}}
\newcommand{\lex}{\mathrm{lex}}
\newcommand{\powerset}[1]{\mathcal P(#1)}
\newcommand{\nepowerset}[1]{\mathcal P^{\neq \varnothing}(#1)}
\renewcommand{\L}{\mathcal L}
\newcommand{\ipath}[1]{\path{#1}{\infty}}
\newcommand{\re}[1]{\xrightarrow{#1}}
\newcommand{\rp}[1]{\overset{#1}{\rightsquigarrow}}
\newcommand{\tin}{\text{ in }}
\newcommand{\tif}{\text{ if }}
\newcommand{\tand}{\text{ and }}
\newcommand{\tor}{\text{ or }}
\newcommand{\tred}{\mathrm{red}}
\newcommand{\tblue}{\mathrm{blue}}
\newcommand{\tgray}{\mathrm{gray}}
\renewcommand{\S}{\mathcal S}
\newcommand{\safe}{\mathtt{safe}}
\newcommand{\bad}{\mathtt{bad}}
\newcommand{\imm}{\mathtt{imm}}
\newcommand{\good}{\mathtt{good}}
\newcommand{\wait}{\mathtt{wait}}
\newcommand{\Buchi}{\mathrm{\text{B\"uchi}}} 
\newcommand{\Cobuchi}{\mathrm{Co}\text{-}\mathrm{\text{B\"uchi}}} 
\newcommand{\req}{\mathtt{req}}
\newcommand{\Energy}{\mathrm{Energy}}
\newcommand{\BoundedCounter}{\mathrm{Bounded}}
\newcommand{\eps}{\varepsilon}
\newcommand{\QEnergy}{\mathrm{QEnergy}}
\renewcommand{\req}{\mathrm{req}}
\newcommand{\Grant}{\mathrm{Grant}}
\newcommand{\FinParity}{\mathrm{FinParity}}
\renewcommand{\lex}{\otimes}
\newcommand{\TW}{\mathrm{TW}}
\newcommand{\TL}{\mathrm{TL}}
\newcommand{\W}{\mathcal{W}}
\newcommand{\fin}{\mathrm{fin}}
\newcommand{\Pref}{\mathrm{Pref}}
\renewcommand{\H}{\mathcal{H}}
\newcommand{\emptyword}{\epsilon}
\newcommand{\prev}{\mathrm{prev}}
\newcommand{\li}[1]{\overline{#1}}
\title{Characterizing Positionality in Games of Infinite Duration over Infinite Graphs} 
\begin{document}

\maketitle

\begin{abstract}
    We study turn-based quantitative games of infinite duration opposing two antagonistic players and played over graphs.
    This model is widely accepted as providing the adequate framework for formalizing the synthesis question for reactive systems.
    This important application motivates the question of
    strategy complexity: which valuations (or payoff functions) admit optimal positional strategies (without memory)?
    Valuations for which both players have optimal positional strategies have been characterized by Gimbert and Zielonka~\cite{GZ05} for finite graphs and by Colcombet and Niwi\'nski~\cite{CN06} for infinite graphs.
    
    However, for reactive synthesis, existence of optimal positional strategies for the opponent (which models an antagonistic environment) is irrelevant.
    Despite this fact, not much is known about valuations for which the protagonist admits optimal positional strategies, regardless of the opponent.
    In this work, we characterize valuations which admit such strategies over infinite game graphs.
    Our characterization uses the vocabulary of universal graphs, which has also proved useful in understanding recent breakthrough results regarding the complexity of parity games.
    
    More precisely, we show that a valuation admitting universal graphs which are monotone and well-ordered is positional over all game graphs, and -- more surprisingly -- that the converse is also true for valuations admitting neutral colors.
    We prove the applicability and elegance of the framework by unifying a number of known positionality results, proving new ones, and establishing closure under lexicographical products.
    Finally, we discuss a class of prefix-independent positional objectives which is closed under countable unions.
\end{abstract}

\section{Introduction}

\paragraph{Games.}
In zero-sum turn-based infinite duration games played on graphs, two players, Eve and Adam, take turns in moving a token along the edges of a given (potentially infinite) directed graph, whose edges have labels from a given set of colors.
This interaction goes on forever and produces an infinite sequence of colors according to which the outcome of the game is determined, using a valuation which is fixed in advance.
The complexity of a strategy for either of the two players can be measured by means of how many states are required to implement it as a transition system.
In this paper, we are interested in the question of positionality (which corresponds to the degenerate case of memory one) for Eve\footnote{Some authors use the term ``half-positionality'' to refer to what we will simply call ``positionality''.}: for which valuations is it the case that Eve can play optimally without any memory, meaning that moves depend only on the current vertex of the game, regardless of the history leading to that vertex.

Understanding memory requirements -- and in particular positionality -- of given valuations has been a deep and challenging endeavour initiated by Shapley~\cite{Shapley53} for finite concurrent stochastic games, and then in our setting by Büchi and Landweber~\cite{BL69}, Büchi~\cite{Buchi77} and Gurevich and Harrington~\cite{GH82}.
The seminal works of Shapley~\cite{Shapley53}, Ehrenfeucht and Mycielski~\cite{EM73}, and later Emerson and Jutla~\cite{EJ91}, Klarlund~\cite{Klarlund92}, McNaughton~\cite{McNaughton93} and Zielonka~\cite{Zielonka98}, have given us a diverse set of tools for studying these questions.

Roughly speaking, these early efforts culminated in Gimbert and Zielonka's~\cite{GZ05} characterization of bi-positionality (positionality for both players) over finite graphs on one hand, and Kopczy\'{n}ski's~\cite{Kopczynski06} results and conjectures on positionality on the other.
In the recent years, increasingly expressive and diverse valuations have emerged from the development of reactive synthesis, triggering more and more interest in these questions.

As we will see below, bi-positionality is by now quite well understood, and the frontiers of finite-memory determinacy are becoming clearer.
However, recent approaches to finite-memory determinacy behave badly when instantiated to the case of positionality, for different reasons which are detailed below.
Therefore, inspired by the works of Klarlund, Kopczy\'{n}ski and others, we propose a general framework for positionality, and establish a new characterization result.
Before introducing our approach, we briefly survey the state of the art, with a focus on integrating several recent and successful works from different broadly related settings.

\paragraph{Bi-positionality.}
The celebrated result of Gimbert and Zielonka~\cite{GZ05} characterizes valuations which are bi-positional over finite graphs (including parity objectives, mean-payoff, energy, and discounted valuations, and many more).
The characterization is most useful when stated as follows (one-to-two player lift): a valuation is bi-positional if (and only if) each player has optimal positional strategies on game-graphs which they fully control.
Bi-positionality over infinite graphs is also well understood thanks to the work of Colcombet and Niwi\'nski~\cite{CN06}, who established that any prefix-independent objective which is bi-positional over arbitrary graphs is, up to renaming the colors, a parity condition (with finitely many priorities).

\paragraph{Finite-memory determinacy.}
Finite-memory determinacy of Muller games over finite graphs was first established by Büchi and Landweber~\cite{BL69}, and the result was extended to infinite graphs by Gurevich and Harrington~\cite{GH82}.
Zielonka~\cite{Zielonka98} was the first to investigate precise memory requirements and he introduced what Dziembowski, Jurdzi\'{n}ski and Walukiewicz~\cite{DJW97} later called the Zielonka tree of a given Muller condition, a data structure which they used to precisely characterize the amount of memory required by optimal strategies.

Another precise characterization of finite memory requirements was given by Colcombet, Fijalkow and Horn~\cite{CFH14} for generalised safety conditions over graphs of finite degree, which are those defined by excluding an arbitrary set of prefixes (topologically, $\Pi_1$).
This characterization is orthogonal to the one for Muller conditions (which are prefix-independent); it provides in particular a proof of positionality for generalisations of (threshold) energy objectives, and different other results.

Le Roux, Pauly and Randour~\cite{LPR18} identified a sufficient condition ensuring that finite memory determinacy (for both players) over finite graphs is preserved under boolean combinations.
Although they encompass numerous cases from the literature, the obtained bounds are generally not tight, and thus their results instantiate badly to the case of positionality.

We mention also a recent general result of Bouyer, Le Roux and Thomasset~\cite{BLT22}, in the much more general setting of (graph-less) concurrent games given by a condition $W \subseteq {(A \times B)}^\omega$: if $W$ belongs to $\Delta_2^0$ and residuals form a well-quasi order
, then it is finite-memory determined\footnote{In the concurrent setting, games are often not even determined (even when Borel). This is not an issue for considering finite-memory determinacy, which means ``if a winning strategy exists, then there is one with finite memory''.}.
We will also rely on well-founded orders (although ours are total), but stress that our results are incomparable: to transfer the result of~\cite{BLT22} to game on graphs, one encodes the (possibly infinite) graph in the winning condition $W$, and therefore strategies with reduced memory no longer have access to it.
This gives finite memory determinacy if the graph is finite (and if one complies with having memory bounds depending on its size), however positionality results cannot be transferred.

\paragraph{Chromatic and arena-independent memories.}
In his thesis, Kopczy\'{n}ski~\cite{Kopczynski06} proposed to consider strategies implemented by memory-structures that depend only on the colors seen so far (rather than on the path), which he called chromatic memory -- as opposed to usual chaotic memory.
His motivations for studying chromatic memory are the following: first, it appears that for several (non-trivial) conditions, chromatic and chaotic memory requirements actually match; second, any $\omega$-regular condition $W$ admits optimal strategies with finite chromatic memory, implemented by a deterministic (parity or Rabin) automaton recognising $W$; third, such strategies are arena-independent, and one may even prove (Proposition 8.9 in~\cite{Kopczynski06}) that in general, there are chromatic memories of minimal size which are arena-independent.
Kopczy\'nski therefore poses the following question: does it hold that chromatic (or equivalently, arena-independent) and chaotic memory requirements match in general?

A recent work of Casares~\cite{Casares22} studies this question specifically for Muller games, for which an elegant characterization of chromatic memory is given: it coincides with the size of the minimal deterministic transition-colored Rabin automaton recognising it.
Comparing with the characterization of~\cite{DJW97} via Zielonka trees reveals a gap between arena-dependent and independent memory requirements already for Muller conditions, which answers the above question in the negative.

Arena-independent (finite) memory structures have also been investigated recently by Bouyer, Le Roux, Oualhadj, Randour and Vandenhoven~\cite{BLORV22} over finite graphs.
In this context, they were able to generalise the characterization of~\cite{GZ05} (which corresponds to memory one), to arbitrary memory structures.
As a striking consequence, the one-to-two player lift of~\cite{GZ05} extends to arena-independent finite memory: if both players can play optimally with finite arena-independent memory respectively $n_\Eve$ and $n_\Adam$ in one-player arenas, then they can play optimally with finite arena-independent memory $n_\Eve \cdot n_\Adam$ in general.
A counterexample is also given in~\cite{BLORV22} for one-to-two player lifts in the case of arena-dependent finite memory.

This characterization was more recently generalised to pure arena-independent strategies in stochastic games by Bouyer, Oualhadj, Randour and Vandenhoven~\cite{BORV21}, and even to concurrent games on graphs by Bordais, Bouyer and Le Roux~\cite{BBL21}.
Unfortunately, none of these result carry over well to positionality, since they inherit from~\cite{GZ05} the requirement that both players rely on the same memory structure.
For instance, in a Rabin game, the antagonist requires finite memory $>1$ in general, and therefore the results of~\cite{BLORV22} cannot establish positionality.
We also mention a very recent work of Bouyer, Randour and Vandehoven~\cite{BRV22} which establishes that the existence of optimal finite chromatic memory for both players over arbitrary graphs characterizes $\omega$-regularity of the objective.

\paragraph{Positionality.}
Unfortunately, there appears to be no characterization similar to Gimbert and Zielonka's for (one player) positionality.
In fact, there has not been much progress in the general study of positionality since Kopczy\'nski's work, on which we now briefly extend.

Kopczy\'nski's main conjecture~\cite{Kopczynski06} on positionality asserts that prefix-independent positional objectives are closed under finite unions\footnote{This immediately fails for bi-positionality; for instance, the union of two co-B\"uchi objectives is not positional for the opponent.}. 
It can be instantiated either for positionality over arbitrary graphs, or only finite graphs, leading to two incomparable variants.
The variant over finite game graphs was recently disproved by Kozachinskiy~\cite{Kozachinskiy22}, however the question remains open for infinite game graphs (which are the focus of the present paper).

An elegant counterexample to a stronger statement is presented in~\cite{Kopczynski06}: there are uncountable unions of Büchi conditions which are not positional over some countable graphs.
One of Kopczy\'nski's contributions lies in introducing two classes of prefix-independent objectives, concave objectives and monotone objectives, which are positional (over finite and arbitrary graphs, respectively) and closed under finite unions.
%

Monotone objectives are those of the form $C^\omega \setminus \L^\omega$, where $\L \subseteq C^*$ is a (regular) language recognized by a linearly ordered deterministic automaton\footnote{The automaton is assumed to be finite, but Kopczy\'nski points out (page 45 in~\cite{Kopczynski06}) that the main results still hold whenever the state space is well-ordered and admits a maximum (stated differently, it is a non-limit ordinal).} whose transitions are monotone.
Our work builds on Kopczy\'nski's suggestion to consider well-ordered monotone automata; however to obtain a complete characterization we make several adjustments, most crucially we replace the automata-theoretic semantic of recognisability by the graph-theoretical universality which is more adapted to the fixpoint approach we pursue.

\paragraph{Our approach.}
We introduce well-monotone graphs, which are well-ordered graphs over which each edge relation is monotone, and prove in a general setting that existence of universal well-monotone graphs implies positionality.
The idea of using adequate well-founded (or ordinal) measures to fold arbitrary strategies into positional ones is far from being novel: it appears in the works of Emerson and Jutla~\cite{EJ91} (see also Walukiewicz' presentation~\cite{Walukiewicz96}, and Grädel and Walukiewicz' extensions~\cite{GW06}), but also of Zielonka~\cite{Zielonka98} (in a completely different way) for parity games, and was also formalized by Klarlund~\cite{Klarlund91, Klarlund92} in his notion of progress measures for Rabin games.

Our first contribution is rather conceptual and consists in streamlining the argument, and in particular expliciting the measuring structure as a (well-monotone) graph.
We believe that this has two advantages.
\begin{itemize}
\item[$(i)$] Separating the strategy-folding argument from the universality argument improves conceptual clarity.
\item[$(ii)$] Perhaps more importantly, well-monotone graphs then appear as concrete and manageable witnesses for positionality.
\end{itemize}
We supplement $(ii)$ with our main technical and conceptual novelty in the form of a converse: any positional valuation which has a neutral color admits universal well-monotone graphs.
Stated differently, for such valuations, existence of universal well-monotone graphs characterizes positionality.
This is the first known characterization result for positionality (for one player).
Section~\ref{sec:preliminaries} gives necessary definitions, and Section~\ref{sec:main_results} states and proves our main characterization result.

\vskip1em

We then proceed in Section~\ref{sec:examples} to apply our new framework to a number of well-studied positional objectives and valuations.
In each case, we construct universal well-monotone graphs establishing positionality over arbitrary graphs.
This is an occasion for us to introduce a few different tools for constructing well-monotone universal graphs.
For some of the more advanced valuations and objectives we discuss, our positionality results are novel.

In Section~\ref{sec:lexico}, we study lexicographic products of prefix-independent objectives, which are supported by a natural lexicographic product of well-monotone graphs.
Our main result on this front proves that universality is preserved through lexicographical product.
Combining this with our characterization result, we establish a general closure of prefix-independent positional objectives under finite lexicographical products; this result was not known prior to our work.

Lastly, in Section~\ref{sec:healing} we propose a general class of prefix-independent  positional objectives whose unions are positional.
This class coincides with objectives for which a somewhat naive combination of well-monotone graphs yields universality for the union.
This provides a wide, and quite easy to check, sufficient condition for closure under (countable) union to preserve positionality.

\paragraph*{Comparison with conference version.}
This paper is based on~\cite{Ohlmann22}; however a number of improvements were made.
First, all proofs have been reworked, and some of them considerably simplified.
For example, we removed the need for introducing progress measures in the proof of one of our main results, Theorem~\ref{thm:structure_implies_positionality}, making it conceptually much simpler.

Perhaps more importantly, we introduced some additional content: a study of $K$-monotone objectives and their positionality (Section~\ref{subsec:omega_regular}), stronger positionality result for counter-based valuations (Section~\ref{subsec:quantitative_examples}) and positionality of $\FinParity$ (Section~\ref{subsec:finitary_parity}).
Our results on closure under union (Section~\ref{sec:healing}) are also new to this final version.

\section{Preliminaries}\label{sec:preliminaries}

We use $\powerset X$ to denote the set of subsets of a set $X$, and $\nepowerset X$ for the set of nonempty subsets of $X$.
Throughout the paper, lexicographical products are denoted little-endian style, meaning that the first coordinate is always the weakest.

\paragraph{Graphs.}  In this paper, graphs are directed, edge-colored and have no sinks.
Formally, given a set of colors $C$, a $C$-pregraph $G$ is given by a set of vertices $V(G)$ and a set of edges $E(G) \subseteq V(G) \times C \times V(G)$.
Note that no assumption is made in general regarding the finiteness of $C$, $V(G)$ or $E(G)$.
For convenience, we write $v \re c v'$ for the edge $(v,c,v')$.
If $v \re c v' \in E$, we say that $v$ is a $c$-predecessor of $v'$, and $v'$ is a $c$-successor of $v$.

A sink in a $C$-pregraph is a vertex with no successor.
A $C$-graph is a $C$-pregraph with no sink.
It is often convenient to write $v \re c v' \tin G$, or even simply $v \re c v'$ if $G$ is clear from context, instead of $v \re c v' \in E(G)$.
It is also often the case that $C$ is fixed and clear from context, so we generally just say ``graph'' instead of ``$C$-graph''.
The size (or cardinality) of a graph is defined to be $|V(G)|$.

A path $\pi$ in $G$ is a finite or infinite sequence of edges whose endpoints match, formally 
\[
\pi = (v_0 \re {c_0} v_1)(v_1 \re {c_1} v_2) \dots,
\]
which for convenience we denote by
\[
\pi=v_0 \re{c_0} v_1 \re{c_1} v_2 \dots.
\]
 We say that $\pi$ starts in $v_0$ or that it is a path from $v_0$.
By convention, the empty path $\eps$ starts in all vertices.
The coloration of $\pi$ is the (finite or infinite) sequence $\col(\pi)=c_0c_1 \dots$ of colors appearing on edges of $\pi$.

A non-empty finite path of length $i>0$ is of the form $\pi=v_0 \re{c_0} \dots \re{c_{i-1}} v_i$ and we say in this case that $\pi$ is a path from $v_0$ to $v_i$, and that $v_i$ is the last vertex of $\pi$.
We write
\[
\pi : v \rp w v' \tin G
\]
to say that $\pi$ is a finite path from $v$ to $v'$ with coloration $w \in C^*$ in the graph $G$.
We also write
\[
\pi : v \rp w \tin G
\]
to say that $\pi$ is an infinite path from $v$ with coloration $w \in C^\omega$ in $G$.


Given two graphs $G$ and $G'$, a morphism $\phi$ from $G$ to $G'$ is a map $\phi:V(G) \to V(G')$ such that for all $v \re c v' \in E(G)$, it holds that $\phi(v) \re c \phi(v') \in E(G')$.
Note that $\phi$ need not be injective.
We write $G \re \phi G'$ when $\phi$ is a morphism from $G$ to $G'$, and $G \re{} G'$ when there exists a morphism from $G$ to $G'$.
A subgraph of $G$ is a graph $G'$ such that $V(G') \subseteq V(G)$ and $E'(G) \subseteq E(G)$ (equivalently, the inclusion $V'(G') \to V(G)$ defines a morphism); note that it is assumed for $G'$ to be a graph, it is therefore without sinks.
Given a graph $G$ and a vertex $v_0$ in $G$, we let $G[v_0]$ denote the restriction of $G$ to vertices reachable from $v_0$ in $G$ (note that it is indeed sinkless).

\paragraph*{Games.}

We fix a set of colors $C$.
A $C$-valuation is a map $\val : C^\omega \to X$ from infinite colorations to a set of values $X$ equipped with a complete linear order $\leq$ (a linear order which is also a complete lattice, that is, admits arbitrary suprema and infima).
Given a $C$-graph $G$ and a vertex $v \in V(G)$, the $\val$-value of $v$ is the supremum value of a coloration from $v$:
\[
    \val_G(v) = \sup_{v \rp w \tin G} \val(w).
\]

A $C$-game is a tuple $\game = (G,\VE,\val)$, where
$G$ is a $C$-graph, $\VE \subseteq V(G)$ is the set of vertices controlled by the protagonist, and $\val$ is a $C$-valuation.
To help intuition, we call the protagonist Eve, and the antagonist Adam; Eve seeks to minimize the valuation whereas Adam seeks to maximize it.
We let $\VA$ denote the complement of $\VE$ in $V(G)$.
We now fix a game $\game=(G,\VE,\val)$.

It is convenient in this work to formalize strategies by using graphs and morphisms; we give a formal definition which is explained below.
A strategy from $v_0$ in $\game$ is a tuple $\cS = (S,\pi,s_0)$ consisting of a graph $S$ together with a morphism $\pi : S \to G$ and an initial vertex $s_0 \in V(S)$ such that $\pi(s_0) = v_0$, satisfying that
\begin{center}    
for all edges $v \re c v'$ in $G$ with $v \in \VA$, and for all $s \in \pi^{-1}(v)$,\\ there is an edge $s \re c s'$ in $S$ with $s' \in \pi^{-1}(v')$.
\end{center}
This is illustrated in Figure~\ref{fig:strategy}.

\begin{figure}[h]
\begin{center}
\includegraphics[width=0.5 \linewidth]{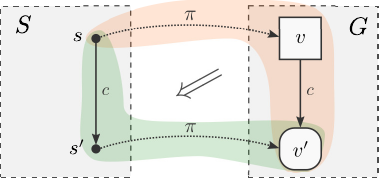}
\end{center}
\caption{Diagram illustrating the definition of a strategy. We use squares to represent vertices controlled by Adam.}\label{fig:strategy}
\end{figure}

Intuitively, a strategy $\S=(S,\pi,s_0)$ from $v_0 \in G$ is used by Eve to play in the game $\game$ as follows:
\begin{itemize}
\item whenever the game is in a position $v \in V(G)$, the strategy is in a position $s \in \pi^{-1}(v)$;
\item at each step, positions in both the game and the strategy are updated along an edge of the same color;
\item initially, the position in the game is $v_0$, and the position in the strategy is $s_0 \in \pi^{-1}(v_0)$;
\item if the position $v$ in the game belongs to $\VA$, and Adam chooses the edge $v \re c v'$ in $G$, then Eve follows an edge $s \re c s'$ in $S$ with $\pi^{-1}(s')$, which exists by definition of $\S$;
\item if the position $v$ in the game belongs to $\VE$, then Eve chooses an outgoing edge $s \re c s'$ in~$S$, which exists since $S$ is sinkless, and progresses in $G$ along the edge $\pi(s) \re c \pi(s') =\linebreak[3] v \re c v' \in E(G)$.
\end{itemize}
Note that infinite paths from $v_0$ in $G$ which are visited when playing as above are exactly infinite paths from $s_0$ in $S$.

We let $\Sigma^\game_{v_0}$ denote the set of strategies from $v_0$ in $\game$.
The value of a strategy $\cS = (S,\pi,v_0)$ is defined to be the value of $s_0$ in $S$, that is:
\[
\val(\cS)=\val_S(s_0)= \sup_{s_0 \rp w \tin S} \val(w).
\]
The value of a vertex $v_0 \in V(G)$ in the game $\game$ is the infimum value of a strategy from $v_0$:
\[
\val_\game(v_0) = \inf_{\cS \in \Sigma_{v_0}^\game} \val(\cS).
\]
A strategy $\cS$ from $v_0$ in $\game$ is called optimal if $\val(\cS)=\val_\game(v_0)$.
Note that there need not exist optimal strategies, as it may be that the value is reached only in the limit.
Note also that we always take the point of view of Eve, the minimizer.
In particular, we will make no assumption on the determinacy of the valuation; in this work, strategies for Adam are irrelevant.

A positional strategy is a strategy which makes choices depending only on the current vertex, regardless of how it was reached.
Formally, a strategy $\cP = (P,\pi,p_0)$ is positional if $V(P) \subseteq V(G)$ and $\pi$ is the inclusion morphism (in particular, $p_0=v_0$); stated differently $P$ is a subgraph of $G$.
Since $\pi$ and $p_0$ carry no information, we will simply say with a slight abuse that the subgraph $P$ of $G$ is a positional strategy.

A positional strategy $P$ with $V(P) = V(G)$ is optimal if for all vertices $v_0 \in V(G)$, it holds that $\val_\game(v_0)=\val_P(v_0)$.
A valuation $\val$ is said to be positional if all games with valuation $\val$ admit an optimal positional strategy.

Two remarks are in order.
First, note that we require positionality over arbitrary (possibly infinite) game graphs.
Second, the concept we discuss is that of uniform positionality, meaning that the positional strategy should achieve an optimal value from any starting vertex.

\paragraph*{Neutral colors.}
Consider a $C$-valuation $\val$.
A color $\eps \in C$ is said to be neutral if
\begin{itemize}
\item for all $w=w_0 w_1 \dots \in C^\omega$ and all $w' = \eps^{n_0} w_0 \eps^{n_1} w_1 \eps^{n_2} \dots$, where $n_0,n_1, \dots \in \N$, it holds that
\[
\val(w) =  \val(w'); \text{ and}
\]
\item for all $u \in C^*$, it holds that
\[
    \val(u \eps^\omega) = \min_{v \in C^\omega} \val(uv).
\]
\end{itemize}
Let $C$ be a set of colors and $\eps \notin C$; we use $C^\eps$ to denote $C \cup \{\eps\}$.
For any $C$-valuation $\val$, there is a unique extension $\val^\eps$ of $\val$ to a $C^\eps$-valuation for which $\eps$ is a neutral color.


\section{Main characterization result}\label{sec:main_results}

In this section, we introduce required definitions and state our main result (Section~\ref{subsec:definitions}), then we prove both directions (Sections~\ref{subsec:structure_implies_positionality} and~\ref{subsec:positionality_implies_structure}).
We also explain how our definitions can be simplified in the important class of prefix-independent objectives (Section~\ref{subsec:prefix_increasing_objectives}).

\subsection{Universality, monotonicity, and statement of the characterization}\label{subsec:definitions}

We now introduce the two main concepts for our characterization of positionality, namely universality and monotonicity.
We fix a set of colors $C$.

\paragraph*{Universal graphs.}

Fix a valuation $\val:C^\omega \to X$.
Recall that values of vertices in a graph~$G$ are given by
\[
\val_G(v)=\sup_{v \rp w \tin G} \val(w).
\]
Given two graphs $G$ and $G'$ with a morphism $\phi:G \to G'$, since there are more colorations from $\phi(v)$ in $G'$ than from $v \in G$ we have in general
\[
\val_G(v) \leq \val_{G'}(\phi(v)).
\]
We say that $\phi$ is $\val$-preserving if the converse inequality holds: for all $v \in V$, $\val_{G}(v)=\val_{G'}(\phi(v))$.

Given a cardinal $\kappa$, we say that a graph $G$ is $(\kappa,\val)$-universal if every graph $H$ of cardinality $< \kappa$ has a $\val$-preserving morphism towards $G$.
We say that a graph is uniformly $\val$-universal if it is $(\kappa,\val)$-universal for all cardinals $\kappa$.
Intuitively, a universal graph should be rich enough to embed all small graphs, but it should do so without introducing paths of large valuations.

For an instructive non-example, consider the graph $G$ with a single vertex and $c$-loops around it for all colors $c$.
It embeds all graphs, however all colorations can be realised as paths in~$G$, so the embedings are not $\val$-preserving (unless $\val$ is constant).
Many examples of universal graphs for various valuations will be discussed in Section~\ref{sec:examples}.

\paragraph*{Monotone graphs.}

A $C$-graph $G$ is monotone if $V(G)$ is equipped with a linear order\footnote{For convenience, we write linear orders as $\geq$ by default (instead of the generally preferred $\leq$). This choice visually aligns better with the composition with edges.}~$\geq$ which is well-behaved with respect to the edge relations, in the sense that for any $u,u',v,v' \in V(G)$ and $c \in C$,
\[
u \geq v \re c v' \geq u' \tin G \quad \implies \quad u \re c u' \tin G.
\]
We call this property monotone composition (with respect to $\geq$) in $G$.
An example is given in Figure~\ref{fig:monotone_graph}.

\begin{figure}[ht]
    \begin{center}
    \includegraphics[width=\linewidth]{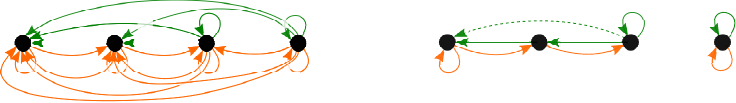}
    \end{center}
    \caption{On the left, a monotone graph, with the leftmost vertex corresponding to the smallest (we will always display ordered graphs using this convention). On the right, only the edges corresponding to min-predecessors are depicted. All other edges (such as the dashed one) can be recovered by composition.}\label{fig:monotone_graph}
\end{figure}

A well-monotone graph $G$ is a monotone graph which is well-ordered (for an order satisfying the monotone composition).

\paragraph*{Statement of the main result.}

We are now ready to state our main characterization result.

\begin{theorem}\label{thm:main}
Let $\val$ be a valuation.
If for all cardinals $\kappa$, there exists a $(\kappa,\val)$-universal well-monotone graph, then $\val$ is positional.
The converse also holds assuming $\val$ has a neutral color.
\end{theorem}

We do not know whether the characterization is complete, meaning, if the assumption on the neutral color can be lifted.
Actually, it can be seen that this is equivalent to asking whether there exists a positional\footnote{Here, our working assumption that positionality refers to uniform positionality is important: if one chooses $\val$ to be positional but not uniformly, then $\val^\eps$ cannot be positional.} valuation $\val$ such that $\val^\eps$ is not positional.
We conjecture that for all positional valuations $\val$, it does hold that $\val^\eps$ is also positional (and therefore our characterization is complete).

Sections~\ref{subsec:structure_implies_positionality} and~\ref{subsec:positionality_implies_structure} prove both directions in Theorem~\ref{thm:main}.

\subsection{Structure implies Positionality}\label{subsec:structure_implies_positionality}

We prove the first implication in Theorem~\ref{thm:main}, stated as follows.

\begin{theorem}\label{thm:structure_implies_positionality}
Let $\val$ be a $C$-valuation such that for all cardinals $\kappa$ there exists a $(\kappa,\val)$-universal well-monotone graph.
Then $\val$ is positional.
\end{theorem}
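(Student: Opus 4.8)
The plan is to fix an arbitrary game $\cG=(G,\VE,\val)$ with $G=(V,E)$ and to construct an optimal positional strategy by folding \emph{all} of Eve's strategies through a sufficiently large monotonic universal graph. Since the unfoldings of a positional strategy $P$ are genuine strategies, one always has $\val_\cG(v)\leq\val(P,v)$; it therefore suffices to build a positional $P$ with $\val(P,v)\leq\val_\cG(v)$ for every $v\in V$. I would start from the disjoint union $\mathcal S=\bigsqcup_{v_0\in V}\bigsqcup_{S\in\Sigma_{v_0}^\cG}S$ of all strategies from all vertices, a graph whose vertices are copies of finite paths of $G$ and whose cardinality is bounded by some $\lambda$. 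Setting $\kappa=\lambda^+$ and applying the hypothesis yields a completely well-monotonic $(\kappa,\val)$-universal graph $\L=(L,M)$, hence a $\val$-preserving morphism $\psi\colon\mathcal S\to\L$. I also use the projection $p\colon\mathcal S\to G$ sending each path to its last vertex, which is a morphism since consecutive last vertices form an edge of $G$.

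Next I would pull the $\L$-labels back onto $V$. As the order on $L$ is a well-order, for each $v\in V$ the set $\{\psi(\pi)\mid \pi\in\mathcal S,\ p(\pi)=v\}$ --- nonempty since the root of any strategy from $v$ projects to $v$ --- has a minimum $\ell(v)$. I then define a positional strategy $P=(V,F)$ by putting into $F$ every Adam edge of $G$ and, for each $v\in\VE$, choosing a minimizing path $\pi_v$ with $p(\pi_v)=v$ and $\psi(\pi_v)=\ell(v)$; since $\mathcal S$ has no sinks, $\pi_v$ has an outgoing edge $\pi_v\re c\pi_v'$ inside its strategy, and I add the edge $v\re c p(\pi_v')$ to $F$.

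The crux is that $\ell$ is then a morphism from $P$ to $\L$. For a selected Eve edge $v\re c p(\pi_v')$ this follows from $\psi(\pi_v)\re c\psi(\pi_v')$ (as $\psi$ is a morphism), the equality $\psi(\pi_v)=\ell(v)$, and $\psi(\pi_v')\geq\ell(v')$ (minimality of $\ell(v')$): right-composition gives $\ell(v)\re c\ell(v')$. For an Adam edge $v\re c v'$ the same computation applies, because a strategy contains \emph{all} outgoing edges at Adam vertices, so $\pi_v$ again has the required successor in $\mathcal S$. Being a morphism, $\ell$ yields $\val(P,v)=\val_P(v)\leq\val_\L(\ell(v))$. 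Finally, left-composition makes $\val_\L$ nondecreasing (a smaller vertex has at most the colorations of a larger one), and $\ell(v)\leq\psi(r)$ for the root $r$ of every $S\in\Sigma_v^\cG$; hence $\val_\L(\ell(v))\leq\val_\L(\psi(r))=\val_{\mathcal S}(r)=\val(S)$, and taking the infimum over $S$ gives $\val_\L(\ell(v))\leq\val_\cG(v)$, as desired.

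The point I expect to be most delicate is this last inequality, where the possibly unattained infimum defining $\val_\cG$ must be handled: instead of folding a single near-optimal strategy, I fold all strategies simultaneously, so that $\ell(v)$ sits below every root-label at once and monotonicity of $\val_\L$ lets the bound descend all the way to the infimum. The other care-demanding ingredient is the morphism check, which genuinely relies on both laws of monotonic graphs --- right-composition to perform the folding and left-composition for monotonicity of $\val_\L$ --- together with the well-order guaranteeing that the minima defining $\ell$ exist.
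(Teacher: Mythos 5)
Your proof is correct, and the core mechanism is the one the paper uses: map strategies into the completely well-monotonic universal graph, exploit the well-order to take minima of labels over all paths ending at a given vertex, use right-composition to verify that the resulting vertex labelling is a morphism on a selected positional strategy, and use left-composition (monotonicity of $\val_\L$ with respect to the order) together with $\val$-preservation to bound the value. The decomposition, however, is genuinely different. The paper factors the argument through progress measures: it introduces the min-predecessor tables $(\rho_c)_{c\in C}$ and a monotone operator $\upd$ on $L^V$, shows that each individual strategy induces a prefixpoint, and invokes Knaster--Tarski to obtain a \emph{least} progress measure; this least element is what handles the possibly unattained infimum defining $\val_\cG(v_0)$, since it lies below every strategy-induced prefixpoint. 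You bypass the fixpoint machinery by applying universality once to the disjoint union of all strategies from all vertices and taking pointwise minima of the resulting morphism; your $\ell$ is precisely the pointwise infimum of the paper's strategy-induced progress measures (and is in fact itself a prefixpoint of $\upd$, the infimum of a family of prefixpoints of a monotone operator being again a prefixpoint), so the two constructions produce closely related objects. Your route buys brevity and self-containedness --- a single universality application, no operator, no fixpoint theorem --- at the harmless cost of needing $\kappa$ large enough to embed the disjoint union of all strategies rather than a single unfolding. The paper's route isolates a reusable intermediate notion, the least progress measure computable by transfinite Kleene iteration, which is exactly the object constructed explicitly in the examples of Section~\ref{sec:examples} and which connects the argument to the signature/progress-measure literature.
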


Our proof is inspired by those of Emerson and Jutla~\cite{EJ91} (see also the presentation by Walukiewicz~\cite{Walukiewicz96}) and Klarlund~\cite{Klarlund92}, respectively for parity games and Rabin games.
In the conference version of this paper~\cite{Ohlmann22}, we gave a proof using progress measures; here however we refrain from introducing them and give a more direct presentation of essentially the same argument.

Let us first give a high-level overview of the proof.
Our aim is to ``fold'' a given strategy $\S$ into a positional one achieving a better (that is, a smaller) value; the challenge is that different occurrences of a vertex $v \in \VE$ in the strategy $\S$ have different outgoing edges.
We overcome this challenge by picking a well-monotone graph $U$ into which the strategy has a $\val$-preserving morphism $\phi$.
The crucial observation is that by well-foundedness of $U$, (at least) one of the occurrences of $v$ in $\S$ is mapped by $\phi$ to a minimal position.
We then define the positional strategy~$P$ by mimicking outgoing edges from this special occurrence of $v$ in the strategy.
Monotonicity of $U$ will then guarantee that the values in $P$ are small, as required.
In fact, since we aim for a single optimal positional strategy $P$, we will apply this argument to $\S$ being a disjoint union of many strategies, whose values converge to the optimal values.

We will make use of the fact that in a monotone graph $U$, the value (in fact, the set of colorations) increases with the order: for all $u,u' \in V(U)$,
\[
    u \geq u' \quad \implies \quad \val_U(u) \geq \val_U(u').
\]

Indeed, by monotone composition, for any path $u' \re{c_0} u_1 \re{c_1} \dots$ from $u'$ in $U$ it holds that $u \re{c_0} u_1 \re{c_1} \dots$ is a path from $u$ in $U$ with the same coloration, implying the result.

We now fix a game $\game=(G,\VE, \val)$ with valuation $\val$.
For each $v \in V(G)$, we fix a sequence $\S_v^{(k)} = (S_v^{(k)}, \pi_v^{(k)}, s_{0,v}^k)$ of strategies from $v$ in $\game$ such that $\val(\S_v^{(k)}) \re{k \to \infty} \val_\game(v)$.
We then let the graph $S$ be the disjoint union of all the $S_v^{(k)}$'s, where $k$ ranges over $\N$ and $v$ over $V(G)$.
We let $\pi : V(S) \to V(G)$ be the unique extension of all the $\pi_v^{(k)}$; it defines a morphism $S \to G$.

We now pick a cardinal $\kappa > |S|$, let $U$ be a $(\kappa,\val)$-universal well-monotone graph, and fix a $\val$-preserving morphism $\phi: S \to U$.
Consider the map $\phi': V(G) \to V(U)$ defined by
\[
    \phi'(v) = \min \phi(\pi^{-1}(v)),
\]
which exists since $U$ is well-ordered; see Figure~\ref{fig:structure_to_positionality} for an illustration.
We define a subgraph $P$ of~$G$ with $V(P) = V(G)$ by
\[
    v \re c v' \tin P \quad \iff \quad \exists s \in \pi^{-1}(v), \phi(s) = \phi'(v), s \re c s' \in \pi^{-1}(v') \tin S.
\]
Intuitively, to define edges in $P$ outgoing from $v \in V(G)$, we look at strategy vertices $s$ in $\pi^{-1}(v)$ which are evaluated in $U$ to be optimal (meaning that $\phi(s)$ is minimal).

\begin{figure}[h]
\begin{center}
\includegraphics[width= \linewidth]{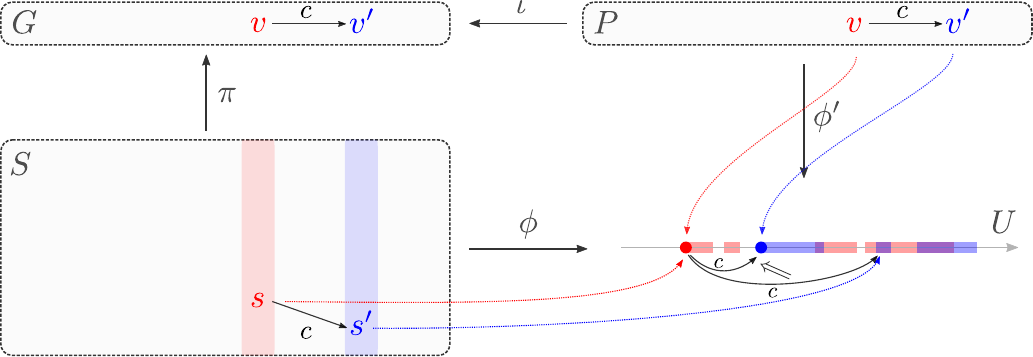}
\end{center}
\caption{An illustration supporting the reader in the proof of Theorem~\ref{thm:structure_implies_positionality}.}\label{fig:structure_to_positionality}
\end{figure}

We make two claims: first, $P$ is a strategy in $G$ (hence it is a positional strategy), second, $\phi'$ defines a morphism from $P$ to $U$.
Before proving these two claims, we argue that this implies the wanted result; for this we should prove that $P$ is optimal, meaning that for all $v \in V(G)$ we have $\val_P(v) = \val_\game(v)$.

Let $v \in V(G)$.
Since $\phi'$ is a morphism,
\[\label{eq:1}\tag{1}
    \val_P(v) = \inf_{v \rp w \tin P} \val(w) \leq \inf_{\phi'(v) \rp w \tin U} \val(w) = \val_U(\phi'(v)).
\]
Now for all $k \in \N$, we have in $U$ that
\[
    \phi'(v) = \min_{s \in \pi^{-1}(v)} \phi(s) \leq \phi(s_{0,v}^{(k)}),
\]
and thus, by the remark above ($\val$ increases with the order over $U$), and the fact that $\phi$ is $\val$-preserving, we get
\[\label{eq:2}\tag{2}
    \val_U(\phi'(v)) \leq \val_U(\phi(s_{0,v}^{(k)})) = \val_{S_v^{(k)}}(s_{0,v}^{(k)}) = \val(\S_v^{(k)})
\]
Combining~(\ref{eq:1}) and~(\ref{eq:2}) we obtain
\[
    \val_P(v) \leq \inf_k \val(\S_v^{(k)}) \leq \val_\game(v),
\]
so $P$ is indeed optimal.

It remains to prove our two claims, we start by verifying that $P$ is a strategy.
Let $v \in \VA$, and $v \re c v' \tin G$, we must prove that $v \re c v' \tin P$.
Let $s \in \pi^{-1}(v)$ be such that $\phi(s) = \phi'(v)$.
Let $k \in \N$ and $v'' \in V(G)$ be such that $s \in S_{v''}^{(k)}$.
Since $S_{v''}^{(k)}$ is a strategy and $\pi_{v''}^{(k)}(s) = \pi(s) = v \in \VA$ there is $s' \in S_{v''}^{(k)}$ such that $\pi(s')=v'$ and $s \re c s' \tin S_{v''}^{(k)}$ and thus $s \re c s' \tin S$.
By definition of $P$, this proves that $v \re c v' \tin P$, as required.

We now prove that $\phi':P \to U$ is a morphism.
Let $v \re c v' \tin P$.
Let $s \in \pi^{-1}(v)$ be such that $\phi(s) = \phi'(v)$, and $s' \in \pi^{-1}(v')$ such that $s \re c s' \tin S$.
Since $\phi: S \to U$ is a morphism, we have $\phi(s) \re c \phi(s') \tin U$.
Hence
\[
    \phi'(v) = \phi(s) \re c \phi(s') \geq \phi'(v') \tin U, 
\]
and thus by monotone composition, $\phi'(v) \re c \phi'(v') \tin U$, the wanted result.

\subsection{Positionality implies Structure}\label{subsec:positionality_implies_structure}

We now establish our main technical novelty which is stated as follows.

\begin{theorem}\label{thm:structuration}
Let $\val$ be a positional $C$-valuation admitting a neutral color, and let $G$ be a $C$-graph.
There exists a well-monotone $C$-graph $G'$ with a $\val$-preserving morphism $G \to G'$.
\end{theorem}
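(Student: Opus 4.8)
The plan is to build $G'$ by well-ordering the ``value behaviors'' of the vertices of $G$, converting this well-order into a monotonic edge relation, and finally applying the completion Lemma~\ref{lem:completion}. Concretely I need to produce a well-order $(L,\leq)$, a monotonic min-predecessor family $(\rho_c)_{c \in C}$ on $L$, and a map $\phi : V \to L$ that is simultaneously a graph morphism and value-preserving. Write $\neutlet$ for the strongly neutral color; it enters the argument twice. Since inserting $\neutlet$ never changes a value, every vertex may be given a $\neutlet$-self-loop without affecting any $\val_G(v)$, which makes the eventual graph complete for the color $\neutlet$; and since appending or inserting neutral edges never changes a play's value, neutral edges serve as cost-free glue for concatenating game graphs, which is the key device in the well-foundedness step below.

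The heart of the construction is a preorder $\cleq$ on $V$ meant to capture ``$u$ is at least as good for Eve as $v$'', refining the value order ($u \cleq v \Rightarrow \val_G(u) \leq \val_G(v)$) and stable under prepending colors, so that it is compatible with edges in the monotonic sense. I would define $\cleq$ from optimal strategies in comparison games derived from $G$, in which Adam drives a token from one vertex while Eve must reproduce from the other a play of $G$ of no smaller value, uniformly over all finite pasts; the ``uniformly over pasts'' clause is exactly what forces stability under prepending, hence edge-compatibility. Quotienting $V$ by $u \cleq v \cleq u$ and ordering the classes by $\cleq$ gives the vertex set $L$ of $G'$, with $\phi$ the quotient map; the edge relation is then saturated by left- and right-composition so that the two monotonicity axioms hold by construction, and $\rho_c$ is read off as the least class from which $c$ can be emitted toward a given class.

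Three properties then remain. Totality of $\cleq$, needed for $L$ to be linearly ordered, is where I would invoke positionality directly: packaging the two comparison games between $u$ and $v$ into a single game and extracting a uniform optimal positional strategy shows that one of the two comparisons always succeeds. Value-preservation of $\phi$ follows because the only edges added to $G'$ beyond those dictated by $\phi$ are composition-edges, and the comparison games are designed precisely so that these never raise the supremum of realizable values at a vertex, giving $\val_{G'}(\phi(v)) = \val_G(v)$. The main obstacle is well-foundedness of $\cleq$: I would argue by contradiction, splicing an infinite strictly $\cleq$-descending chain into a single graph --- neutral edges providing the glue between successive levels --- on which Eve is forced to chase the descending chain forever and hence has no optimal positional strategy, contradicting positionality. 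This is exactly the step that fails for valuations such as mean-payoff over infinite arenas, whose vertex-values can form a non-well-ordered set, and it explains why positionality over \emph{all} graphs is the right hypothesis: it is what forces the value behaviors to be well-ordered. With well-foundedness in hand, $(L, (\rho_c)_{c \in C}, \phi)$ is well-monotonic, and Lemma~\ref{lem:completion} upgrades it to a completely well-monotonic graph while only deleting colorations, thereby preserving both the values and the morphism.
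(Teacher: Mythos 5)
Your high-level shape (well-order the vertices by a ``goodness'' relation extracted from positionality, saturate by composition, quotient, complete) matches the paper's, and you correctly locate the two roles of the neutral color and identify well-foundedness as the crux. But there is a genuine gap at the central step: the comparison games you propose are not $\val$-games. A game in which ``Adam drives a token from one vertex while Eve must reproduce from the other a play of no smaller value'' has a winning condition that compares the values of \emph{two} plays; it is not of the form $\val$ applied to the coloration of the play actually produced, so the hypothesis that $\val$ is positional tells you nothing about it. The same objection applies to ``packaging the two comparison games into a single game'' for totality. The missing idea is precisely how to phrase the comparison as a legitimate $\val$-game, and this is where the neutral color does its real work in the paper: one builds a single auxiliary game $\cG'$ over $V \cup \nepowerset{V}$ in which Adam plays in $G$ but may at any moment jump along an $\varepsilon$-edge from $v$ to any Eve-owned set $A \ni v$, and Eve answers with an $\varepsilon$-edge back to some element of $A$. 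Because $\varepsilon$ is neutral these detours do not change the valuation (Lemma~\ref{lem:G_is_easy}, which also needs ``eventually good for Eve'' to handle plays that oscillate forever), so $\cG'$ is a genuine $\val$-game and positionality applies to it \emph{once}. The resulting positional choice $A \mapsto v_A \in A$ induces $\varepsilon$-edges $v \re{\varepsilon} v_A$ for all $v \in A$, and the statement ``every nonempty subset has a minimum'' is then read off directly; after $\varepsilon$-closure and quotienting this yields totality, transitivity and well-foundedness simultaneously, with no separate descending-chain argument.

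Relatedly, your well-foundedness step is asserted rather than proved. An infinite strictly $\cleq$-descending chain does not by itself produce a game without optimal positional strategies: to force Eve to ``chase the chain'' you need a gadget in which Adam can repeatedly return the token to an Eve-controlled choice among the chain's elements, with the connecting edges costing nothing --- which is exactly the restriction of the game $\cG'$ to the single subset formed by the chain, i.e.\ the construction you have not supplied. (Pairwise comparisons alone cannot suffice in principle: a linear order in which every \emph{pair} has a minimum need not be a well-order, so some uniform statement over arbitrary subsets is unavoidable.) Transitivity of a pairwise-defined $\cleq$, and the claim that the composition-saturated edges never raise values along \emph{infinite} plays, are likewise consequences of having all the added $\varepsilon$-edges come from one consistent optimal positional strategy, not facts you get for free from independent pairwise games. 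If you flesh out your sketch you will be driven to the subset game; as written, the proof is incomplete at exactly the point where positionality has to be invoked.
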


We obtain the converse implication in Theorem~\ref{thm:main} as a consequence.

\begin{corollary}\label{cor:positionality_implies_structure}
Let $\val$ be a positional $C$-valuation admitting a neutral color.
For all cardinals~$\kappa$, there exists a well-monotone $(\kappa,\val)$-universal graph.
\end{corollary}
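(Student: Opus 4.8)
The plan is to deduce the corollary from Theorem~\ref{thm:structuration} by applying it not to an arbitrary graph but to a single, sufficiently rich one: the disjoint union of all graphs of cardinality $< \kappa$. First I would fix, for each isomorphism class of $C$-graphs of cardinality $< \kappa$, one representative whose vertex set is an ordinal below $\kappa$; its edge set is then a subset of $\kappa \times C \times \kappa$, so these representatives form a genuine \emph{set}, which I enumerate as $(H_i)_{i \in I}$ and whose disjoint union I denote $G = \bigsqcup_{i \in I} H_i$. As a disjoint union of graphs without sinks, $G$ is itself a $C$-graph, and $\val$ is still positional with a strongly neutral color, so Theorem~\ref{thm:structuration} applies: it yields a completely well-monotonic graph $G'$ together with a $\val$-preserving morphism $\phi : G \to G'$. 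I claim this $G'$ witnesses $(\kappa,\val)$-universality.

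The key observation is that each component inclusion $\iota_i : H_i \hookrightarrow G$ is $\val$-preserving. Indeed, in a disjoint union every outgoing edge of a vertex $v \in H_i$ stays inside the component $H_i$, so the infinite paths $v \rp w$ in $G$ are exactly those in $H_i$; hence $\val_{H_i}(v) = \val_G(\iota_i(v))$ for every $v$, which is precisely $\val$-preservation. It remains to record that $\val$-preserving morphisms compose: if $\psi : A \to B$ and $\chi : B \to D$ are both $\val$-preserving, then $\val_A(v) = \val_B(\psi(v)) = \val_D(\chi(\psi(v)))$ for all $v$, so $\chi \circ \psi$ is $\val$-preserving as well. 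Consequently, for any $C$-graph $H$ with $|H| < \kappa$, I pick the representative $H_i$ isomorphic to $H$, compose the (trivially $\val$-preserving) isomorphism $H \cong H_i$ with $\iota_i$ and with $\phi$, and obtain a $\val$-preserving morphism $H \to G'$. This shows $G'$ is completely well-monotonic and $(\kappa,\val)$-universal.

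The only genuinely delicate point is set-theoretic rather than combinatorial: one must ensure that the collection of ``small'' graphs can be taken to be a set before forming its disjoint union, which is exactly why I restrict to representatives with ordinal vertex sets rather than quantifying over a proper class. Note also that the resulting $G$ may be very large (up to roughly $2^{\kappa \cdot |C|}$ vertices), but this causes no difficulty since Theorem~\ref{thm:structuration} is stated for an arbitrary $C$-graph with no bound on its cardinality. Everything else is routine: value-preservation of component inclusions is immediate from the fact that paths cannot leave a component, and compositionality of $\val$-preservation is the one-line check above. I do not expect any real obstacle beyond phrasing the set-theoretic reduction cleanly.
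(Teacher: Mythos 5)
Your proof is correct and follows essentially the same route as the paper: form the disjoint union of all $C$-graphs of cardinality $<\kappa$ up to isomorphism, apply Theorem~\ref{thm:structuration} to it, and conclude by composition of $\val$-preserving morphisms. The paper states this more tersely, leaving implicit the set-theoretic representative choice and the $\val$-preservation of component inclusions that you spell out.
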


\begin{proof}[Proof of Corollary~\ref{cor:positionality_implies_structure}]
Let $\kappa$ be a cardinal, and let $G$ be the disjoint union of all $C$-graphs of cardinality $< \kappa$, up to isomorphism.
Note that $G$ it is $(\kappa,\val)$-universal.
Theorem~\ref{thm:structuration} gives a well-monotone graph $G'$ which has a $\val$-preserving morphism $G \to G'$; now $G'$ is $(\kappa,\val)$-universal by composition of $\val$-preserving morphisms.
\end{proof}

\paragraph{Proof overview.}

We now fix a positional $C$-valuation $\val$, with a neutral color~$\eps$, and a graph~$G$.
Our proof consists of the two following steps:
\begin{itemize}
\item[$(i)$] add many $\eps$-edges to $G$ while preserving $\val$; then
\item[$(ii)$] add even more edges by closing around $\eps$-edges (this is made formal below), and quotient by $\re \eps$-equivalence.
\end{itemize}
For the second step to produce a well-monotone graph, we need to guarantee that there are sufficiently many $\eps$-edges which were added in the first step.
We start by the second step; in particular, we formalize what ``sufficiently many'' means.
The first step is more involved and exploits positionality of $\val$.

\paragraph{Second step.}

We say that a graph $G$ has sufficiently many $\eps$-edges if $\re \eps$ is well-founded, that is,
\[
\forall A \in \nepowerset {V(G)}, \exists v \in A, \forall v' \in A, v' \re \eps v \tin G.
\]
The statement below reduces our goal to that of adding many $\eps$-edges to $G$.

\begin{lemma}\label{lem:second_step}
If $G$ has sufficiently many $\eps$-edges then there exists a well-monotone graph $G'$ with a $\val$-preserving morphism $G \to G'$.
\end{lemma}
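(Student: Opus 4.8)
The plan is to carry out the two operations announced in step $(ii)$ — first \emph{saturating} $G$ by composition around $\varepsilon$-edges, then \emph{quotienting} by $\re \varepsilon$-equivalence — and finally to invoke Lemma~\ref{lem:completion} to upgrade well-monotonicity to complete well-monotonicity. Throughout, the only property of $\varepsilon$ I use is neutrality (invariance of $\val$ under inserting $\varepsilon$'s); eventual goodness for Eve plays no role in this step. I first unpack the hypothesis: taking singletons $A = \{v\}$ in the well-foundedness condition forces an $\varepsilon$-self-loop $v \re \varepsilon v$ at every vertex, taking pairs $A=\{v,v'\}$ shows that $\re \varepsilon$ is total, and the condition itself states that $\re \varepsilon$ is well-founded.

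First I would form the $\varepsilon$-saturation $\hat G$ of $G$: on the same vertex set, I put $v \re c v' \tin \hat G$ whenever there is a path $v \rp{\varepsilon^*} u \re c u' \rp{\varepsilon^*} v' \tin G$ (so for $c=\varepsilon$ this is just the reflexive-transitive closure of $\re \varepsilon$). By construction $\hat G$ is closed under pre- and post-composition with its $\varepsilon$-edges. The identity on vertices is a morphism $G \to \hat G$, so $\val_G(v) \leq \val_{\hat G}(v)$ automatically, and I claim it is $\val$-preserving. Given an infinite path of $\hat G$ from $v$, I splice each edge back into the finite $G$-path witnessing it, padding any degenerate length-zero segment with a self-loop; this yields an infinite $G$-path from $v$ whose coloration is that of the $\hat G$-path with finitely many $\varepsilon$'s inserted between consecutive letters. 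Neutrality makes the two colorations $\val$-equivalent, whence $\val_{\hat G}(v) \leq \val_G(v)$.

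Next I would quotient. Write $v \succeq v'$ for $v \re \varepsilon v' \tin \hat G$; saturation makes $\succeq$ transitive, the self-loops make it reflexive, and totality and well-foundedness are inherited from $G$ (adding $\varepsilon$-edges preserves ``reached by everyone''), so $\succeq$ is a well-founded total preorder. Let $\sim$ be mutual $\re \varepsilon$-reachability, set $G' = \hat G / {\sim}$ with the order induced by $\succeq$ — now a genuine well-order on classes — and declare $[v] \re c [v']$ iff $v \re c v' \tin \hat G$. Saturation makes this independent of representatives and yields the two monotonicity axioms directly: for instance $[\ell] \geq [\ell'] \re c [\ell'']$ unfolds to $\ell \re \varepsilon \ell' \re c \ell''$ in $\hat G$, which saturation collapses to $\ell \re c \ell''$. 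Thus $G'$ is well-monotonic. The quotient map $v \mapsto [v]$ is $\val$-preserving: it is a morphism, and conversely any infinite path of $G'$ from $[v]$ lifts coloration-for-coloration to a path of $\hat G$ from $v$, since each transition $[\ell] \re c [\ell']$ is witnessed in $\hat G$ by an edge out of the current representative landing in the class $[\ell']$ (an $\re \varepsilon$-step into a suitable representative, absorbed by saturation into a single $c$-edge); hence $\val_{G'}([v]) \leq \val_{\hat G}(v)$.

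Composing the two $\val$-preserving morphisms gives a $\val$-preserving $G \to G'$ with $G'$ well-monotonic; applying Lemma~\ref{lem:completion} to $G'$ and precomposing produces the desired completely well-monotonic target. The main obstacle is the $\val$-preservation of the saturation, not any single hard idea: one must splice an arbitrary infinite path of $\hat G$ back into an infinite path of $G$ whose coloration differs from it only by inserted $\varepsilon$'s — this is exactly where neutrality is used, and where the guaranteed $\varepsilon$-self-loops are needed to keep the spliced path infinite. The quotient step, lifting paths coloration-for-coloration, is by comparison routine.
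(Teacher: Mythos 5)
Your proposal is correct and follows essentially the same route as the paper's proof: form the $\varepsilon$-closure (your $\hat G$, the paper's $G_1$), argue $\val$-preservation by splicing paths back into $G$ and invoking neutrality, quotient by mutual $\re \varepsilon$-reachability to turn the well-founded total preorder into a well-order satisfying left- and right-composition, and finish with the completion of Lemma~\ref{lem:completion}. The only differences are presentational — you make explicit the self-loops and totality extracted from the hypothesis and spell out the path-splicing in more detail than the paper does.
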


\begin{proof}
We first define the $\eps$-closure $G_1$ of $G$ by $V(G_1) = V(G)$ and
\[
v \re c v' \tin G_1 \quad \iff \quad \exists u,u' \in V(G), v \rp{\eps^*} u \re c u' \rp{\eps^*} v' \tin G,
\]
where $x \rp{\eps^*} y$ is a shorthand for ``there exists $n \in \N$ such that $x \rp {\eps^n} y$''.
Note that we have $E(G) \subseteq E(G_1)$; we claim that the identity morphism from $G$ to $G_1$ is in fact $\val$-preserving.
Indeed, for all $v_0 \in V(G)$ and for all infinite paths $\pi': v_0 \rp {w'}$ in $G_1$, there is a path $\pi: v_0 \rp w$ in $G$ with $w = \eps^{n_0} w'_0 \eps^{n_1} w'_1 \dots$.
By neutrality of $\eps$ we have $\val(w') = \val(w)$, and thus
\[
\val_{G_1}(v_0) = \sup_{v_0 \rp w' \tin G_1} \val(w') \leq \sup_{v_0 \rp w \tin G} \val(w) = \val_{G}(v_0).
\]

Note that in $G_1$, $\re \eps$ satisfies monotone composition with all colors, and in particular it is transitive (by taking $c=\eps$).
It is moreover well-founded (and thus also total) and reflexive thanks to the assumption of the Lemma.
However, it is not antisymmetric, which is why we now quotient with respect to $\re \eps$-equivalence.

Formally, we define $\sim$ over $V(G)$ by 
\[
v \sim v' \iff (v \re \eps v' \tand v' \re \eps v) \tin G_1,
\]
which is an equivalence relation.
Note that vertices which are $\sim$-equivalent have the same incoming and outgoing edges in $G_1$ (since $G_1$ is $\eps$-closed), therefore the graph $G_2$ with $V(G_2) = V(G) / \sim$ given by
\[
{[v]}_\sim \re c {[v']}_\sim \tin G_2 \iff v \re {c} v' \tin G_1,
\]
is well-defined, and moreover colorations from ${[v]}_\sim$ in $G_2$ are the same as colorations from~$v$ in~$G_1$.
Hence the projection $V(G) \to V(G) / \sim$ defines a $\val$-preserving morphism from~$G_1$ to~$G_2$, which, composed with the identity morphism $G \to G_1$, gives a $\val$-preserving morphism $G \to G_2$.
Now $\re \eps$ is a well-founded order satisfying monotone composition in~$G_2$, therefore $G_2$ is well-monotone.
\end{proof}

Note that the second step has not made use of positionality of $\val$; it is exploited below.

\paragraph{First step.}
We now show that sufficiently many edges can be added to $G$ while preserving $\val$, thanks to its positionality.
We consider the game $\game'=(G',\VE',\val)$ given by $V(G') = V(G) \cup \nepowerset {V(G)}$, $\VE' = \nepowerset {V(G)}$, and
\[
E(G') =  E(G) \cup \{v \re {\eps} A \mid v \in A\} \cup \{A \re {\eps} v \mid v \in A\}.
\]
An example is given in Figure~\ref{fig:multiple_choice}.

\begin{figure}[ht]
\begin{center}
\includegraphics[width=0.7\linewidth]{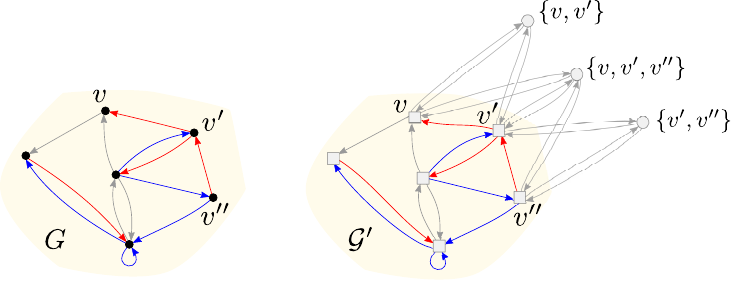}
\end{center}
\caption{On the left, a $\{\tred, \tblue, \tgray\}$-graph $G$ with $\eps=\tgray$.
On the right, the corresponding game $\game'$, where only $3$ of the $2^6 -1$ Eve-vertices (circles) are represented for clarity.}\label{fig:multiple_choice}
\end{figure}

When playing in $\game'$, Adam follows a path in $G$, with the additional possibility, at any point, to switch from a vertex $v$ to an Eve-vertex $A$ containing $v$.
It is then left to Eve to choose a successor from $A$, which can be any vertex in $A$.
A natural choice is then to go back to $v$, which guarantees a small value thanks to neutrality of $\eps$.

\begin{lemma}\label{lem:G_is_easy}
For all $v_0 \in V(G)$, we have $\val_{\game'}(v_0) \leq \val_G(v_0)$.
\end{lemma}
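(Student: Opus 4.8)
The plan is to establish the two inequalities $\val_{\cG'}(v_0) \geq \val_G(v_0)$ and $\val_{\cG'}(v_0) \leq \val_G(v_0)$ separately. The first is immediate and requires nothing about $\varepsilon$; the second is where strong neutrality does all the work. Note that positionality of $\val$ is not invoked in this lemma itself (it enters only afterwards, to convert the resulting value equality into the $\varepsilon$-edges to be added): the present statement rests solely on the two properties packaged in strong neutrality.

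For the inequality $\geq$, I use that every vertex of $V$ belongs to Adam in $\cG'$, so Adam may ignore the newly added $\varepsilon$-edges entirely and follow any infinite path of $G$ from $v_0$, never visiting a set-vertex. Such a play visits only Adam vertices, hence is consistent with \emph{every} Eve strategy $S$ (Eve is never called upon to move), and therefore its coloration $w$ satisfies $\val(w) \leq \val(S)$. Taking the supremum over all infinite $G$-paths from $v_0$ gives $\val_G(v_0) \leq \val(S)$, and the infimum over $S \in \Sigma_{v_0}^{\cG'}$ yields $\val_G(v_0) \leq \val_{\cG'}(v_0)$.

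For the inequality $\leq$, I exhibit one explicit Eve strategy $S^\star$ and bound its value. Since the only edges entering a set-vertex $A$ are of the form $v \re \varepsilon A$ with $v \in A$, whenever a play reaches $A$ it arrives from some member $v$, and $S^\star$ responds with $A \re \varepsilon v$, sending the token straight back to $v$. Consequently every play consistent with $S^\star$ decomposes into a genuine path of $G$ interspersed with excursions $v \to A \to v$, each contributing two $\varepsilon$'s and returning to the same vertex. I then split on whether the play traverses infinitely or only finitely many $G$-edges. If infinitely many, its coloration is the coloration $w$ of an infinite $G$-path from $v_0$ with finitely many $\varepsilon$'s inserted between consecutive letters, so by neutrality of $\varepsilon$ its value equals $\val(w) \leq \val_G(v_0)$.

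The delicate case, and the main obstacle, is when Adam eventually performs excursions forever: the coloration is then an insertion of $\varepsilon$'s into $u\varepsilon^\omega$, where $u$ is the coloration of a finite $G$-path from $v_0$ to some vertex $v$, and by neutrality its value is $\val(u\varepsilon^\omega)$. Plain neutrality no longer suffices, since $u\varepsilon^\omega$ carries only finitely many non-$\varepsilon$ letters; this is precisely where eventual goodness for Eve intervenes, giving $\val(u\varepsilon^\omega) = \inf_{t \in C^\omega} \val(ut)$. Using that $G$ has no sinks, I extend the finite path to an infinite $G$-path from $v$ with some coloration $t^\star$, so that $ut^\star$ is the coloration of an infinite $G$-path from $v_0$, whence $\val(u\varepsilon^\omega) \leq \val(ut^\star) \leq \val_G(v_0)$. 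In every case the play value is at most $\val_G(v_0)$, so $\val(S^\star) \leq \val_G(v_0)$ and thus $\val_{\cG'}(v_0) \leq \val_G(v_0)$. Combining the two inequalities gives the claimed equality.
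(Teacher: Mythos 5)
Your proof is correct and follows essentially the same route as the paper: you define the same Eve strategy (return from $A$ to the vertex $v$ that entered it), split plays into those traversing infinitely many $G$-edges (handled by neutrality) and those ending in perpetual excursions (handled by eventual goodness together with sink-freeness of $G$). The only difference is that you spell out the easy inequality $\val_{\cG'}(v_0) \geq \val_G(v_0)$, which the paper leaves implicit.
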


Note that the converse inequality also holds (it is however irrelevant for us).

\begin{proof}
Let $v_0 \in V(G)$.
We consider the strategy $\S=(S,\pi,s_0)$ from $v_0$ over $\game'$ described above.
Formally, we let $V(S) = V(G) \cup \nepowerset{V(G)} \times V(G)$ and
\[
    E(S) = E(G) \cup \{v \re \eps (A,v) \mid v \in A\} \cup \{(A,v) \re \eps v \mid v \in A\},
\]
for $v \in V(G)$, $\pi(v) = v$ and for $(A,v) \in \nepowerset{V(G)} \times V(G)$, $\pi(A,v)=A$, and $s_0=v_0$.
It is a direct check that $\S$ is indeed a strategy over $\game'$: $\pi$ is indeed a morphism, $S$ is indeed a graph, and for $v \in \VA' = V(G)$, $\pi^{-1}(v)=\{v\} \subseteq V(S)$, and all edges outgoing from $v$ in $G'$ there is a corresponding edge in $S$.

Now observe that infinite paths from $s_0=v_0$ in $S$ either have colorations of the form $w'=\eps^{n_0}w_0\eps^{n_1}w_1 \dots$, where $w=w_0w_1 \dots$ is a coloration of an infinite path from $v_0$ in $G$, or of the form $w'=\eps^{n_0}w_0 \dots w_i\eps^\omega$, where $w=w_0 \dots w_i$ is a coloration of a finite path from $v_0$ in $G$.
Thus thanks to neutrality of $\eps$,
\[
\begin{array}{lcl}
\val_{\game'}(v_0) \leq \val(\S) &=& \sup_{s_0 \rp {w'} \tin S} \val(w')\\
& \leq & \sup_{v_0 \rp w \tin G} \val(w) = \val_G(v_0),
\end{array}
\]
concluding the proof.
\end{proof}

Note that the above strategy $\S$ is somewhat far from being positional: each vertex $A \in \VE'$ requires memory $\pi^{-1}(A) = \{A\} \times V(G)$.
However, thanks to positionality of $\val$, which we now exploit, there exist positional strategies achieving the same value as $\S$.
Observe that a positional strategy $P'$ in $\game'$ corresponds to the choice of (at least) a successor $v' \in A$ for each nonempty subset $A$ of $V(G)$.

\begin{lemma}\label{lem:adding_edges}
Let $P'$ be an optimal positional strategy in $\game'$, and let $G''$ be the graph defined by $V(G'')=V(G)$ and
\[
E(G'') = E(G) \cup \{v \re \eps v' \mid \exists A \in \nepowerset{V(G)}, v,v' \in A \tand A \re \eps v' \tin P'\}.
\]
The identity defines a $\val$-preserving morphism from $G$ to $G''$.
Moreover, $G''$ has sufficiently many $\eps$-edges.
\end{lemma}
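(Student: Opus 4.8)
The plan is to establish the two assertions separately: the $\val$-preservation of the identity, which is the crux and uses optimality of $P'$, and the ``sufficiently many $\varepsilon$-edges'' property, which is a routine consequence of $P'$ being a positional (hence total) strategy. For the morphism part I would first note that $E \subseteq E''$ by construction, so the identity is trivially a morphism $G \to G''$, and consequently $\val_G(v) \leq \val_{G''}(v)$ for all $v$, as recalled in Section~\ref{sec:universality_and_monotonic_graphs}. The real content is the reverse inequality $\val_{G''}(v_0) \leq \val_G(v_0)$.

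My approach to this reverse inequality is to show that every infinite coloration realizable from $v_0$ in $G''$ can be simulated by a play from $v_0$ in $\cG'$ that is consistent with $P'$ and has the same value. Concretely, given a path $\pi = v_0 \re{c_0} v_1 \re{c_1} \dots$ in $G''$, I build a play $\tilde\pi$ in $\cG'$ edge by edge: an original edge $v_i \re{c_i} v_{i+1} \in E$ is kept verbatim (it is an Adam move of $\cG'$, hence lies in $P'$, which retains all Adam edges), while a newly added edge $v_i \re \varepsilon v_{i+1}$ is expanded into the two-step detour $v_i \re \varepsilon A_i \re \varepsilon v_{i+1}$, where $A_i$ is a witnessing set with $v_i, v_{i+1} \in A_i$ and $A_i \re \varepsilon v_{i+1} \tin P'$ (such an $A_i$ exists precisely by the definition of $E''$). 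The first leg $v_i \re \varepsilon A_i$ is a valid Adam move since $v_i \in A_i$, and the second leg is an Eve edge selected by $P'$; hence $\tilde\pi$ is a genuine infinite path in $P'$ starting from $v_0$.

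The coloration of $\tilde\pi$ is obtained from $\col(\pi)$ by inserting one extra $\varepsilon$ for each expanded edge, so by neutrality of $\varepsilon$ the two colorations have the same value. Since $\tilde\pi$ is a path in $P'$, its value is at most $\val(P',v_0)$; using optimality of $P'$ together with Lemma~\ref{lem:G_is_easy} this equals $\val_{\cG'}(v_0) = \val_G(v_0)$. Taking the supremum over all colorations of $G''$-paths from $v_0$ then yields $\val_{G''}(v_0) \leq \val_G(v_0)$, which completes the proof of $\val$-preservation.

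Finally, for the ``sufficiently many $\varepsilon$-edges'' property, which needs neither optimality nor positionality beyond totality, I would fix a nonempty $A \subseteq V$ and use that $P'$, as a positional strategy, selects at least one successor at the Eve-vertex $A$: pick $v \in A$ with $A \re \varepsilon v \tin P'$. Then for every $v' \in A$, the set $A$ itself (containing both $v'$ and $v$, with $A \re \varepsilon v \tin P'$) witnesses $v' \re \varepsilon v \tin G''$ by the definition of $E''$, which is exactly the required condition. I expect the main obstacle to be making the simulation argument fully rigorous — in particular, matching each new $G''$-edge to a valid $P'$-consistent detour and verifying that the resulting infinite play really lies in $P'$ — rather than any essential difficulty, since the value bookkeeping reduces cleanly to neutrality of $\varepsilon$ and Lemma~\ref{lem:G_is_easy}.
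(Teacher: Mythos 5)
Your proposal is correct and follows essentially the same route as the paper: simulate each $G''$-path by a $P'$-consistent play in $\cG'$ obtained by expanding each new $\varepsilon$-edge into the detour $v_i \re \varepsilon A_i \re \varepsilon v_{i+1}$, invoke neutrality of $\varepsilon$ to equate values, and conclude via optimality of $P'$ and Lemma~\ref{lem:G_is_easy}; the argument for sufficiently many $\varepsilon$-edges is also identical. No gaps.
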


\begin{proof}
By optimality of $P'$, and thanks to Lemma~\ref{lem:G_is_easy}, we have for all $v_0 \in V(G)$ that $\val_{P'}(v_0) \leq \val_G(v_0)$.
Consider an infinite path $\pi''$ in $G''$;
it is of the form
\[
\pi'' : v_0 \rp {w_0} v'_0 \re \eps v_1 \rp {w_1} v_1' \re \eps v_2 \dots,
\]
where for each $i$, $v_i \rp {w_i} v'_i$ is a path in $G$ and there exists $A_i \in \nepowerset{V(G)}$ such that $v'_i,v_{i+1} \in A_i$ and $A_i \re \eps v_{i+1} \tin P'$.
Therefore, there is a path of the form
\[
\pi' : v_0 \rp {w_0} v'_0 \re \eps A_0 \re \eps v_1 \rp {w_1} v_1' \re \eps A_1 \re \eps v_2 \dots
\]
in $P'$ (the edges $v'_i \re \eps A_i$ as well as the paths $v_i \rp {w_i} v_i'$ belong to $P'$ since it is a strategy).
By neutrality of $\eps$ we have $\val(\col(\pi''))=\val(\col(\pi'))$ and hence
\[
\begin{array}{rcl}
\val_{G''}(v_0) & = & \sup_{v_0 \rp {w''} \tin G''} \val(w'') \\
& \leq & \sup_{v_0 \rp {w'} \tin P'} \val(w') \\
& = & \val_{P'}(v_0) \leq \val_G(v_0),
\end{array}
\]
thus the identity is $\val$-preserving from $G$ to $G''$.

Finally, each non-empty $A \subseteq V(G)$ has an $\eps$-successor $v'$ in $P'$, which satisfies that each $v \in A$ has an $\eps$-edge towards $v'$ in $G''$.
Stated differently, $G''$ has sufficiently many $\eps$-edges.
\end{proof}

We conclude with Theorem~\ref{thm:structuration} by combining Lemmas~\ref{lem:adding_edges} and~\ref{lem:second_step}.

\subsection{Specialization to prefix-independent objectives}\label{subsec:prefix_increasing_objectives}

In this section, we show that our notion of universality instantiates to that introduced by Colcombet and Fijalkow~\cite{CF18} (and studied over finite graphs and prefix-independent objectives), in the case of prefix-increasing objectives.
This allows to simplify the definitions in the very commonly studied case of prefix-independent objectives.
We also discuss interactions between universality and pregraphs; in particular we show that for prefix-independent objectives, nonempty universal graphs also embed pregraphs.

\paragraph*{Prefix-independence.} We call a $C$-valuation $\val$ prefix-increasing (resp.~prefix-decreasing) if adding a prefix increases (resp. decreases) the value: for any $w \in C^\omega$ and $u \in C^*$, we have $\val(uw) \geq \val(w)$ (resp. $\val(uw) \leq \val(w)$).
If a valuation is both prefix-increasing and prefix-decreasing, we say that it is prefix-independent.
We have the following technical lemma.

\begin{lemma}\label{lem:no_val_increasing_edge}
Assume that $\val$ is prefix-increasing and consider a graph~$G$.
If two vertices~$v$ and~$v'$ satisfy $\val_G(v) < \val_G(v')$ then there is no edge in $G$ from $v$ to $v'$.
\end{lemma}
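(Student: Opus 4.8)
The plan is to argue by contradiction, exploiting the fact that the value of a vertex is a supremum over colorations of infinite paths leaving it, together with the prefix-increasing hypothesis. Suppose for contradiction that there is an edge $v \re c v' \tin G$ with $\val_G(v) < \val_G(v')$. The intuition is that any infinite path leaving $v'$ can be prepended by this single edge to yield an infinite path leaving $v$; since prepending a prefix only increases the value, every coloration achievable from $v'$ becomes a coloration (with one extra prefix color $c$) achievable from $v$, whose value is at least as large. Taking the supremum should then force $\val_G(v)$ to dominate $\val_G(v')$, contradicting the strict inequality.

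Concretely, first I would fix any infinite path $\pi' : v' \rp{w'} \tin G$ with coloration $w' \in C^\omega$. Since $v \re c v' \tin G$, the concatenation $\pi : v \re c v' \rp{w'} \tin G$ is an infinite path from $v$ with coloration $cw'$. By the prefix-increasing assumption applied with the one-letter prefix $u = c$, we have $\val(cw') \geq \val(w')$. Therefore
\[
\val_G(v) = \sup_{v \rp w \tin G} \val(w) \geq \val(cw') \geq \val(w').
\]
Since this holds for every coloration $w'$ of an infinite path from $v'$, I would then take the supremum over all such $w'$ on the right-hand side, obtaining
\[
\val_G(v) \geq \sup_{v' \rp {w'} \tin G} \val(w') = \val_G(v').
\]
This directly contradicts the hypothesis $\val_G(v) < \val_G(v')$, and so no such edge can exist.

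I do not anticipate a genuine obstacle here: the statement is essentially an unfolding of the definitions, and the only point requiring minor care is the legitimacy of taking the supremum over $w'$ after having established the bound pointwise — this is valid precisely because $\val_G(v)$ is an upper bound for $\{\val(w') : v' \rp{w'} \tin G\}$, and the supremum is the least such bound. One should also note that the set of infinite paths from $v'$ is nonempty, since $G$ has no sinks (every vertex has an outgoing edge), so the supremum defining $\val_G(v')$ is taken over a nonempty set and the argument is not vacuous.
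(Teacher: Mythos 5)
Your proposal is correct and follows essentially the same route as the paper: argue by contradiction, prepend the edge to an infinite path from $v'$, and invoke prefix-increasingness with the one-letter prefix $c$. The only cosmetic difference is that the paper extracts a single witness path from $v'$ with value exceeding $\val_G(v)$ and derives the contradiction from that one path, whereas you bound pointwise over all paths and then take the supremum; the two are interchangeable.
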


\begin{proof}
By contradiction, let $e= v \re c v'$ be an edge in $G$ and pick a path $\pi'$ from $v'$ with $\val_G(\pi') > \val_G(v)$.
Then $e \pi'$ is a path from $v$ and we have
\[
\val(e\pi') \geq \val(\pi') > \val(v),
\]
which is a contradiction since $e\pi'$ is a path from $v$ in $G$.
\end{proof}

\paragraph*{Objectives.} We say that $\val: C^\omega \to X$ is an objective if $X$ is the ordered pair $\{\bot,\top\}$.
In this case, we also say that $\val$ is qualitative.
From the point of view of Eve, $\bot$ is interpreted as winning, whereas $\top$ is losing.
Following the usual convention, we identify a qualitative valuation $\val$ with the set $W=\val^{-1}(\bot)$ of infinite words which are winning for Eve.
We say that a vertex $v$ (in a graph) satisfies $W$ if all colorations from $v$ belong to $W$; this amounts to saying that $v$ has value $\bot$.
We also say that a graph satisfies $W$ if all its vertices satisfy $W$; in this case we write $G \models W$.
Note that in the qualitative case, a morphism $G \to G'$ is $W$-preserving if and only if any vertex satisfying $W$ in $G$ is mapped to a vertex satisfying $W$ in $G'$.

\paragraph*{CF-universality.}
Adapting the definition of Colcombet and Fijalkow~\cite{CF18} to infinite cardinals, we say that a graph $G$ is $(\kappa,W)$-CF-universal if

\begin{itemize}
    \item $G \models W$; and
    \item all graphs $H$ such that $|H| < \kappa$ and $H \models W$ have a morphism towards $G$.
\end{itemize}

Given a monotone graph $G$, we let $G^\top$ be the monotone graph obtained by adding a new maximal vertex $\top$ with all possible outgoing edges; formally $V(G^\top) = V(G) \sqcup \{\top\}$ and
\[
    E(G^\top) = E \cup \{\top\} \times C \times V(G^\top).
\]
We now relate the two notions of universality in the special case of a prefix-increasing objectives~$W$, that is, satisfying for all colors~$c$ that $cW \subseteq W$.

\begin{lemma}\label{lem:CF-universality}
Let $W \subseteq C^\omega$ be a prefix-increasing objective and $\kappa$ a cardinal.
\begin{itemize}
    \item If $G$ is a $(\kappa,W)$-universal graph, then its restriction $G_W$ to vertices satisfying $W$ is $(\kappa,W)$-CF-universal. Moreover if $G$ is well-monotone, then so is $G_W$.
    \item If $G$ is a $(\kappa,W)$-CF-universal graph, then $G^\top$ is $(\kappa,W)$-universal. Moreover if $G$ is well-monotone, then so is $G^\top$.
\end{itemize}
\end{lemma}

\begin{proof}
We start with the first item; let $G$ be a $(\kappa,W)$-universal graph and let $G_W$ be its restriction to vertices that satisfy~$W$.
By Lemma~\ref{lem:no_val_increasing_edge}, there is no edge from $V(G_W)$ to its complement in~$G$, therefore $G_W$ is indeed a graph.
It is clear that $G_W$ satisfies $W$.
Now if $H$ is a graph satisfying $W$ and of cardinality $< \kappa$, it has a $W$-preserving morphism $\phi$ into $G$.
The fact that $\phi$ is $W$-preserving means that it is actually a morphism into $G_W$, as required.
Therefore $G_W$ is $(\kappa,W)$-CF-universal.
Finally, assuming $G$ is well-monotone, it is immediate that $G_W$ is well-monotone (as is any restriction of a well-monotone graph).

We now prove the second item; let $G$ be a $(\kappa,W)$-CF-universal graph.
Let $H$ be a graph of cardinality $< \kappa$, and let $H_W$ denote its restriction to vertices satisfying $W$ (it is indeed a graph thanks to Lemma~\ref{lem:no_val_increasing_edge}).
By CF-universality there is a morphism from $H_W$ to $G$, we extend it to a morphism $H \to G^\top$ by mapping vertices not in $H_W$ to $\top$.
It is a morphism thanks to Lemma~\ref{lem:no_val_increasing_edge}, and it is $W$-preserving by definition.
We conclude that $G^\top$ is indeed $(\kappa,W)$-universal.
Finally, it is clear that if $G$ is well-monotone, then so is $G^\top$.
\end{proof}    

In the case of prefix-increasing objectives, since one may translate, thanks to Lemma~\ref{lem:CF-universality}, (well-monotone) CF-universal graphs to (well-monotone) universal graphs and back, we will focus on constructing CF-universal graphs, bipassing the need for systematically introducing an additional $\top$-vertex.
For convenience and by a slight abuse, we will simply say that such graphs are universal.

\paragraph*{Pregraphs and prefix-independent objectives.}

In some applications (see Section~\ref{sec:lexico}), it is more convenient to work with pregraphs (which allow for sinks) rather than graphs.
The definitions remain the same: the valuation of a vertex in a pregraph is the supremum valuation of all infinite colorations from this vertex.
Stated differently, paths ending in sinks are not taken into account, which corresponds to the intuition that they are winning for Eve.

This may be unsatisfactory, for instance if we consider safety games, defined over $C = \{\safe,\bad\}$ by the objective $\Safety = \{\safe^\omega\}$, then we would certainly want finite paths containing occurrences of $\bad$ to be losing rather than winning.
Note that $\Safety$ is not prefix-independent, because $\bad \cdot \Safety \nsubseteq \Safety$.
In contrast, for prefix-independent objectives, working with pregraphs is essentially harmless.

\begin{lemma}\label{lem:pregraphs}
Let $W \subseteq C^\omega$ be a prefix-independent objective, let $\kappa$ be a cardinal number, and let $G$ be a nonempty $(\kappa,W)$-universal well-monotone graph.
Then every pregraph $H$ of size $< \kappa$ that satisfies $W$ has a morphism into $G$.
\end{lemma}

\begin{proof}
Let $v_0 \in V(G)$ be the minimal vertex in $G$, and let $v_0 \re {c_0} v'$ be an edge outgoing from $v_0$ in~$G$.
By monotone composition, the edge $v_0 \re{c_0} v_0$ also belongs to~$G$, and therefore $v_0 \rp {c_0^\omega}$ in~$G$.
Thus since $G \models W$, it must be that $c_0^\omega \in W$.

Now, let $H$ be a pregraph of size $< \kappa$ which satisfies $W$, and let $H'$ be the graph obtained from $H$ by appending a $c_0$-loop to every sink.
Then paths in $H'$ can be of two types: those that belong to $H$, and those that are comprised of a finite prefix followed by infinitely many occurrences of a $c_0$-loop.
Both types of paths satisfy $W$ (since $H \models W$ and by prefix-independence), and thus $H' \models W$.
Thus $H'$ has a morphism into $G$ by universality, which also defines a morphism $H \to G$.
\end{proof}

\section{Examples and first applications}\label{sec:examples}

In this section, we give various examples of constructions of well-monotone graphs which are universal with respect to well-studied conditions.
In particular, thanks to Theorem~\ref{thm:structure_implies_positionality}, this establishes positionality in each case.
We start with some $\omega$-regular objectives, then move on to the study of a few valuations which are inherently quantitative, and finish the section with a study of finitary parity objectives.

\subsection{A few \texorpdfstring{$\omega$}{omega}-regular objectives}\label{subsec:omega_regular}

\paragraph{Safety games.}
The safety objective is defined over $C=\{\safe,\bad\}$ by
\[
\Safety=\{\safe^\omega\}.
\]
It is the simplest in terms of winning strategies: Eve is guaranteed to win as long as she follows a $\safe$-edge which remains in the winning region.
Note that it is prefix-increasing, and thus (see Lemma~\ref{lem:CF-universality}) we are looking for a well-monotone graph satisfying $\Safety$ and which embeds all graphs satisfying $\Safety$.

Now satisfying Safety for a graph simply means not having a $\bad$-edge therefore we have the following result.

\begin{lemma}
The well-monotone graph comprised of a single vertex with a $\safe$-loop is uniformly $\Safety$-universal.
\end{lemma}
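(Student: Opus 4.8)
The plan is to invoke the convention established for qualitative prefix-increasing objectives at the end of Section~\ref{sec:prefix_increasing_objectives}. Since $\Safety$ is prefix-increasing (as already noted, prepending any prefix can only destroy the pattern $\safe^\omega$, never create it), it suffices to exhibit a well-monotonic graph $\L$ which $(i)$ satisfies $\Safety$ and $(ii)$ embeds every graph satisfying $\Safety$; to obtain \emph{uniform} universality I will arrange that the embedding in $(ii)$ exists regardless of cardinality. I take $\L = (\{*\}, \{* \re \safe *\})$, the single vertex with a $\safe$-loop. It is trivially well-monotonic: its order is the singleton order, and both left- and right-composition hold vacuously since $*$ is the only vertex.

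For $(i)$, the unique infinite path from $*$ in $\L$ has coloration $\safe^\omega$, which lies in $\Safety$; hence $*$, and therefore $\L$, satisfies $\Safety$.

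For $(ii)$, I first record the characterization that a graph $G=(V,E)$ satisfies $\Safety$ if and only if it has no $\bad$-edge. Indeed, if some edge $v \re \bad v'$ were present, then $v$ would admit a coloration beginning with $\bad$, hence distinct from $\safe^\omega$, contradicting that $v$ satisfies $\Safety$; conversely, if $E$ contains only $\safe$-edges then, since $G$ has no sinks, every coloration from every vertex equals $\safe^\omega \in \Safety$. Given any such $G$, all its edges carry the color $\safe$, so the constant map $\phi : V \to \{*\}$ is a morphism $G \to \L$: each edge $v \re \safe v'$ is sent to $* \re \safe * \in \L$. Since this morphism is available uniformly for graphs $G$ of arbitrary cardinality, $\L$ embeds all graphs satisfying $\Safety$, and the universality is uniform.

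There is no genuine obstacle here; the only point requiring (minimal) care is the reduction to the two-condition formulation of universality valid in the prefix-increasing case, after which both conditions are immediate. The entire content of the result is captured by the observation that ``satisfying $\Safety$'' is equivalent to ``containing no $\bad$-edge'', which is precisely what makes the constant collapse onto the $\safe$-loop a valid morphism.
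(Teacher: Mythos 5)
Your proof is correct and follows essentially the same route as the paper: reduce to the two-condition formulation of universality for prefix-increasing objectives (Lemma~\ref{lem:universality_for_prefix_increasing}), observe that a graph satisfies $\Safety$ exactly when it has no $\bad$-edge, and collapse everything onto the single $\safe$-loop. Nothing to add.
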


This proves thanks to Theorem~\ref{thm:structure_implies_positionality} that safety games are positionally determined (which of course has much simpler proofs).

\paragraph{A variant of Safety.}

For the sake of studying a simple example with no prefix-independence property we consider the objective over $C=\{\imm,\safe,\bad\}$ defined by
\[
W = \imm \cdot \{\imm, \safe\}^\omega.
\]
In words, Eve should immediately see the color $\imm$, and then avoid $\bad$ forever.
Here, $\bad \cdot W \nsubseteq W$ and $W \nsubseteq \safe \cdot W$.
Consider the graph $U$ depicted in Figure~\ref{fig:no_invariance_example}.

\begin{figure}[ht]
\begin{center}
\includegraphics[width=0.4\linewidth]{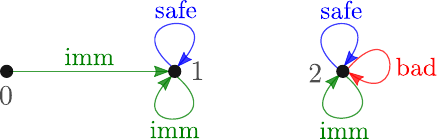}
\end{center}
\caption{A monotone $\{\imm,\safe,\bad\}$-graph $U$ over $V(U)=\{0,1,2\}$.
Edges which follow from composition are not depicted.
Note that neither $1$ nor $2$ satisfy $W$ in $U$.}\label{fig:no_invariance_example}
\end{figure}

\begin{lemma}
The completely well-monotone graph $U$ is uniformly $W$-universal.
\end{lemma}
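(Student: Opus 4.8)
The plan is to establish uniform $W$-universality directly, by constructing for an arbitrary $\{\imm,\safe,\bad\}$-graph $H=(V,E)$ (of any cardinality) an explicit $W$-preserving morphism $\phi\colon H\to\L$; since $H$ ranges over all graphs, this handles every cardinal $\kappa$ at once. Note that $W$ is \emph{not} prefix-increasing --- prepending $\imm$ to a coloration that does not start with $\imm$ can land inside $W$ --- so I cannot invoke the simplified universality of Lemma~\ref{lem:universality_for_prefix_increasing} and must instead use the general definition of a $\val$-preserving morphism. The morphism will send each vertex to one of the three levels of $\L$ according to a simple classification: call $v$ \emph{safe} when no coloration from $v$ ever uses $\bad$, and partition $V$ into the vertices that (A) satisfy $W$, that are (B) safe but do not satisfy $W$, and that are (C) not safe. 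These classes are disjoint and cover $V$ because satisfying $W$ forces safety; I would then set $\phi(v)=0$, $1$, or $\top=2$ according to whether $v$ lies in (A), (B), or (C).

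The verification that $\phi$ is a morphism is the heart of the argument, and it rests on two elementary closure properties. First, the safe region is closed under successors: if $v$ is safe and $v\re{c}v'$, then $c\neq\bad$ and $v'$ is safe, since every path out of $v'$ prolongs a path out of $v$. Second, a safe vertex satisfies $W$ precisely when all of its outgoing edges carry $\imm$, because its colorations already avoid $\bad$ and hence lie in $W$ iff they all begin with $\imm$. Together these say that an (A)-vertex emits only $\imm$-edges into (A) or (B), a (B)-vertex emits only $\imm$- or $\safe$-edges into (A) or (B), and a (C)-vertex may emit anything. I would then match each case with an edge of $\L$ (reading them off Figure~\ref{fig:no_invariance_example} together with the edges obtained by composition): level $0$ carries exactly $0\re{\imm}0$ and $0\re{\imm}1$, level $1$ carries $1\re{\imm}0$, $1\re{\imm}1$, $1\re{\safe}0$, $1\re{\safe}1$, and $\top$ carries every edge $\top\re{c}\ell$. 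The decisive point is that no edge ever has to leave $\{0,1\}$ towards $\top$ --- which $\L$ could not accommodate --- and this is guaranteed exactly by successor-closure of the safe region.

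Finally I would check $W$-preservation. As recorded in the definition of universality, any morphism already satisfies $\val_H(v)\le\val_\L(\phi(v))$, so equality can fail at $v$ only when $v$ satisfies $W$ while $\phi(v)$ does not. But a vertex satisfying $W$ lies in class (A) and is sent to $0$, and $0$ satisfies $W$ in $\L$ since all of its colorations lie in $\imm\cdot\{\imm,\safe\}^\omega=W$; hence equality holds everywhere and $\phi$ is $W$-preserving. This gives uniform $W$-universality, and thus, through Theorem~\ref{thm:structure_implies_positionality}, positionality of $W$. The only genuinely delicate step is the morphism check of the second paragraph --- in particular recognizing that safety is preserved along edges, so that the three levels truly suffice --- while the rest is bookkeeping.
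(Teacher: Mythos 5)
Your proof is correct and follows essentially the same route as the paper: the paper partitions the vertices into $V_2$ (has a path visiting a $\bad$-edge), $V_0$ (safe with all outgoing edges colored $\imm$, i.e.\ exactly the vertices satisfying $W$) and $V_1$ (the rest), and maps these to $2$, $0$, $1$ respectively --- precisely your classes (C), (A), (B). The only difference is that the paper declares the morphism and preservation checks ``immediate,'' whereas you spell out the successor-closure of the safe region and the edge-by-edge verification, which is exactly the content being elided.
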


Therefore $W$ is positionally determined over all graphs.

\begin{proof}
Consider any $C$-graph $G$, and let $V_0,V_1,V_2 \subseteq V(G)$ be the partition of $V(G)$ defined by
\begin{itemize}
\item $v \in V_2$ if and only if $v$ has a path which visits a $\bad$-edge, and
\item $v \in V_0$ if and only if $v \notin V_2$ and all edges outgoing from $v$ have color $\imm$.
\end{itemize}
Note that $V_0$ is precisely the set of vertices which satisfy $W$.
It is immediate that mapping $V_0$ to $0$, $V_1$ to $1$ and $V_2$ to $2$ defines a $W$-preserving morphism from $G$ to $U$.
\end{proof}

\paragraph{Reachability games.}

We now consider the reachability objective over $C=\{\wait,\good\}$, given by
\[
\Reachability = \{w \in C^\omega \mid |w|_{\good} \geq 1\} = C^* \good C^\omega.
\]
Note that $\Reachability$ is not prefix-increasing therefore elements which do not satisfy the objective in the sought monotone graph may play a non-trivial role.
Given an ordinal $\alpha$, we let $U_\alpha$ denote the graph over $V(U_\alpha) = \alpha+1 = [0,\alpha]$ given by
\[
\lambda \re c \lambda' \tin U_\alpha \quad \iff \quad c=\good \tor \lambda > \lambda' \tor \lambda = \alpha.
\]
It is illustrated in Figure~\ref{fig:reachability_construction}.

\begin{figure}[ht]
\begin{center}
\includegraphics[width=0.7\linewidth]{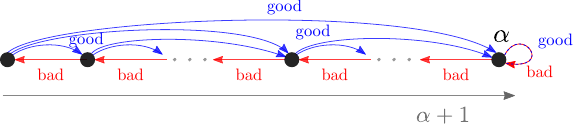}
\end{center}
\caption{The graph $U_\alpha$. Some edges which follow from monotone composition are omitted for clarity (for instance, $\good$-edges pointing from right to left), from now on we no longer mention the use of this convention.
Note that in $U_\alpha$, the vertex $\alpha$ does not satisfy $\Reachability$, however every other vertex does.}\label{fig:reachability_construction}
\end{figure}

At the level of intuition, each path from a vertex satisfying reachability in a given graph eventually visits a $\good$-edge.
There is in fact a well-defined ordinal $\phi(v)$ which captures the number of steps required from $v$ until a $\good$-edge is reached.
This can be rephrased as a universality result.

\begin{lemma}\label{lem:non_uniform_construction_reachability}
For any ordinal $\alpha$, $U_\alpha$ is well-monotone and it is $(|\alpha|,\Reachability)$-universal.
\end{lemma}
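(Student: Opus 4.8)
The plan is to establish the two defining properties separately: (1) $\L_\alpha$ is completely well-monotonic, and (2) it is $(|\alpha|, \Reachability)$-universal. Since $\Reachability$ is not prefix-increasing, I cannot invoke the simplified universality notion of Lemma~\ref{lem:universality_for_prefix_increasing}; I must verify that $\L_\alpha$ is genuinely completely well-monotonic and exhibit $\val$-preserving morphisms directly.

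\medskip

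\noindent\textbf{Monotonicity and completeness.}
First I would check that $\geq$ on $L_\alpha = [0,\alpha]$ (the usual ordinal order, which is a well-order) satisfies left- and right-composition for the edge relation $\lambda \re c \lambda' \iff (c = \good \tor \lambda > \lambda' \tor \lambda = \alpha)$. For left-composition, suppose $\lambda \geq \mu \re c \mu'$; I must show $\lambda \re c \mu'$. If the edge $\mu \re c \mu'$ holds because $c = \good$, then $\lambda \re c \mu'$ for the same reason; if because $\mu > \mu'$, then $\lambda \geq \mu > \mu'$ gives $\lambda > \mu'$; if because $\mu = \alpha$, then $\lambda \geq \alpha$ forces $\lambda = \alpha$. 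For right-composition, suppose $\lambda \re c \mu \geq \mu'$, so $\mu \geq \mu'$; if $c = \good$ or $\lambda = \alpha$ the edge $\lambda \re c \mu'$ is immediate, and if $\lambda > \mu$ then $\lambda > \mu \geq \mu'$ gives $\lambda > \mu'$. Completeness requires that every $\lambda'$ has a $c$-predecessor for each $c$: indeed $\alpha \re c \lambda'$ always holds by the clause $\lambda = \alpha$. Since $[0,\alpha]$ is an ordinal with maximum $\alpha$, it is a complete lattice, so $\L_\alpha$ is completely well-monotonic.

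\medskip

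\noindent\textbf{Universality.}
For the harder direction, fix a graph $G = (V,E)$ with $|V| < |\alpha|$, and define a rank function $\phi : V \to [0,\alpha]$ capturing ``steps until a $\good$-edge.'' The natural definition is by least-fixpoint induction: $\phi(v) = 0$ if $v$ has an outgoing $\good$-edge, and otherwise $\phi(v) = \sup \{ \phi(v') + 1 \mid v \re \wait v' \}$, with $\phi(v) = \alpha$ (the top value) whenever this recursion fails to terminate, i.e.\ when $v$ has an infinite all-$\wait$ path and hence does not satisfy $\Reachability$. Concretely, I would set $\phi(v)$ to be the supremum over colorations $v \rp w$ of the position of the first $\good$ in $w$ (as an ordinal rank on the tree of $\wait$-paths), capped at $\alpha$; the cardinality bound $|V| < |\alpha|$ ensures this ordinal rank stays below $\alpha$ exactly on vertices satisfying $\Reachability$.

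\medskip

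\noindent\textbf{That $\phi$ is a morphism and is $\val$-preserving.}
I then verify $\phi$ is a morphism: for an edge $v \re c v'$ in $G$, I must show $\phi(v) \re c \phi(v')$ in $\L_\alpha$. If $c = \good$ this holds by the $\good$-clause. If $c = \wait$ and $\phi(v) = \alpha$, it holds by the $\lambda = \alpha$ clause. If $c = \wait$ and $\phi(v) < \alpha$, then by construction $\phi(v) \geq \phi(v') + 1 > \phi(v')$, so the edge holds by the $\lambda > \lambda'$ clause. For $\val$-preservation, I must check $\val_G(v) = \val_{\L_\alpha}(\phi(v))$; the inequality $\leq$ is automatic from the morphism (as noted in the excerpt), so it suffices to show $\geq$, namely that if $\phi(v) = \alpha$ (so $v$ fails $\Reachability$ in $\L_\alpha$) then $v$ already fails $\Reachability$ in $G$. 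But $\phi(v) = \alpha$ precisely when $v$ admits an infinite $\wait$-path in $G$, whose coloration $\wait^\omega \notin \Reachability$, so $\val_G(v) = \top$ as well.

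\medskip

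\noindent\textbf{Main obstacle.}
The delicate point is the precise ordinal-rank definition of $\phi$ and the verification that the cardinality hypothesis $|V| < |\alpha|$ guarantees $\phi(v) < \alpha$ for every vertex satisfying $\Reachability$. The tree of $\wait$-paths from a reachability-satisfying vertex is well-founded, so it carries an ordinal rank, but bounding that rank below $\alpha$ rather than merely below some ordinal of cardinality $|V|^{+}$ requires care: one argues that the rank of a well-founded tree with fewer than $|\alpha|$ nodes is itself an ordinal of cardinality at most $|V| < |\alpha|$, hence strictly below $\alpha$. Getting this cardinal arithmetic exactly right — and handling the non-prefix-increasing subtlety that the top vertex $\alpha$ legitimately does \emph{not} satisfy $\Reachability$ yet must absorb all failing vertices — is where the real work lies.
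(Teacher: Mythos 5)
Your overall route is the same as the paper's: verify monotonicity and completeness directly (the paper does this via the min-predecessor tables $\rho_\wait(\lambda)=\min(\lambda+1,\alpha)$ and $\rho_\good(\lambda)=0$, which is equivalent to your composition check), then build an ordinal rank by a least-fixpoint iteration and use the cardinality bound to force the rank below $\alpha$ on vertices satisfying $\Reachability$. However, your stated definition of $\phi$ is incorrect and is contradicted by your own morphism verification. You set $\phi(v)=0$ as soon as $v$ has \emph{some} outgoing $\good$-edge. Take $v$ with both $v \re{\good} u$ and $v \re{\wait} v'$ where $\phi(v')=5$: your definition gives $\phi(v)=0$, but the $\wait$-edge then requires $0 \re{\wait} 5 \tin \L_\alpha$, which fails (it is not a $\good$-edge, $0>5$ is false, and $0\neq\alpha$). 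Your later step ``by construction $\phi(v)\geq\phi(v')+1$'' silently assumes the correct definition, namely $\phi(v)=\sup\{\phi(v')+1 \mid v\re{\wait} v' \tin G\}$ with no special clause for $\good$-edges (so $\phi(v)=0$ exactly when $v$ has no outgoing $\wait$-edge); this is precisely what the paper's increasing transfinite sequence $V_\lambda$ computes via $\phi(v)=\min\{\lambda \mid v\in V_\lambda\}$.

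A second problem is your ``concrete'' reformulation of $\phi(v)$ as the supremum over colorations from $v$ of the position of the first $\good$: for any single coloration that position is a natural number, so this supremum never exceeds $\omega$, whereas a vertex whose unique $\wait$-successor already has rank $\omega$ must receive rank at least $\omega+1$ for the strict decrease along $\wait$-edges to hold. The rank has to be defined by the recursion itself (equivalently, by the stabilizing sequence $V_\lambda$), not read off from individual colorations. The remaining ingredients of your proposal are sound and match the paper: the cardinality argument (the rank of a converse-well-founded relation on fewer than $|\alpha|$ vertices is an ordinal of cardinality $<|\alpha|$, hence $<\alpha$; the paper instead argues that the sequence $V_\lambda$ must stabilize before step $\alpha$) and the $\val$-preservation argument (if $\phi(v)=\alpha$ then $v$ carries an infinite $\wait$-path, so $v$ already fails $\Reachability$ in $G$).
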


The proof provides a template which will later be adapted to other objectives hence we break it into well-distinguished steps.

\begin{proof}
Monotonicity of $U_\alpha$ is a direct check, and it is clear that it is well-founded.
Since there are no infinite paths of $\bad$-edges from vertices $< \lambda$, we have
\[
\lambda \text{ satisfies } \Reachability \tin U_{\alpha} \quad \iff \quad \lambda < \alpha.
\]
We now fix an arbitrary graph $G$.

\begin{itemize}
\item[$(i)$] We construct by transfinite recursion an increasing ordinal-indexed sequence of subsets of $V(G)$ by setting for each ordinal $\lambda$
\[
V_\lambda = \big\{ v \in V(G) \mid v \re c v' \tin G \implies [c = \good \tor \exists \beta < \lambda, v' \in V_\beta]\big\}.
\]
\item[$(ii)$] We let $R = \bigcup_\lambda V_\lambda$ and aim to prove that if $v$ satisfies $\Reachability$ in $G$ then $v \in R$.
We proceed by contrapositive and assume that $v_0 \notin R$: for any ordinal $\lambda$, $v_0 \notin V_\lambda$.
Then $v_0$ has a $\wait$-edge towards some vertex $v_1$ such that for all $\lambda$, $v_1 \notin V_\lambda$.
By a quick induction we build an infinite path $v_0 \re \wait v_1 \re \wait \dots$ in $G$, which guarantees that $v_0$ does not satisfy $\Reachability$.
\item[$(iii)$] We show that if $V_\lambda = V_ {\lambda+1}$ then for all $\lambda' \geq \lambda$ we have $V_{\lambda'} = V_{\lambda}$.
This is direct by transfinite induction: assume the result known for all $\beta$ such that $\lambda \leq \beta < \lambda'$ and let $v \in V_{\lambda'}$.

Then any edge from $v$ is either a $\good$-edge or points towards $v' \in V_\beta$ for some $\beta < \lambda'$, and the result follows since $V_\beta \subseteq V_\lambda$.
\item[$(iv)$] We now let $\alpha$ be such that $|\alpha| > |V(G)|$ and prove that $V_\lambda = V_{\lambda+1}$ for some $\lambda<\alpha$.
Indeed, if this were not the case, then any map (obtained using the axiom of choice)
\[
\begin{array}{lcl}
\alpha & \to & V(G) \\
\lambda &\mapsto& v \in V_{\lambda+1} \setminus V_{\lambda}
\end{array}
\]
would be injective, a contradiction.
\item[$(v)$] Therefore $R=\bigcup_{\lambda < \alpha} V_{\alpha}$ and we let $\phi: V(G) \to V(U_{\alpha}) =[0,\alpha]$ be given by
\[
\phi(v) = \begin{cases}
\min\{\lambda \mid v \in V_\lambda\} & \tif v \in R\\
\alpha &\tif v \notin R.
\end{cases}
\]
By the second item and since $\lambda$ satisfies $\Reachability$ provided it is $<\alpha$, it holds that $\phi$ preserves $\Reachability$. 
\item[$(vi)$] We verify that $\phi$ defines a graph-morphism, which follows from the definitions of $V_\lambda$ and of $U_\alpha$.
First, $\good$-edges are preserved (independently of $\phi$) since they all belong to $U_\alpha$.
Second, $\wait$-edges from $\compl R$ are preserved since $\alpha$ has all outgoing $\wait$-edges in $U_{\alpha}$.
Third if $v \re \wait v'$ is such that $v \in U$ then $\phi(v') < \phi(v)$ by definition of $\phi$ thus $\phi(v) \re \wait \phi(v')$. \qedhere
\end{itemize}
\end{proof}

\paragraph{B\"uchi games.}
The B\"uchi condition is defined over the same set of colors $C=\{\wait,\good\}$ by
\[
\Buchi = \{w \in C^\omega \mid |w|_{\good} = \infty\} = (\wait^* \good)^\omega.
\]
It is prefix-independent so we aim to construct well-monotone graphs which satisfy $\Buchi$ and embed graphs satisfying $\Buchi$.

Given an ordinal $\alpha$, we consider the graph $U_\alpha$ over $V(U_\alpha)=\alpha=[0,\alpha)$ given by
\[
\lambda \re c \lambda' \tin U_\alpha \qquad \iff \qquad c = \good \tor  \lambda > \lambda'.
\]

The difference between the completion $(U_\alpha)^{\top}$ of the graph defined just above for $\Buchi$ and the graph we used for $\Reachability$ is that in the latter there are $\good$-edges towards the maximal element.
This reflects the fact that in a reachability game there may be $\good$-edges from the winning region to its complement, which is of course false in a $\Buchi$-game (precisely because they are prefix-independent).

It is a direct check that $U_\alpha$ is well-monotone and that it satisfies $\Buchi$.
The intuition behind the following result is that one may associate, to any vertex satisfying $\Buchi$ in a given graph, an ordinal corresponding to the number of $\wait$-edges before the next $\good$ edge.

\begin{lemma}
\label{lem:non_uniform_construction_buchi}
For any ordinal $\alpha$, $U_\alpha$ is $(|\alpha|,\Buchi)$-universal.
\end{lemma}

We follow the same steps as those of the proof of Lemma~\ref{lem:non_uniform_construction_reachability}.

\begin{proof}
Fix a graph $G$ which satisfies $\Buchi$.
\begin{itemize}
\item[$(i)$] We construct by transfinite recursion an increasing ordinal-indexed sequence of subsets of $V(G)$ by the formula
\[
V_\lambda = \big\{ v \in V(G) \mid v \re c v'\implies [c=\good \tor \exists \beta < \lambda, v' \in V_\beta] \big\}.
\]
Note that the definition is identical to that of the proof of Lemma~\ref{lem:non_uniform_construction_reachability}, thus we may skip a few steps below which were already proved.
\item[$(ii)$] We let $R=\bigcup_\lambda V_\lambda$ and prove that $R=V(G)$:
from $v_0 \notin R$, we may construct a path $v_0 \re \wait v_1 \re \wait \dots$ in $G$, which contradicts the fact that $G$ satisfies $\Buchi$.
\item[$(iii)$] It again holds that $V_\lambda = V_{\lambda+1}$ implies $V_{\lambda'} = V_{\lambda}$ for $\lambda' > \lambda$.
\item[$(iv)$] We let $\alpha$ be such that $|\alpha| > |V(G)|$ and we have $V_{\lambda}=V_{\lambda+1}$ for some $\lambda < \alpha$.
\item[$(v)$] Therefore $R = \bigcup_{\lambda < \alpha} V_\alpha = V(G)$ and we let $\phi:V(G) \to V(U_\alpha) = [0,\alpha)$ be given by $\phi(v) = \min \{\lambda \mid v \in V_\lambda\}$.
\item[$(vi)$] We verify that $\phi$ defines a graph morphism, which follows directly from the definitions.\qedhere
\end{itemize}
\end{proof}

\paragraph{Almost universal graphs.}
We now provide a general technique for constructing universal graphs, which we will then apply to the co-B\"uchi objective (and later, to other examples).
Fix a prefix-independent objective $W \subseteq C^\omega$.
Recall that for a vertex $v$ in a graph $G$, we use $G[v]$ to denote the restriction of $G$ to vertices reachable from $v$.
We say that a graph $U$ is almost $(\kappa,W)$-universal if 
\begin{itemize}
\item $U$ satisfies $W$; and
\item all graphs $G$ of cardinality $< \kappa$ satisfying $W$ have a vertex $v \in V(G)$ such that $G[v] \to U$.
\end{itemize}
When a graph is almost $(\kappa,W)$-universal for all cardinals $\kappa$, we say that it is uniformly almost $W$-universal.

Given a graph $U$ and an ordinal $\alpha$ we let $U \alpha$ be the graph defined by $V(U \alpha)= U \times \alpha$ and
\[
(u,\lambda) \re c (u',\lambda') \tin U \alpha \quad \iff \quad \lambda > \lambda' \tor (\lambda = \lambda' \tand u \re c u' \tin U). 
\]
Note that if $U$ is well-monotone, then so is $U\alpha$ (with respect to the lexicographical order on $U \times \alpha$.)
In the terminology of Section~\ref{sec:lexico}, $U\alpha$ is the lexicographical product of $U$ and the (well-monotone) edgeless graph over $\alpha$.

The following very helpful result reduces the search for a well-monotone universal graph to that of a well-monotone almost universal graph.

\begin{lemma}\label{lem:rongeur_de_croutes}
Let $U$ be an almost $(\kappa,W)$-universal graph and let $|\alpha| \geq \kappa$.
Then $U\alpha$ is $(\kappa,W)$-universal.
\end{lemma}

\begin{proof}
Consider an infinite path $(u_0,\lambda_0) \re {c_0} (u_1,\lambda_1) \re {c_1} \dots$ in $U\alpha$.
Since $\lambda_0 \geq \lambda_1 \geq \dots$, it must be that this sequence is eventually constant by well-foundedness.
Therefore, some suffix $u_i \re {c_i} u_{i+1} \re{c_{i+1}} \dots$ defines a path in some copy of $U$, which implies that $c_i c_{i+1} \dots \in W$.
We conclude by prefix independence that $U\alpha$ indeed satisfies $W$.

Let $G$ be a graph of cardinality $< \kappa$ which satisfies $W$.
We construct by transfinite recursion an ordinal sequence of vertices $v_0,v_1 \dots \in V(G)$, where for each $\beta < \lambda$, $v_\lambda$ is not reachable from $v_\beta$ in $G$, together with a morphism $\phi_\lambda : G_\lambda \to U$, where $G_\lambda$ is the restriction of $G$ to vertices reachable from $v_\lambda$ but not from $v_\beta$ for $\beta < \lambda$.

Assuming the $v_\beta$'s for $\beta < \lambda$ are already constructed (this assumption is vacuous for the base case $\lambda=0$), there are two cases.
If all vertices in $G$ are reachable from some $v_\beta$, then the process stops.
Otherwise, we let $G_{\geq \lambda}$ be the restriction of $G$ to vertices not reachable from any $v_\beta$ for $\beta < \lambda$.
It is a nonempty graph of cardinality $< \kappa$ satisfying $W$; we let $v_\lambda$ be such that $G_{\geq \lambda}[v_\lambda] = G_\lambda$ has a morphism $\phi_\lambda$ towards $U$.

Since all the $G_\lambda$'s are nonempty, the process must terminate in $\lambda_0$ steps for $|\lambda_0| \leq |G| < \kappa = \alpha$.
Now observe that any edge in $G$ is either from $G_\beta$ to itself, for some $\beta \leq \lambda_0 < \alpha$, or from $G_\beta$ to $G_{\beta'}$ for $\beta' < \beta \leq \lambda_0 < \alpha$.
This proves that the map $\phi:V(G) \to V(U\alpha)$ defined by $\phi(v)=(\phi_\lambda(v),\lambda)$ where $v \in V(G_\lambda)$ is a morphism from $G$ to $U\alpha$.
\end{proof}

\paragraph{Co-B\"uchi games.}
The co-B\"uchi condition is defined over $C=\{\safe, \bad\}$ by
\[
\Cobuchi = \{w \in C^\omega \mid |w|_{\bad} < \infty\} = C^* \safe^\omega.
\]
It is prefix-independent, thus we aim to construct well-monotone graphs which satisfy $\Cobuchi$ and embed graphs satisfying $\Cobuchi$.
Given an ordinal $\alpha$ consider the graph $U_\alpha$ given over $V(U_\alpha)=\alpha=[0,\alpha)$ by
\[
\lambda \re c \lambda' \tin U_\alpha \quad \iff \quad \begin{array}{lcl}
c = \bad & \tand & \lambda > \lambda' \qquad \tor \\
c = \safe & \tand & \lambda \geq \lambda'.
\end{array}
\]
It is depicted in Figure~\ref{fig:cobuchi_construction}.

\begin{figure}[ht]
\begin{center}
\includegraphics[width=0.6\linewidth]{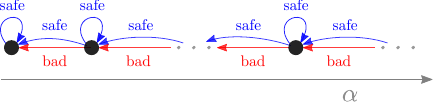}
\end{center}
\caption{The $\{\safe,\bad\}$-graph $U_\alpha$ defined with respect to the co-B\"uchi condition.}\label{fig:cobuchi_construction}
\end{figure}

\begin{lemma}\label{lem:non_uniform_construction_cobuchi}
For any ordinal $\alpha$, $U_\alpha$ is $(|\alpha|,\Cobuchi)$-universal.
\end{lemma}

We prove the result by applying the of almost universality technique outlined above.

\begin{proof}
Let $U$ be the well-monotone one-vertex graph with a $\safe$-loop; observe that $U_\alpha = U\alpha$.
Thus by Lemma~\ref{lem:rongeur_de_croutes}, it suffices to prove that $U$ is almost $(|\alpha|,\Cobuchi)$-universal.
We will in fact prove that $U$ is uniformly almost $\Cobuchi$-universal.
It is clear that $U$ satisfies $\Cobuchi$. 

Fix a graph $G$ satisfying $\Cobuchi$, and assume for contradiction that for all vertices $v \in V(G)$, there is a $\bad$-edge in $G[v]$.
Then one can construct by a quick induction a path with infinitely many $\bad$-edges in $G$; a contradiction.
Therefore, there is $v \in V(G)$ such that $G[v]$ has only $\safe$-edges, and thus the unique map $V(G[v]) \to V(U)$ defines a morphism $G \to U$, as required.
\end{proof}

\paragraph*{$K$-monotonicity.}
The $K$-monotone objective (for ``monotone'' in the sense of Kopczy\'nski) associated to a finite monotone $C$-graph $U$ is given by 
\[
    W = C^* W_0 \subseteq C^\omega,
\]
where $W_0$ is the set of colorations of infinite paths over $U$ (which, by monotone composition, coincides with the set of colorations from the maximal vertex in $U$).
Equivalently, $W$ is the set of colorations of the well-monotone graph $U \omega$.
Note that $W$ is $\omega$-regular and prefix-independent.

It is not hard to see that this definition corresponds to that of~\cite{Kopczynski06}.
The co-B\"uchi objective is an example, where $U$ is the one-vertex graph with a $\safe$-loop; one may generate many other examples by fixing $U$ to be any finite monotone graph (see Figure~\ref{fig:k-monotone}).

\begin{figure}[h]
\begin{center}
\includegraphics[width=0.72\linewidth]{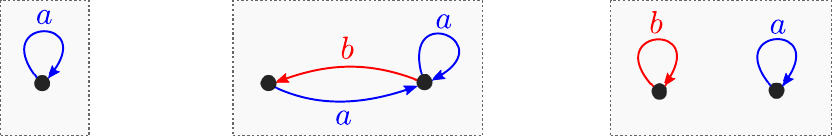}
\end{center}
\caption{Three finite $C$-monotone graphs for $C=\{a,b\}$; as always, edges following from monotone composition are not depicted.
The three graphs from left to right give rise to the three following $K$-monotone condition: $C^* a^\omega$ (a co-B\"uchi condition), $C^*(a + ba)^\omega$ (finitely many occurrences of the factor $bb$), and $C^*(a^\omega + b^\omega)$ (a union of two co-B\"uchi conditions).}\label{fig:k-monotone}
\end{figure}

Note that the union of two $K$-monotone objectives given by $U_1$ and $U_2$ is the $K$-monotone objective given by their directed sum $U_1 + U_2$ (see Section~\ref{sec:healing} for a formal definition of directed sums).

\begin{lemma}
Let $W \subseteq C^\omega$ be a $K$-monotone objective and let $U$ be the associated finite monotone $C$-graph.
For any ordinal $\alpha$, the well-monotone graph $U \alpha$ is $(|\alpha|,W)$-universal.
\end{lemma}

\begin{proof}
We show that $U$ is uniformly almost $W$-universal, which implies the result thanks to Lemma~\ref{lem:rongeur_de_croutes}.
First, it is clear that the coloration of any path in $U$ is a coloration in $U \omega$, thus $U \models W$.

We let $W_0 \subseteq W$ be the set of infinite colorations of $U$, and $W_0^\fin \subseteq C^*$ be the set of finite colorations of $U$.
Since $U$ is finite and has no sink, it holds that
\[
    \tag{$*$}
    w=w_0w_1 \dots \in W_0 \quad \iff \quad \forall k, w_0 \dots w_k \tin W_0^\fin.
\]
Now let $G$ be an arbitrary graph satisfying $W$.
We first claim that there must be a vertex $v \in V(G)$ such that $G[v] \models W_0$. 
Towards contradiction, assume otherwise: for all $v \in V(G)$, there is an infinite path $v \rp w$ such that $w \notin W_0$.
Thanks to $(*)$, this rewrites as a finite path $v \rp{w} v'$ with $w \notin W_0^\fin$.
Starting from any $v_0 \in V(G)$ we thus construct by a quick induction an infinite path
\[
    v_0 \rp {w_0} v_1 \rp {w_1} \dots \tin G,
\]
such that for all $i$, $w_i \notin W_0^\fin$.
Since $G \models W$, we then have $w_0 w_1 \dots \in W=C^* W_0$ thus for some $i$, $w_i w_{i+1} \dots$ is a coloration of $U$.
But this implies that $w_i \in W_0^\fin$, a contradiction.

Hence there is $v_0 \in V(G)$ such that $G[v_0] \models W_0$; there remains to define a morphism from $G[v_0]$ to $U$.
Note that for any $v \in G[v_0]$, it also holds that $G[v] \models W_0$.
Indeed, for any coloration~$w'$ from~$v'$, there is a finite word $w \in C^*$ (the coloration of a path from $v$ to $v'$) such that $ww'$ is a coloration from $v$ in $G$.
Thus $ww'$ is a coloration in $U$ (since $G[v] \models W_0$) and hence so is $w'$.

Now recall that if $u \geq u'$ in $U$, then by monotone composition, there are more colorations from $u$ than from $u'$ in $U$.
Let us then define a map $\phi: V(G[v_0]) \to V(U)$ by
\[
    \phi(v) = \min\{u \in U \mid \forall w \in C^*, v \rp w \tin G \implies u \rp w \tin U\}.
\]
In words, a vertex $v$ is mapped to the smallest position in $U$ which has all colorations from $v$.
Note that $\phi$ is well defined since for $v \in V(G[v_0])$ it holds that $G[v] \models W_0$; stated differently all colorations from vertices in $V(G[v_0])$ are colorations from $\max V(U)$ in $U$.

We now prove that $\phi$ defines a morphism $G[v_0] \to U$.
Let $e=v \re c v' \tin G[v_0]$.
Then for all colorations $w'$ from $v'$ in $G[v_0]$, $cw'$ is a coloration from $v$ in $G[v_0]$.
Therefore, $\phi(v)$ has all colorations $cw'$ in $U$ where $w'$ is a coloration from $v'$ in $U$.
Let $u'$ be the maximal $c$-successor of $\phi(v)$ in $U$; we prove that $\phi(v') \leq u'$ which implies that $\phi(v) \re c \phi(v')$ by monotone composition.

For this, we show that $u'$ has all colorations from $v'$, which gives the result by minimality of $\phi(v')$.
Let $w'$ be such a coloration.
We know that $cw'$ is a coloration from $\phi(v)$ in $U$: let $\phi(v) \re c u'' \rp{w'} \tin U$.
By maximality of $u'$ among $c$-successors of $\phi(v)$ we get $u' \geq u$, thus $u \rp{w'} \tin U$, as required.
This concludes the proof that $\phi$ is a morphism, thus $U$ is indeed uniformly almost $W$-universal, as required.
\end{proof}

Therefore $K$-monotone objectives are positional, as was established by Kopczy\'nski in~\cite{Kopczynski06}.

\subsection{Counter-based examples}\label{subsec:quantitative_examples}

We now discuss a few quantitative valuations.

\paragraph{Energy games.}
We start with the energy valuation, given over $C=\Z$ by
\[
\Energy(t_0t_1\dots)=\sup_{k} \sum_{i=0}^{k-1} t_i \in [0,\infty].
\]
We consider the graph $U$ over $V(U)=\omega$ given by
\[
u \re t u' \tin U \qquad \iff \qquad t \leq u - u' \in \Z.
\]
It is illustrated in Figure~\ref{fig:energy_construction}.

\begin{figure}[ht]
\begin{center}
\includegraphics[width=0.8\linewidth]{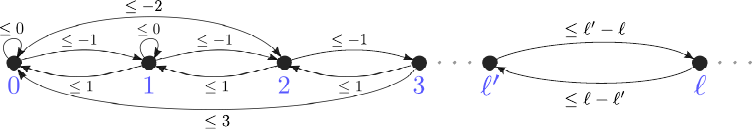}
\end{center}
\caption{The monotone $\Z$-graph $U$ corresponding to the $\Energy$ valuation. The names of the vertices are displayed in blue to improve readability.
Not all edges are depicted, we simply write $\re{\leq t}$ for the conjunction of $\re{t'}$ for all $t' \leq t$.}\label{fig:energy_construction}
\end{figure}

The usual order defines a well-order over $V(U)$ with respect to which monotonicity is easy to verify.
For each $u \in \omega$ the path $u \re {u} 0 \re 0 0 \re 0 \dots$ has value $u$, hence $\Energy_{U}(u) \geq u$.

Conversely consider an infinite path from $u_0 \in \omega$. It is of the form $u_0 \re {t_0} u_1 \re{t_1} u_2 \re {t_2} \dots$ with for all $i$, $t_i \leq u_{i} - u_{i+1}$.
Hence the partial sums are telescopic: we have for all $k$,
\[
\sum_{i=0}^{k-1} t_i \leq u_0 - u_k \leq u_0.
\]
Therefore it holds that for all $u \in \omega$ we have
\[
\Energy_U(u) = u.
\]
Energy games are similar to safety games in the sense that they have a uniformly universal well-monotone graph.

\begin{lemma}
\label{lem:uniform_construction_energy_games}
The well-monotone graph $U^{\top}$ is uniformly $\Energy$-universal.
\end{lemma}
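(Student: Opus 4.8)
The statement to prove is Lemma~\ref{lem:uniform_construction_energy_games}: the completely well-monotonic graph $\L^\top$ (the completion of the graph $\L$ over $L=\omega$ defined by $\ell \re t \ell' \iff t \leq \ell - \ell'$) is uniformly $\Energy$-universal, i.e.\ $(\kappa,\Energy)$-universal for every cardinal $\kappa$. By Lemma~\ref{lem:completion}, $\L^\top$ adds a top element $\top$ with all outgoing edges and no new colorations from the original vertices, so it suffices to exhibit, for an \emph{arbitrary} graph $H$ (of any cardinality), a $\Energy$-preserving morphism $\phi: H \to \L^\top$; this will in fact give uniform universality in one stroke. The key observation, already recorded in the excerpt, is that $\Energy_\L(\ell) = \ell$ for every $\ell \in \omega$, so the target values available in $\L$ are exactly the finite ones, and $\top$ will absorb every vertex of infinite energy.

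\textbf{Defining the morphism.} Given a graph $H=(V_H,E_H)$, I would set $\phi(v) = \Energy_H(v)$ whenever $\Energy_H(v) < \infty$, and $\phi(v) = \top$ otherwise. Since $\Energy_H(v) = \sup_{v \rp w \tin H} \Energy(w)$ takes values in $[0,\infty]$, this lands in $L^\top = \omega \cup \{\top\}$. The map is $\Energy$-preserving essentially by construction: for vertices of finite energy we have $\Energy_{\L^\top}(\phi(v)) = \Energy_\L(\Energy_H(v)) = \Energy_H(v)$ using the computation $\Energy_\L(\ell)=\ell$ from the excerpt, and for vertices of infinite energy both sides equal $\infty$ (the value $\top$ carries arbitrarily bad colorations). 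The nontrivial content is that $\phi$ is a \emph{morphism}, and this is where the telescoping structure does the work.

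\textbf{The morphism check — the main obstacle.} The heart of the argument is: for every edge $v \re t v' \tin H$, I must verify $\phi(v) \re t \phi(v') \tin \L^\top$. If $\phi(v)=\top$ this is immediate since $\top$ has all outgoing edges. Otherwise $\Energy_H(v) < \infty$; I claim then that $\Energy_H(v') \leq \Energy_H(v) - t$, which is exactly the edge condition $t \leq \phi(v)-\phi(v')$ defining $\L$ (and in particular forces $\Energy_H(v') < \infty$, so $\phi(v') \neq \top$ and lands in $\omega$). To see the claim, take any path $\pi': v' \rp{w'} \tin H$; prepending the edge gives a path $v \re t \pi'$ from $v$ in $H$ whose energy is $\Energy(t\, w') = \sup_k\bigl(t + \sum_{i<k} w'_i\bigr) \geq t + \Energy(w')$ when $t \geq 0$, and the bound $\Energy_H(v) \geq t + \Energy_H(v')$ follows by taking suprema over $\pi'$; the case $t < 0$ requires care because the running maximum may be attained at the empty prefix, but one checks directly that $\Energy(t\,w') \geq \max(0, t + \Energy(w'))$ always holds, which still yields $\Energy_H(v) \geq t + \Energy_H(v')$, i.e.\ $t \leq \phi(v) - \phi(v')$. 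This inequality is precisely the definition of $\phi(v) \re t \phi(v')$ in $\L$, completing the check. The delicate point to get right is this energy inequality across a single edge, distinguishing the signs of $t$ and handling the $\max(0,\cdot)$ clamping built into the $\Energy$ valuation and into the min-predecessor table $\rho(\ell',t)=\max(0,\ell'+t)$; everything else is a routine unwinding of the $\Energy$-preservation definition.
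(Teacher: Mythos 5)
Your proposal is correct and follows essentially the same route as the paper: map each vertex $v$ to $\Energy_H(v)\in\omega\cup\{\top\}$, get $\Energy$-preservation from the computation $\Energy_\L(\ell)=\ell$, and verify the morphism property by the single-edge inequality $\Energy_H(v)\geq\max(0,t+\Energy_H(v'))$. The only (cosmetic) differences are that you take suprema over paths from $v'$ where the paper picks a value-attaining path, and you note explicitly that $\phi(v)\neq\top$ forces $\phi(v')\neq\top$.
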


\begin{proof}
    Consider a graph $G$.
    We see the values in $G$ as defining a map from $V(G)$ into $V(U^\top)$, formally
    \[
    \begin{array}{llcl}
    \Energy_G :& V(G) & \to & V(U^\top) \\
    & v & \mapsto & \Energy_G(v),
    \end{array}
    \]
    where we identify $\top$ to $\infty$.
    
    The fact that it is $\Energy$-preserving follows from the fact that $\Energy_U(u)=u$, proven above.
    We prove that it is a morphism: consider an edge $e = v \re t v'$ in $G$.
    If $\Energy_G(v')=\top$ then $\Energy(v)=\top$ thus $\Energy_G(v) \re t \Energy_G(v')$ in $U^{\top}$.
    
    We assume otherwise and let $\pi'$ be a path from $v'$ in $G$ with value $\Energy(\pi')=\Energy_G(v')$ which we denote by $x' \in \omega$ for simplicity.
    Then $e\pi'$ defines a path from $v$ in $G$, therefore 
    \[
    \Energy_G(v) \geq \Energy(e\pi') = \max(0,t+x') \geq t + x',
    \]
    which rewrites as
    \[
    t \leq \Energy_G(v) - \Energy_G(v'),
    \]
    the wanted result.
\end{proof}

This implies thanks to Theorem~\ref{thm:structure_implies_positionality} that arbitrary energy games are positional.
Somewhat surprisingly, it appears that this result had not been formally established before, Lemma 10 in~\cite{BFLMS08} is stated over finite graphs\footnote{Otherwise, the result would not hold in any case, since it includes the opponent.}, whereas Corollary 8 in~\cite{CFH14} applies only to graphs of finite degree.

However, the opponent in an energy game can require arbitrary memory even over countable graphs of degree 2 and with bounded weights (see Figure~\ref{fig:infinite_energy1}).

\begin{figure}[ht]
\begin{center}
\includegraphics[width=0.85\linewidth]{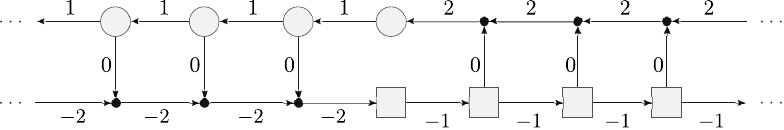}
\end{center}
\caption{An energy game in which Adam (squares) requires (infinite) memory to ensure value $\infty$ from any vertex.}\label{fig:infinite_energy1}
\end{figure}

\paragraph{Qualitative energy games.}

We make a quick detour via qualitative energy games, defined over $C=\Z$ by the condition
\[
    \QEnergy = \{t_0t_1\dots \in \Z^\omega \mid \sup_k \sum_{i=0}^{k-1} t_i < \infty\}.
\]
In a given graph $G$ and for a given vertex $v$, there is a discrepancy between $v$ having infinite quantitative energy (the supremum over paths from $v$ is bounded) and not satisfying the quantitative energy objective (all paths from $v$ have finite energy); although the former implies the latter.
Figure~\ref{fig:unbounded_energy} gives an example where $v$ satisfies $\QEnergy$ but has infinite energy value.

\begin{figure}[ht]
    \begin{center}
    \includegraphics[width=0.25\linewidth]{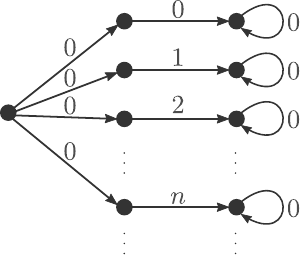}
    \end{center}
    \caption{A graph in which the vertex on the left has infinite quantitative energy, but satisfies the qualitative energy objective $\QEnergy$.}\label{fig:unbounded_energy}
\end{figure}

In particular, the graph $U^\top$ defined above is not $\QEnergy$-universal.
However, we prove that $U$ is uniformly almost $\QEnergy$-universal, which implies the following result.

\begin{lemma}
The well-monotone graph $U \alpha$ is $(|\alpha|, \QEnergy)$-universal.
\end{lemma}

\begin{proof}
We prove that $U$ is uniformly almost $\QEnergy$-universal, which implies the result by Lemma~\ref{lem:rongeur_de_croutes}.
We proved above that vertex $u \in V(U)=\omega$ has $\Energy$-value $u$ in $U$, therefore $U$ indeed satisfies $\QEnergy$.

Let $G$ be an arbitrary graph satisfying $\QEnergy$; we prove that there exists a vertex $v \in V(G)$ such that $\Energy_G(v) < \infty$.
Assume otherwise: for each $v \in V(G)$ and for each $N \in \omega$, there exists a finite path $\pi_v$ from $v$ whose weights sum up to $\geq N$.
Concatenating such paths we construct an infinite path with infinite energy in $G$, a contradiction.
Now if $v$ has finite energy in $G$, all vertices reachable from $v$ in $G$ also have finite energy, thus $G[v]$ embeds into $U$ by Lemma~\ref{lem:uniform_construction_energy_games}.
\end{proof}

Therefore qualitative energy games are also positional over arbitrary arenas.

\paragraph{Boundedness games.}

We now discuss boundedness games.
This class of games is defined over the set of colors
\[
    C = \{f : \omega \to \omega \mid f \text{ is monotone}\},
\]
by the objectives
\[
\BoundedCounter_N = \{f_0f_1 \dots \in C^\omega \mid \forall k, f_k(f_{k-1}(\dots(f_0(0))\dots)) \leq N\},
\]
where $N \in \omega$ is a fixed bound.
Intuitively, starting from $0$, a counter is updated along the path by applying function $f$ whenever an $f$-edge is seen.
Functions labelling edges are arbitrary monotone maps, for instance resetting, squaring, or raising to the next prime number are allowed.

Colcombet, Fijalkow and Horn~\cite{CFH14} have established that $\BoundedCounter_N$ is positionally determined over graphs of finite degree, we extend this result to arbitrary graphs.
Note that $\BoundedCounter_N$ is a prefix-increasing objective by monotonicity of the maps in $C$, therefore we are looking to construct a well-monotone graph which satisfies $\BoundedCounter_N$ and embeds small graphs satisfying $\BoundedCounter_N$.

We let $U_N$ be the graph over $V(U_N)=[0,N]$ given by
\[
u \re f u' \tin U_N \qquad \iff \qquad f(u) \leq u'.
\]
The graph $U$ is monotone with respect to the inverse order over $V(U_N)=[0,N]$, with minimal element $N$ and maximal element $0$.
It is well-monotone since all finite orders are well founded.
Note that fixing the bound $N$ is required for well foundedness; defining $U$ over $\omega$ as we did before fails when considering the dual ordering.

\begin{lemma}\label{lem:uniform_construction_boundedness}
For all $N \in \omega$, the well-monotone graph $U_N$ is uniformly $\BoundedCounter_N$-universal.
\end{lemma}

\begin{proof}
    We first show that $U_N$ satisfies $\BoundedCounter_N$: let $\pi : u_0 \re{f_0} u_1 \re{f_1} \dots$ be an infinite path in $U_N$.
    By definition it holds for all $i$ that $f_i(u_i) \leq u_{i+1}$ which implies by monotonicity that for all~$k$,
    \[
    f_k(f_{k-1}(\dots(f_0(0))\dots)) \leq u_{k+1} \leq N,
    \]
    the wanted result.
    
    We define a valuation\footnote{Our proof actually shows that $U_N^\top$ is universal with respect to this valuation, which is a bit more precise than the statement of the theorem.}
    \[
    \begin{array}{lrcl}
    \val_N :& C^\omega &\to& [0,N] \cup \{\bot\} \\
    & f_0 f_1 \dots & \mapsto & \max \{i \in [0,N] \mid \forall k, f_k(f_{k-1}(\dots(f_0(i))\dots)) \leq N\}.
    \end{array}
    \]

    The (complete) linear order over $[0,N] \cup \{\bot\}$ is again the reverse order, in particular $\bot$ is the maximal element, and should be thought of as ``right after zero''.
    For clarity, we still use $\geq$, $\min$ and $\max$ for the usual ordering over integers; it is understood in the definition above that $\max \varnothing = \bot$.
    
    Consider a $C$-graph $G$ which satisfies $\BoundedCounter_N$, we prove that $\val_G : V(G) \to [0,N]$ which by definition assigns $\min_{v \rp w} \val(w)$ to $v \in V(G)$, defines a morphism $G \to U_N$.
    Let $e_0=v_0 \re{f_0} v_1$ in $G$ and let $\pi_1 = v_1 \re{f_1} v_2 \re{f_2} \dots$ be an infinite path from $v_1$ in $G$ with minimal valuation $i_1 = \val(\pi_1)=\val_G(v_1)$.
    
    Then $\pi_0=e_0 \pi_1$ is a path from $v_0$ in $G$ thus $\val_G(v) \leq \val(\pi_0)$ which we denote by $i_0$.
    Note that both $i_0$ and $i_1$ are $\geq 0$ since $G$ satisfies $\BoundedCounter_N$.
    We have by definition
    \[
    i_0 = \max\{i \in [0,N] \mid \forall n, f_n(f_{n-1}(\dots(f_0(i))\dots)) \leq N\},
    \]
    hence for all $n$ it holds that $f_n(f_{n-1}(\dots(f_0(i_0))\dots)) \leq N$.
    Since
    \[
    i_1 = \max\{i \in [0,N] \mid \forall n, f_n(f_{n-1}(\dots(f_1(i))\dots)) \leq N\},
    \]
    we have in particular that $f_0(i_0) \leq i_1 = \val_G(v_1)$.
    By monotonicity of $f_0$, we then obtain $f_0(\val_G(v_0)) \leq \val_G(v_1)$, thus 
    \[
    \val_G(v_0) \re{f_0} \val_G(v_1)
    \]
    belongs to $U_N$, which concludes the proof.
    \end{proof}

This establishes that for all $N \in \omega$, $\BoundedCounter_N$ is positional (over arbitrary arenas).

\subsection{Finitary parity games}\label{subsec:finitary_parity}

We now study finitary parity games, introduced by Chatterjee, Henzinger and Horn~\cite{CHH08}.
It was shown by Chatterjee and Fijalkow~\cite{CF13} that finitary Büchi games are positional over arbitrary game graphs, and that finitary parity games are finite-memory determined.
We strengthen this result by establishing that finitary parity games are in fact positional over arbitrary game graphs.

\paragraph*{Definitions.}
Throughout this section, we let $d \in \omega$ be a fixed even number, and let $C=[0,d] \subseteq \omega$.
In this context, colors are usually called priorities.
Let $w=p_0p_1 \dots \in [0,d]^\omega$.
An occurrence $i$ of an odd priority $p_i$ in $w$ is called a request.
We say that a request $i$ is granted at time 
\[
    \min\{i' > i \mid p_i' \text{ is even and } \geq p_i\} \in [i+1,\infty].
\]
For a request $i$ granted at time $i'$, we let $t_i = i' - i \in [1,\infty]$ denote the (potentially infinite) delay within which the request is granted.
We then let $\req(w)$ be the finite or infinite word over $[1,\infty]$ obtained by concatenating all $t_i$'s (in order), where $i$ is a request in $w$.

Let us give a few examples:
\[
    \begin{array}{lcl}
        \req(33112^\omega) & = & \infty \infty 2 1 \\
        \req(12102100210002100002...) & = & 1 2 3 4 \dots \\
        \req(3(21)^\omega) & = & \infty 1^\omega.
    \end{array}
\]
Given a fixed bound $N \geq 1$, we let $\Grant_N$ be the objective comprised of words where all requests are granted within time $\leq N$, formally
\[
    \Grant_N = \{w \in [0,d]^\omega \mid \sup \req(w) \leq N\}.
\]
We also let
\[
    \Grant = \{w \in [0,d]^\omega \mid \sup \req(w) < \infty\} = \bigcup_{N \in \omega} \Grant_N.
\]
Objectives $\Grant_N$ as well as $\Grant$ are prefix-increasing, but not prefix-decreasing (since one may append a request which is never granted).
Recall that in this case, we look for (well-monotone) universal graphs in the sense of Colcombet and Fijalkow (satisfying the objective, and embedding small graphs satisfying the objective).
The finitary parity objective is the prefix-independent variant of $\Grant$, defined by 
\[
    \FinParity = [0,d]^* \Grant = \{w \in [0,d]^\omega \mid \req(w) \text{ is finite } \tor \limsup \req(w) < \infty\}.
\]
In words, Eve should ensure that eventually, all requests are granted.

\begin{figure}[h]
\begin{center}
\includegraphics[width = 0.6\linewidth]{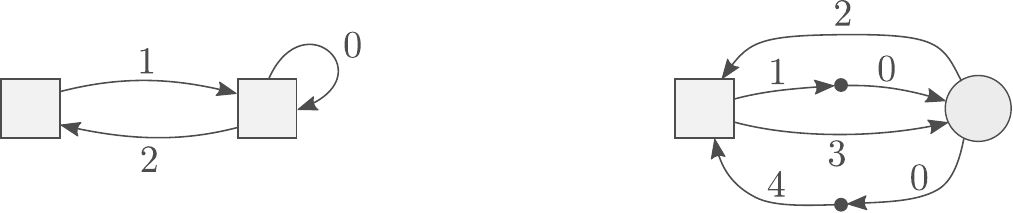}
\end{center}
\caption{On the left, a game where Eve loses the finitary parity game even though she wins the parity game (see Section~\ref{sec:lexico} for a definition of parity games). On the right, a game for which Eve wins the objective $\Grant_2$, but cannot do so with a positional strategy.}\label{fig:finpar1}
\end{figure}

\paragraph*{Constructions.}
We define a finite graph $U_N$ over 
\[
    V(U_N) = \{(m,p) \in [1,N] \times [0,d] \mid p \text{ is odd}\} \cup \{(0,0)\}.
\]
It is illustrated in Figure~\ref{fig:construction_fin_parity}.

Let us first define a linear order over $V(U_N)$ as follows:
\[
    (m,p) \geq (m',p') \quad \iff \quad p < p' \tor [p' = p \tand m \leq m'].
\]
Then edges in $U_N$ are defined by
\[
    (m,p) \re q (m',p') \tin U_N \quad \iff \quad \begin{array}{llr}
        & q \text{ is even and } \geq p \qquad\qquad\qquad\qquad\qquad\qquad & (a) \\
        \tor & q \leq p \tand (m,p) > (m',p') & (b)\\
        \tor & p' \geq q > p. & (c)
    \end{array}
\]

\begin{figure}[h]
    \begin{center}
    \includegraphics[width=\linewidth]{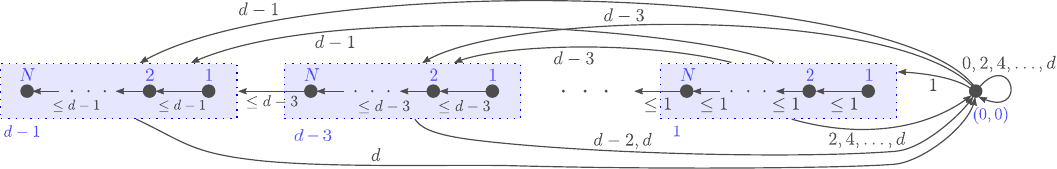}
    \end{center}
    \caption{An illustration of the graph $U_N$; not all edges are depicted for readability.}\label{fig:construction_fin_parity}
\end{figure}

For convenience, we call these edges of type $(a),(b)$ or $(c)$ (formally, an edge can have several types at once).
Note that there are $d$-edges (of type $(a)$) between each pair of vertices, in particular $U_N$ is indeed a graph.

\begin{lemma}
For any $N \geq 1$, the graph $U_N$ is monotone.
\end{lemma}

It is not hard to see on Figure~\ref{fig:construction_fin_parity} that the lemma holds; we spell out a proof for completeness.

\begin{proof}
First, it is a direct check that $(m,p) \re q (m',p') > (m'',p'')$ implies $(m,p) \re q (m'',p'')$ in $U_N$ (it even holds that the type of the edge is preserved).
The second case is more tedious.
Assume $(m,p) > (m',p') \re q (m'',p'')$; we aim to prove that $(m,p) \re q (m'',p'') \tin U_N$.
\begin{itemize}
    \item If $q$ is even and $\geq p'$. Since $p' \geq p$, we have an edge $(m,p) \re q (m'',p'')$ of type $(a)$.
    \item If $q \leq p'$ and $(m',p') > (m'',p'')$. If $q \leq p$ then since $(m,p)>(m',p')>(m'',p'')$, we have an edge $(m,p) \re q (m'',p'')$ of type $(b)$. Otherwise, we have $p'' \geq p' \geq q > p$, so $(m,p) \re q (m'',p'')$ is an edge of type $(c)$.
    \item If $p'' \geq q > p'$ then we also have an edge of type $(c)$ since $p' \geq p$.\qedhere
\end{itemize}
\end{proof}

We now prove the crucial result.
In some sense, it states that $U_N$ is a good approximation of a $\Grant_N$-universal graph.

\begin{lemma}\label{lem:approx_univ}
Let $N \geq 1$. It holds that
\begin{itemize}
\item $U_N \models \Grant_{dN/2}$; and
\item for all graphs $G$ satisfying $\Grant_N$ there is a morphism $G \to U_N$.
\end{itemize}
\end{lemma}

Before proving the lemma, let us remark that one cannot hope to find well-monotone $\Grant_{N}$-universal graphs.
The reason is that $\Grant_N$ is not positional, an example witnessing non-positionality is drawn in Figure~\ref{fig:finpar1}.
Therefore, having a more relaxed approximation statement as above is required.

\begin{proof}
We prove the two items separately, starting from the first.
We must show that for all paths in $U_N$, requests are granted within at most $dN$ steps.
Consider a finite path $(m_0,p_0) \re {q_0} \dots \re {q_{k-1}} (m_k,p_k)$ in $U_N$ such that $q_0$ is odd and none of $q_1,...,q_{k-1}$ are even and $\geq q_0$ (that is, a request is opened and not closed after $k-1$ time steps).
We prove that $k \leq dN/2$.
Remark that for any edge $(m,p) \re q (m',p')$ which is not of type $(a)$ we have $p' \geq q$ and $(m,p) > (m',p')$.

Let us prove by a quick induction that for all $i \in \{1, \dots, k-1\}$, $p_i$ is $\geq q_0$ and $(m_i,p_i) > (m_{i+1},p_{i+1})$.
This is clear for $p_1$ since $(m_0,p_0) \re {q_0} (m_1,p_1)$ is not of type $(a)$.
Now assuming $p_i \geq q_0$, since moreover $q_i$ is not [even and $\geq q_0$], the edge $(m_i,p_i) \re {q_i} (m_{i+1},q_{i+1})$ is not of type~$(a)$, therefore we conclude by the above remark.

Therefore $(m_0,p_0) > (m_1,p_1) > \dots > (m_k,p_k)$ which implies that $k + 1 \leq |U_N| = dN/2+1$, the wanted result.

\vskip1em
We now prove the second item.
Let $G$ be a graph satisfying $\Grant_N$; we aim to construct a morphism $G \to U_N$.
Intuitively, this requires understanding, given $v \in V(G)$, what requests can be currently opened on a path when it visits $v$, and therefore we should look at the past of $v$.

For an odd priority $p$, we say that $p$ can be open at $v \in V(G)$ if there exists a path $v_0 \re {p_0} v_1 \re {p_1} \dots \re {p_{m-1}} v_m = v$ such that $p_0 = p$ and none of $p_1, \dots, p_{k-1}$ are even and $\geq p_0$.
Such a path is called a realizing path for $p$ and $v$.
Note that the length of a realizing path cannot exceed $N$ since $G$ satisfies $\Grant_N$.

We now define $\phi:V(G) \to V(U_N)$ by setting $\phi(v)$ to be $(0,0)$ if no odd priority can be open at $v$, and otherwise $\phi(v) = (m,p)$, where $p$ is the maximal odd priority that can be open at $v$, and $m$ is the maximal size of a realizing path for $p$ and $v$.
Stated differently, $\phi(v)$ is the minimal (with respect to the order over $V(U_N)$) $(m,p)$ such that $v$ has a realizable path of length $m$ for $p$, and $(0,0)$ if there are none.
We claim that $\phi$ defines a morphism $G \to U_N$.

Let $e = v \re q v'$ be an edge in $G$ and denote $\phi(v) = (m,p)$ and $\phi(v')=(m',p')$.
We let $\pi = v_0 \re {p_0=p} \re \dots \re {p_{m-1}} v_m=v$ be a minimal realizable path for $p$ and $v$ if it exists (that is, if $(m,p) \neq (0,0)$), and the empty path otherwise.
We distinguish three cases.
\begin{itemize}
\item If $q$ is even and $\geq p$, then $(m,p) \re q (m',p')$ is an edge of type $(a)$ in $U_N$, regardless of $(m',p')$.
\item If $q \leq p$, then $\pi e$ is a realizable path for $p$ and $v'$ of length $m+1$, therefore $m+1 \leq N$ and $(m,p) > (m+1,p) \geq \phi(v')$, thus $(m,p) \re q (m',p')$ is an edge of type $(b)$.
\item If $q>p$ and $q$ and we are not in the first case, then $q$ must be odd therefore $e$ is a realizable path for $q$ and $v'$ of length $1$.
Hence $p' \geq q > p$, and $(m,p) > (m+1,p)$ is an edge of type~$(c)$. \qedhere
\end{itemize}
\end{proof}

We now let $U$ be the directed sum of the $U_N$'s, formally it is defined over
\[
    \begin{aligned}
    V(U) &= \{(m,p,N) \in \omega^3 \mid [p \text{ is odd and } 1 \leq m \leq N] \tor [m=p=0 \tand N \geq 1]\} \\
    &= \bigsqcup_{N \geq 1} V(U_N) \times \{N\},
    \end{aligned}
\]
which is well-ordered by the lexicographic product over $V(U_N) \times [1,\infty)$, and given by edges
\[
    (m,p,N) \re q (m',p',N') \quad \iff \quad N>N' \tor [N=N' \tand (m,p) \re q (m',p')\tin U_N].
\]
Monotonicity of $U$ follows directly from monotonicity of the $U_N$'s.
We have the following result.

\begin{theorem}
    For every ordinal $\alpha$, the well-monotone graph $U\alpha$ is $(|\alpha|, \FinParity)$-universal.
\end{theorem}

This implies that $\FinParity$ is positional thanks to Theorem~\ref{thm:structure_implies_positionality}.

\begin{proof}
We prove that $U$ is uniformly almost $\FinParity$-universal, and the result follows from Lemma~\ref{lem:rongeur_de_croutes}.
First, let us show that $U \models \FinParity$: an infinite path in $U$ has a suffix in some $U_N$, which satisfies $\Grant_{dN/2}$ by Lemma~\ref{lem:approx_univ}; therefore it satisfies $\FinParity$ by prefix-independence.

Pick an arbitrary graph $G$ satisfying $\FinParity$, and assume towards contradiction that for all $v \in V(G)$ and for all $N \geq 1$, $G[v]$ does not satisfy $\Grant_N$: there is a vertex $v'$ reachable from~$v$ and a path
\[
    \label{eq:3}\tag{$*$}
    v'=v_0 \re {p_0} v_1 \re {p_1} \dots \re{p_N} v_{N+1}
\]
in $G$ such that $p_0$ is odd and $p_1,\dots,p_N$ are not [even and $\geq p_0$].
Then by concatenating such paths, we may easily construct a path in $G$ which does not satisfy $\FinParity$, we now give details for completeness.

Start from any $v_0 \in V(G)$, and let $v'_0$ be reachable from $v_0$ in $G$, say $v_0 \rp{w_0} v'_0$, and $v'_0 \rp{w'_0} v_1$ be a path as in $(*)$ with $N=1$.
Then iterate from $v_1$ with $N=2$: we get $v_1 \rp{w_1} v'_1 \rp{w'_1} v_2$ where the first letter of $w'_1$ defines an occurrence which is not closed after $2$ steps.
We continue this process for $N=3,4,\dots$ and get an infinite path coloration $w=w_0 w'_0 w_1 w'_1 \dots $ in $G$ such that $\limsup \req(w) = \infty$, a contradiction.

Therefore there is $v \in V(G)$ and $N \geq 1$ such that $G[v] \models \Grant_N$, so by Lemma~\ref{lem:approx_univ}, $G[v] \to U_{N} \to U$.
We conclude that $U$ is uniformly almost $\FinParity$-universal as required.
\end{proof}

\section{Closure Under Lexicographical Products}\label{sec:lexico}

In this section, we show how our characterization can be exploited to derive a new closure property, which applies to prefix-independent objectives.

\subsection{Definitions and statement of the result}

Let us start by defining lexicographical products of prefix-independent objectives (this definition does not make sense for prefix-dependent objectives).

\paragraph{Product of objectives.}
We consider two prefix-independent objectives $W_1 \subseteq C_1^\omega$ and $W_2 \subseteq C_2^\omega$, where $C_1$ and $C_2$ are disjoint sets of colors.
We let $C=C_1 \sqcup C_2$ and for $w \in C^\omega$ we let $w_1 \in C_1^{\leq \omega}$ and $w_2 \in C_2^{\leq \omega}$ be the finite or infinite words obtained by restricting $w$ to colors in $C_1$ or in $C_2$, respectively.
Note that if $w_2$ is finite then $w_1$ is infinite.

We define the \defin{lexicographical product} of $W_1$ and $W_2$ by
\[
W_1 \lex W_2= \left\{w \in C^\omega \left| \begin{array}{l}
w_2 \text{ is infinite and } w_2 \in W_2 \tor \\
w_2 \text{ is finite and } w_1 \in W_1
\end{array}
\right.
\right\}.
\]
Let us stress the fact that this operation is not commutative; intuitively, more importance is given here to $W_2$.
Note that this operation is however associative, and given $W_1,\dots,W_h$ respectively over pairwise disjoint sets of colors $C_1,\dots, C_h$, we have
\[
    W_1 \lex \dots \lex W_h = \{w \in C^\omega \mid w_p \in W_p \text{ where } p \text{ is maximal such that } w_p \text{ is infinite}\},
\]
where $C = C_1 \sqcup \dots \sqcup C_h$.

\paragraph{Product of monotone graphs.}
We now consider two well-monotone graphs $G_1$ and $G_2$, respectively over colors $C_1$ and $C_2$.
We define their lexicographical product $G = G_1 \lex G_2$ to be given by $V(G) = V(G_1) \times V(G_2)$ and for all $c_2 \in C_2$ and $c_1 \in C_1$,
\[
    \begin{array}{rcl}
    (v_1,v_2) \re{c_2} (v'_1,v'_2) \tin G & \quad \iff \quad & v_2 \re{c_2} v_2' \tin G_2 \\
    (v_1,v_2) \re{c_1} (v'_1,v'_2) \tin G & \quad \iff \quad & v_2 > v_2'  \tor [v_2 = v_2' \tand v_1 \re{c_1} v_1' \tin G_1]
    \end{array}
\]
Naturally, $V(G)$ is equipped with the lexicographical well-ordering given by
\[
(v_1,v_2) \geq (v'_1,v'_2) \quad \iff \quad \begin{array}{l}
v_2 > v'_2  \tor [v_2 = v'_2 \tand v_1 \geq v'_1].
\end{array}
\]
As expected, we have the following result, which is a direct check.

\begin{lemma}
The graph $G=G_1 \lex G_2$ is well-monotone.
\end{lemma}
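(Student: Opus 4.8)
The plan is to verify the two defining conditions of well-monotonicity directly: that the lexicographic order $\geq$ on $L = L_1 \times L_2$ is a well-order, and that the edge relations of $\L$ satisfy left- and right-composition with respect to $\geq$. Since $\L_1$ and $\L_2$ are well-monotonic, $\geq$ on $L_1$ and on $L_2$ are well-orders satisfying composition, and I will reduce each claim about $\L$ to the corresponding property of $\L_1$ or $\L_2$.

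For well-orderedness, linearity of the lexicographic order is immediate, so the point is well-foundedness. Given a nonempty $S \subseteq L$, I first take the $\geq$-minimal second coordinate $\ell_2^0$ occurring in $S$ (which exists since $L_2$ is well-ordered), and then, among the elements of $S$ whose second coordinate equals $\ell_2^0$, I take the $\geq$-minimal first coordinate $\ell_1^0$ (which exists since $L_1$ is well-ordered). Since the second coordinate dominates the order, $(\ell_1^0, \ell_2^0)$ is the minimum of $S$.

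For composition I treat the two color classes separately, using repeatedly that $(\ell_1,\ell_2) \geq (m_1,m_2)$ entails $\ell_2 \geq m_2$, with equality forcing $\ell_1 \geq m_1$. For a color $c_2 \in C_2$, an edge in $\L$ depends only on the second coordinates, so left- and right-composition follow immediately from the corresponding properties of $\L_2$ applied to $\ell_2 \geq m_2 \re{c_2} \ell'_2$ (resp.\ to $\ell_2 \re{c_2} m_2 \geq \ell'_2$). For a color $c_1 \in C_1$ the edge condition mixes a strict drop in the second coordinate with a $\L_1$-edge in the first, and here I do a case analysis: when the relevant second coordinates are strictly ordered the $\L$-edge is already witnessed by the drop alone, and when they are equal I fall back on left- (resp.\ right-) composition in $\L_1$ applied to the first coordinates, which is exactly where the implication $\ell_2 = m_2 \Rightarrow \ell_1 \geq m_1$ (resp.\ its right-composition analogue) is used.

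The only delicate point is this last case: one must carefully track the boundary situations in which the dominant (second) coordinates coincide, since it is precisely there that monotonicity of $\L$ is inherited from $\L_1$ rather than from the order on $L_2$. Everything else is routine bookkeeping, so I expect no genuine obstacle, only the need to keep the four sub-cases (two color classes times left/right) cleanly separated.
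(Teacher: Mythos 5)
Your proof is correct and is exactly the direct verification the paper leaves implicit (the lemma is stated there without proof, as routine): well-foundedness of the lexicographic order via minimizing the dominant second coordinate first, and the four composition cases, with the only nontrivial point being the fall-back to composition in $\L_1$ when the second coordinates coincide. Nothing is missing.
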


\paragraph*{Main result.}
We may now state our main result in this section; it is proved at the end of the section.

\begin{theorem}\label{thm:universality_lexicographical_product}
Let $W_1 \subseteq C_1^\omega$, $W_2 \subseteq C_2^\omega$ be prefix-independent conditions with $C_1 \cap C_2 = \varnothing$, let $\kappa$ be a cardinal number, and assume that the graphs $U_1$ and $U_2$ are $(\kappa,W_1)$ and $(\kappa,W_2)$-universal, respectively.
Then $U_1 \lex U_2$ is $(\kappa,W_1 \lex W_2)$-universal.
\end{theorem}

In the statement above, it is actually required that $U_1$ and $U_2$ also embed all pregraphs of size $< \kappa$ satisfying $W_1$ and $W_2$.
By Lemma~\ref{lem:pregraphs}, for nonempty graphs and thanks to prefix-independence this does not make any difference (nonempty universal graphs also embed pregraphs).
However, there is a subtlety when it comes to the trivially losing condition (which has an empty universal graph), details are discussed below.

We obtain the wanted closure property by combining Theorem~\ref{thm:universality_lexicographical_product} with our characterization result.

\begin{corollary}\label{cor:closure_lexico}
The class of positional prefix-independent objectives admitting a neutral color is closed under lexicographical product.
\end{corollary}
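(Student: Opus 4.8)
The plan is to verify the three membership conditions for the class separately: positionality, prefix-invariance, and existence of a strongly neutral color. Write $W_1 \subseteq C_1^\omega$ and $W_2 \subseteq C_2^\omega$ for the two given objectives, with strongly neutral colors $\varepsilon_1, \varepsilon_2$, and let $W \subseteq C^\omega$ be their lexicographical product. For positionality I would push the characterization through the product construction: fix a cardinal $\kappa$; by Corollary~\ref{cor:positionality_implies_structure}, applied to $W_1$ and $W_2$ (which is legitimate precisely because each admits a strongly neutral color), there are completely well-monotonic $(\kappa,W_1)$- and $(\kappa,W_2)$-universal graphs $\L_1, \L_2$. By the lemma preceding Theorem~\ref{thm:universality_lexicographical_product} their lexicographic product $\L$ is well-monotonic, and by Theorem~\ref{thm:universality_lexicographical_product} it is $(\kappa,W)$-universal. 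Since the factors are completely well-monotonic (their orders are non-limit ordinals), so is $\L$, with min-predecessors lifting coordinatewise; alternatively one completes $\L$ via Lemma~\ref{lem:completion}, universality being preserved in this qualitative prefix-invariant setting by Lemma~\ref{lem:universality_for_prefix_increasing}. As this yields a completely well-monotonic $(\kappa,W)$-universal graph for every $\kappa$, Theorem~\ref{thm:structure_implies_positionality} gives positionality of $W$.

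Prefix-invariance of $W$ is a direct check. Given $w \in C^\omega$ and $u \in C^*$, the projections of $uw$ are $u_1 w_1$ and $u_2 w_2$, which differ from those of $w$ only by the finite prefixes $u_1, u_2$. In particular $w_2$ is infinite if and only if $(uw)_2$ is, so $uw$ and $w$ fall in the same case of the definition of $W$; and within that case their membership coincides by prefix-invariance of $W_2$ (when $w_2$ is infinite) or of $W_1$ (when $w_2$ is finite, whence $w_1$ is infinite).

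For the strongly neutral color I would show that $\varepsilon_1$ --- the neutral color of the subordinate objective --- works. For neutrality, inserting copies of $\varepsilon_1 \in C_1$ into $w$ leaves the $C_2$-projection $w_2$ untouched; hence it changes neither which case of the definition applies nor, when $w_2$ is infinite, the value $[w_2 \in W_2]$, while when $w_2$ is finite it modifies $w_1$ only by $\varepsilon_1$-insertions and so preserves $[w_1 \in W_1]$ by neutrality of $\varepsilon_1$ for $W_1$. For eventual goodness, observe that $\varepsilon_1^\omega \in W$ if and only if $\varepsilon_1^\omega \in W_1$, which holds because $\varepsilon_1$ is eventually good for Eve in $W_1$ (so $\varepsilon_1^\omega$ realises the optimal $W_1$-value); combined with prefix-invariance this shows that, provided $W_1$ is non-trivial, $u\varepsilon_1^\omega$ attains the infimum $\inf_v \val(uv)$ after every prefix $u$, since Eve can always fall back to winning through $W_1$.

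The main obstacle is exactly the choice and verification of the strongly neutral color. The tempting candidate $\varepsilon_2$, coming from the dominant objective, is \emph{not} neutral for $W$: inserting infinitely many copies of $\varepsilon_2 \in C_2$ can turn a finite projection $w_2$ into an infinite one, switching $W$ from deciding by $W_1$ to deciding by $W_2$ and thereby altering the value. Identifying $\varepsilon_1$ as the correct choice, and checking that Eve's fall-back through $W_1$ meets the optimum, is the delicate step; the remaining pieces --- prefix-invariance and the positionality pipeline --- are routine once the universal-graph machinery of the previous sections is in place, the only bookkeeping being to turn the product graph into a completely well-monotonic one before invoking Theorem~\ref{thm:structure_implies_positionality}.
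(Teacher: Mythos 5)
Your proposal is correct and follows essentially the same route as the paper: Corollary~\ref{cor:positionality_implies_structure} for the factors, Theorem~\ref{thm:universality_lexicographical_product} for the product, Theorem~\ref{thm:structure_implies_positionality} for positionality, and the choice of $\varepsilon_1$ (the subordinate objective's neutral color) as the strongly neutral color for $W$. You are in fact more careful than the paper, which merely asserts the claim about $\varepsilon_1$ without proof: your verification is right, your observation that $\varepsilon_2$ fails neutrality is correct, and the caveat you flag (eventual goodness of $\varepsilon_1$ breaks when $W_1=\varnothing$ but $W_2\neq\varnothing$) points at a genuine degenerate case that the paper silently overlooks.
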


\begin{proof}[Proof of Corollary~\ref{cor:closure_lexico}]
Let $W_1$ and $W_2$ be such objectives and let $\kappa$ be a cardinal.
By Corollary~\ref{cor:positionality_implies_structure}, there are well-monotone graphs $U_1$ and $U_2$ which are respectively $(\kappa,W_1)$ and $(\kappa,W_2)$-universal.
By Theorem~\ref{thm:universality_lexicographical_product}, their lexicographical product $U_1 \lex U_2$ is $(\kappa,W_1 \lex W_2)$-universal.
By Theorem~\ref{thm:structure_implies_positionality}, this yields positionality of $W_1 \lex W_2$.
Note that the neutral color $\eps_1  \in C_1$ with respect to $W_1$ is also neutral with respect to $W$.
\end{proof}

To the best of our knowledge, this is the only known closure property for positional objectives.
We do not know whether it is possible to derive Corollary~\ref{cor:closure_lexico} without introducing well-monotone graphs.

\subsection{Examples of lexicographic products}

Before going on to the proof, we discuss a few examples.
First, observe that our notation $U\alpha$ coincides with the lexicographical product $U \otimes \alpha$, where $\alpha$ is the edgeless ordered pregraph over $\alpha$.

\paragraph{The two trivial conditions.}
Let us discuss the two conditions over one-letter alphabets.
We let $\TW(c)$ and $\TL(c)$ respectively denote the trivially winning and losing conditions over $C=\{c\}$, given by
\[
    \TW(c) = \{c^\omega\} \qquad \tand \qquad \TL(c) = \varnothing.
\]
Note that both are prefix-independent.
Consider the one-vertex graph $U_{\TW(c)}$ with a $c$-loop.
It is well-monotone, satisfies $\TW(c)$, and embeds all $\{c\}$-graphs (which all satisfy $\TW(c)$), so it is uniformly $\TW(c)$-universal.

\begin{figure}[h]
    \begin{center}
    \includegraphics[width=0.7\linewidth]{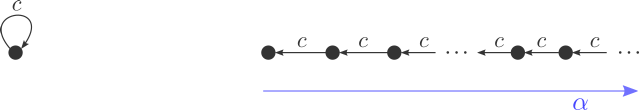}
    \end{center}
    \caption{On the left, the graph $U_{\TW(c)}$; on the right, the pregraph $U_{\TL(c)}$.}\label{fig:trivial_objectives}
\end{figure}

For the trivially losing condition, the situation is slightly more complex.
There are no graphs satisfying $\TL(c)$, therefore the empty graph is uniformly $\TL(c)$-universal.
However in the context of lexicographical products we require working with pregraphs.
Recall that Lemma~\ref{lem:pregraphs} states that non-empty universal graphs also embed pregraphs satisfying the condition; however this is not the case of the empty graph.
Indeed, there are non-empty pregraphs satisfying $\TL(c)$, namely, a pregraph satisfies $\TL(c)$ if and only if it has no infinite path.

Given an ordinal $\alpha$, we let $U_{\TL(c),\alpha}$ be the pregraph over $V(U_{\TL(c),\alpha})=\alpha$ with edges
\[
    \lambda \re c \lambda' \quad \iff \quad \lambda > \lambda'.
\]
In particular, note that $0 \in V(U_{\TL(c),\alpha})$ is a sink.
Since $U_{\TL(c),\alpha}$ has no infinite path, it satisfies the trivially losing condition $\TL(c)$.

Moreover, it embeds all pregraphs of size $< |\alpha|$ satisfying $\TL(c)$.
Indeed, a pregraph satisfying $\TL(c)$ has a sink, therefore the edgeless 1-vertex graph $1$ is uniformly almost $\TL(c)$-universal, and the result follows by Lemma~\ref{lem:rongeur_de_croutes} since $U_{\TL(c),\alpha} = 1 \alpha$.

\paragraph{Back to B\"uchi and co-B\"uchi games.}
Observe that the lexicographical products $\TL(\wait) \lex \TW(\good)$ and $\TW(\safe) \lex \TL(\bad)$ respectively coincide with the B\"uchi and co-B\"uchi objectives defined in Section~\ref{sec:examples}.
Therefore, by Theorem~\ref{thm:universality_lexicographical_product}, the graphs $U_{\TL(\wait),\alpha} \lex U_{\TW(\good)}$ and $U_{\TW(\safe)} \lex U_{\TL(\bad),\alpha}$ are respectively $(|\alpha|, \Buchi)$ and $(|\alpha|,\Cobuchi)$-universal.
See Figure~\ref{fig:buchi_cobuchi} for an illustration of this discussion.

\begin{figure}[h]
    \begin{center}
    \includegraphics[width=\linewidth]{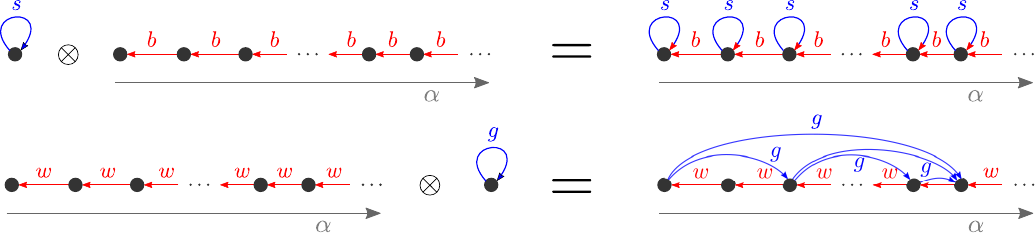}
    \end{center}
    \caption{Re-obtaining universal construction for co-B\"uchi and B\"uchi objectives by lexicographic product of trivial conditions.}\label{fig:buchi_cobuchi}
\end{figure}

Actually, these lexicographic products of well-monotone graphs correspond to the constructions given in Section~\ref{sec:examples} for B\"uchi and co-B\"uchi games.
Hence Theorem~\ref{thm:universality_lexicographical_product} gives alternative proofs for Lemmas~\ref{lem:non_uniform_construction_buchi} and~\ref{lem:non_uniform_construction_cobuchi}.

\paragraph{Parity games.}
Let $d \in \omega$ be a fixed even number.
The parity condition is defined over $C=[0,d] \subseteq \Z$ by
\[
    \begin{aligned}
    \Parity &= \{w \in [0,d]^\omega \mid \limsup w \text{ is even}\} \\
            &= \TW(0) \lex \TL(1) \lex \TW(2) \lex \dots \lex \TL(d-1) \lex \TW(d).
    \end{aligned}
\]
Therefore by Theorem~\ref{thm:universality_lexicographical_product}, for all ordinals $\alpha$ the graph
\[
    U_{\alpha} = U_{\TW(0)} \lex U_{\TL(1),\alpha} \lex U_{\TW(2)} \lex \dots \lex U_{\TL(d-1),\alpha} \lex U_{\TW(d)}
\]
is $(|\alpha|,\Parity)$-universal.
This gives a proof of positionality of parity games.

Unravelling the definitions, $U_{\alpha}$ is the graph over $\alpha^{d/2}$ given by
\[
    \begin{array}{rcl}
        (\lambda_1, \dots, \lambda_{d/2}) \re {2k-1} (\lambda'_1, \dots, \lambda'_{d/2}) \tin U_\alpha & \quad \iff \quad & (\lambda_k, \dots, \lambda_{d/2}) > (\lambda'_k, \dots, \lambda'_{d/2}) \\
        (\lambda_1, \dots, \lambda_{d/2}) \re {2k} (\lambda'_1, \dots, \lambda'_{d/2}) \tin U_\alpha & \quad \iff \quad & (\lambda_{k+1}, \dots, \lambda_{d/2}) \geq (\lambda'_{k+1}, \dots, \lambda'_{d/2}).
    \end{array}
\]
Therefore, our proof coincides with the proof of positionality of parity games from Emerson and Jutla~\cite{EJ91}, and morphisms into $U_\alpha$ correspond to signature assignments in the sense of Walukiewicz~\cite{Walukiewicz96}.

\subsection{Proof of Theorem~\ref{thm:universality_lexicographical_product}}\label{subsec:proof_lexico}

We will make use of the following technical lemma.

\begin{lemma}\label{lem:pointwise_min}
Let $G$ be a $C$-pregraph, $U$ be a $C$-monotone graph, and $\phi,\phi'$ be two morphisms from~$G$ to~$U$.
The pointwise minimum $\psi$ of $\phi$ and $\phi'$ defines a morphism $G \to U$.
\end{lemma}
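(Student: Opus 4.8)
The plan is to verify the morphism condition one edge at a time. Fix an edge $v \re c v' \tin G$; I would like to show that $\psi(v) \re c \psi(v') \tin \L$. The crucial elementary observation is that, since the order on $L$ is linear, the value $\psi(v) = \min(\phi(v), \phi'(v))$ is \emph{equal} to one of the two values $\phi(v)$ or $\phi'(v)$ (namely whichever is smaller), rather than being some genuinely new element. This lets me reuse one of the two edges already furnished by the hypothesis that $\phi$ and $\phi'$ are morphisms, instead of having to build an edge from scratch.

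Concretely, I would argue by a one-line case distinction. Assume without loss of generality that $\psi(v) = \phi(v)$, the case $\psi(v) = \phi'(v)$ being symmetric (use $\phi'$ in place of $\phi$ below). Since $\phi$ is a morphism, $\phi(v) \re c \phi(v') \tin \L$, that is, $\psi(v) \re c \phi(v')$. On the target side, $\psi(v') = \min(\phi(v'), \phi'(v'))$ is a lower bound of $\phi(v')$, so $\phi(v') \geq \psi(v')$. Right-composition then immediately yields $\psi(v) \re c \psi(v')$, which is exactly what is needed.

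I expect no real obstacle here: the argument uses only right-composition together with the case distinction on which of $\phi(v), \phi'(v)$ realises the minimum. The one point deserving care is bookkeeping the direction of the order on $L$ (written $\geq$), so that taking the pointwise minimum of the \emph{sources} pairs correctly with a right-composition step performed on the \emph{targets}; in particular left-composition is never invoked. It is also worth noting that neither well-foundedness nor completeness of $\L$ is used, only monotonicity, and that since the morphism condition is checked purely locally on edges, the proof applies verbatim to pregraphs.
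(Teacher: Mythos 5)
Your argument is correct and is essentially identical to the paper's own proof: fix an edge, assume without loss of generality that $\psi(v)=\phi(v)$, observe $\psi(v)=\phi(v)\re{c}\phi(v')\geq\psi(v')$, and conclude by right-composition. Nothing further is needed.
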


In particular, there is a well-defined (pointwise) minimal morphism from $G$ to $U$, for any well-monotone $U$.

\begin{proof}
Let $v \re c v' \tin G$, we prove that $\psi(v) \re c \psi(v') \tin U$.
Without loss of generality, we assume that $\psi(v)=\phi(v)$.
Then we have
\[
\psi(v) = \phi(v) \re {c} \phi(v') \geq \psi(v')  \tin U, 
\]
and the result follows by monotone composition.
\end{proof}

Let us now fix a cardinal $\kappa$ and two well-monotone graphs $U_1$ and $U_2$ which are $\kappa$-universal with respect to $W_1 \subseteq C_1^\omega$ and $W_2 \subseteq C_2^\omega$, respectively.
We denote $U = U_1 \lex U_2$ and $C = C_1 \sqcup C_2$.
Recall that $W_1$ and $W_2$ are assumed to be prefix-independent, and therefore so is their lexicographical product $W = W_1 \lex W_2$.
There are two things to show: that $U$ satisfies $W$ and that $U$ embeds all graphs of cardinality $< \kappa$ which satisfy $W$.
We start with the first property.

\begin{lemma}
It holds that $U$ satisfies $W$.
\end{lemma}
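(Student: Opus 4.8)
The plan is to take an arbitrary infinite path $\pi$ in $\L$ and verify that its coloration $w = \col(\pi)$ lies in $W$. I would write
\[
\pi : (\ell_1^0,\ell_2^0) \re{d_0} (\ell_1^1,\ell_2^1) \re{d_1} \dots,
\]
and split $w$ into its restrictions $w_1 \in C_1^{\leq \omega}$ and $w_2 \in C_2^{\leq \omega}$ to the two color sets. The first thing to record is how the second coordinate evolves along $\pi$. On a $C_2$-edge $d_i = c_2$ the definition of $\L$ gives a genuine step $\ell_2^i \re{c_2} \ell_2^{i+1} \tin \L_2$, whereas on a $C_1$-edge it forces $\ell_2^i \geq \ell_2^{i+1}$ (either $\ell_2^i > \ell_2^{i+1}$, or $\ell_2^i = \ell_2^{i+1}$ accompanied by an $\L_1$-step on the first coordinate). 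Thus $\ell_2$ never increases on $C_1$-edges and moves according to $\L_2$ on $C_2$-edges. The proof then branches according to whether $w_2$ is infinite or finite, matching the two disjuncts defining $W$.

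If $w_2$ is infinite, I would enumerate the $C_2$-edges at indices $i_0 < i_1 < \dots$, setting $a_k = \ell_2^{i_k}$ and $b_k = \ell_2^{i_k+1}$, so that $a_k \re{c_2^k} b_k \tin \L_2$ with $c_2^k = d_{i_k}$. Between two consecutive $C_2$-edges only finitely many $C_1$-edges occur, so by the observation above $b_k \geq a_{k+1}$. Right-composition in $\L_2$ then upgrades each $a_k \re{c_2^k} b_k$ together with $b_k \geq a_{k+1}$ into $a_k \re{c_2^k} a_{k+1} \tin \L_2$, producing a bona fide infinite path $a_0 \re{c_2^0} a_1 \re{c_2^1} \dots$ in $\L_2$ whose coloration is exactly $w_2$. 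Since $\L_2$ is $(\kappa,W_2)$-universal it satisfies $W_2$, hence $w_2 \in W_2$ and therefore $w \in W$.

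If $w_2$ is finite, there is an index $N$ past which every edge is a $C_1$-edge, so $(\ell_2^i)_{i \geq N}$ is nonincreasing in the well-order of $L_2$. Here I would invoke well-foundedness of $\L_2$ to obtain $N' \geq N$ beyond which $\ell_2$ is constant. For $i \geq N'$ a $C_1$-edge with $\ell_2^i = \ell_2^{i+1}$ must arise from the second clause, i.e.\ from an $\L_1$-step, yielding an infinite path $\ell_1^{N'} \re{d_{N'}} \ell_1^{N'+1} \re{d_{N'+1}} \dots$ in $\L_1$ whose coloration is a suffix of $w_1$ (which is infinite since $w_2$ is finite). As $\L_1$ satisfies $W_1$ this suffix lies in $W_1$, and prefix-invariance of $W_1$ promotes this to $w_1 \in W_1$, so again $w \in W$.

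The only genuinely delicate point is the compression step in the infinite-$w_2$ case: one must check that the finitely many intervening $C_1$-edges can only weakly decrease $\ell_2$, and that right-composition in $\L_2$ is precisely the tool that glues the individual $\L_2$-steps across these gaps into a single honest $\L_2$-path carrying coloration $w_2$. The finite-$w_2$ case is comparatively routine once well-foundedness of $\L_2$ is used to stabilise $\ell_2$ and hand control over to $\L_1$.
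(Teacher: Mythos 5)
Your proof is correct and follows essentially the same route as the paper's: the same case split on whether $w_2$ is finite or infinite, the same use of right-composition in $\L_2$ to glue the $C_2$-steps across the weakly decreasing $C_1$-gaps, and the same appeal to well-foundedness of $L_2$ plus prefix-invariance of $W_1$ in the finite case. No gaps.
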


\begin{proof}
Consider an infinite path $u^0 \re{c^0} u^1 \re{c^1} \dots$ in $U$, and let $w=c^0c^1 \dots$ be its coloration.
For all $i$ we let $u^i=(u_1^i,u_2^i)$ with $u_1^i \in U_1$ and $u_2^i \in U_2$.
 
Assume first that there are finitely many $c^i$'s which belong to $C_2$, and let $i_0$ be such that for all $i \geq i_0$, $c^i \in C_1$.
Then for all $i \geq i_0$, we have by definition that $u_2^i \geq u_2^{i+1}$.
Since $U_2$ is well-ordered, there is $i_1 \geq i_0$ such that for all $i \geq i_1$ we have $u_2^{i} = u_2^{i_1}$.
Therefore by definition of $U$, we have for all $i \geq i_0$ that $u_1^{i} \re{c^i} u_1^{i+1}$, and thus since $U_1$ satisfies $W_1$ and $W_1$ is prefix-independent, $w_1 \in W_1$.
Hence in this case, $w \in W$.

We now assume that there are infinitely many indices $i$ such that  $c^{i} \in C_2$, and let $i_0<i_1<\dots$ denote exactly those indices.
Then we have for all $j$ that all $c^i$'s with $i \in [i_j+1,i_{j+1}-1]$ belong to $C_1$ and thus by definition of $U$ it holds that $u_2^{i_j+1} \geq u_2^{i_{j+1}}$.
Hence we have in $U_2$
\[
    u_2^{i_0} \re{c^{i_0}} u_2^{i_0+1} \geq u_2^{i_1} \re{c^{i_1}} u_2^{i_1+1} \geq \dots,
\]
and thus by monotone composition in $U_2$,
\[
    u_2^{i_0} \re{c^{i_0}} u_2^{i_1} \re{c^{i_1}} \dots
\]
is a path in $U_2$.
Since $U_2$ satisfies $W_2$, we conclude that $w_2 \in W_2$ and thus $w \in W$.
\end{proof}

We now show the second property, namely that $U$ embeds all $C$-graphs of cardinality $< \kappa$ which satisfy $W$.
Let $G$ be such a graph.
We define a $C_2$-graph $G_2$ by $V(G_2) = V(G)$ and
\[
v \re {c_2} v' \tin G_2 \iff \exists v_1,v_1' \in V(G), v \rp{C_1^*} v_1 \re {c_2} v_1' \rp{C_1^*} v' \tin G,
\]
where the notation $\rp{C_1^*}$ refers to a finite path with colors only in $C_1$.
Note that $G_2$ may not be a graph: vertices which do not have a path visiting an edge with a color in $C_2$ are sinks in $G_2$.
This is why we require universality with respect to pregraphs.

\begin{lemma}
The pregraph $G_2$ satisfies $W_2$.
\end{lemma}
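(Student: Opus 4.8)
The plan is to lift an arbitrary infinite path of $G_2$ to an infinite path of $G$ and then invoke the hypothesis that $G$ satisfies $W$. First I would fix an infinite path $\pi : v_0 \re{c^0} v_1 \re{c^1} v_2 \dots$ in $G_2$, where each $c^j \in C_2$ since $G_2$ is a $C_2$-(pre)graph, and set $w_2 = c^0 c^1 \dots \in C_2^\omega$ to be its coloration. The goal reduces to showing $w_2 \in W_2$.

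Unfolding the definition of the edge relation of $G_2$, every edge $v_j \re{c^j} v_{j+1}$ is witnessed by a finite path $\pi_j : v_j \rp{C_1^*} a_j \re{c^j} b_j \rp{C_1^*} v_{j+1}$ in $G$, whose unique edge colored in $C_2$ is the middle one labelled $c^j$. Since the last vertex of $\pi_j$ is $v_{j+1}$, which is also the first vertex of $\pi_{j+1}$, the concatenation $\Pi = \pi_0 \pi_1 \pi_2 \dots$ is a genuine infinite path in $G$; crucially it contains the $C_2$-colored edge $a_j \re{c^j} b_j$ for every $j$, hence it traverses infinitely many $C_2$-colored edges. Letting $w = \col(\Pi)$, the restriction of $w$ to colors in $C_2$ is exactly $w_2 = c^0 c^1 \dots$, which is therefore infinite.

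Now I would apply the assumption that $G$ satisfies $W$, giving $w \in W$. Looking at the disjunction defining the lexicographical product, the case $w_2$ finite is excluded because $w_2$ is infinite, so membership $w \in W$ must be realized through its first branch, namely $w_2 \in W_2$. Since $w_2$ is precisely $\col(\pi)$, this shows that the coloration of $\pi$ lies in $W_2$; as $\pi$ was an arbitrary infinite path of $G_2$, we conclude that $G_2$ satisfies $W_2$.

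The argument presents no genuine difficulty, and I expect the only delicate point to be the bookkeeping in the concatenation step: one must check that gluing the trailing $C_1^*$-segment of $\pi_j$ to the leading $C_1^*$-segment of $\pi_{j+1}$ (both passing through $v_{j+1}$) yields a valid path of $G$, and that the glued path still carries infinitely many $C_2$-transitions, so that the $C_2$-projection of $w$ is infinite and we land in the correct branch of the definition of $W$. Note that sinks of $G_2$ play no role here, since the statement quantifies only over infinite paths.
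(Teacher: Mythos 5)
Your proof is correct and follows essentially the same route as the paper: lift the infinite path of $G_2$ to an infinite path of $G$ by concatenating the witnessing $C_1^*\,c\,C_1^*$-segments, note that the $C_2$-projection of its coloration is infinite and equals the coloration of the original path, and conclude via the first branch of the definition of $W$. The gluing concern you flag is handled exactly as you describe, since consecutive witnessing paths share the intermediate vertex.
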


\begin{proof}
Consider an infinite path in $G_2$; it is of the form
\[
\pi_2 : v_0 \re {c_1} v_3 \re {c_4} v_6 \re{c_7} \dots
\]
where $c_1,c_4,c_7,\dots \in C_{2}$, and there exist vertices $v_1,v_2,v_4,v_5,v_7,v_8,\dots \in V(G)$ and finite words $w_0,w_2,w_3,w_5,w_6,w_8,\dots \in C_{1}^{*}$ such that
\[
\pi: v_0 \rp{w_0} v_1 \re {c_1} v_2 \rp{w_2} v_3 \re{w_3} v_{4} \re{c_4} v_5 \rp{w_5} v_6 \rp{w_6} v_7 \re{c_7} v_8 \rp{w_8} \dots
\]
defines a path in $G$.
Therefore $\col(\pi) = w \in W$, and since $w_2 = c_1 c_4 c_7 \dots = \col(\pi_2)$ is infinite this implies that $w_2 \in W_2$, the wanted result.
\end{proof}

We now consider the pointwise minimal morphism $\phi_2:G_2 \to U_2$.
Now comes the crucial technical claim.

\begin{lemma}\label{lem:no_small_edge_go_up}
If $v,v' \in V(G)$ are such that $\phi_2(v) < \phi_2(v')$ then there is no edge $v \re {c_1} v'$ in $G$ with $c_1 \in C_1$.
\end{lemma}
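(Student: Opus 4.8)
The plan is to establish the contrapositive: assuming there is a $C_1$-edge $v \re{c_1} v'$ in $G$, I would show that $\phi_2(v) \geq \phi_2(v')$. The entire argument rests on the pointwise minimality of $\phi_2$ (Lemma~\ref{lem:pointwise_min}) together with one combinatorial observation about how $C_1$-edges of $G$ act on the contracted graph $G_2$: a $C_1$-edge $v \re{c_1} v'$ makes every outgoing $C_2$-edge of $v'$ in $G_2$ an outgoing $C_2$-edge of $v$ in $G_2$. Indeed, if $v' \re{c_2} w \tin G_2$ then by definition $v' \rp{C_1^*} u \re{c_2} u' \rp{C_1^*} w \tin G$ for some $u,u'$, and prepending $v \re{c_1} v'$ (a $C_1$-edge) gives $v \rp{C_1^*} u \re{c_2} u' \rp{C_1^*} w \tin G$, hence $v \re{c_2} w \tin G_2$.

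With this observation in hand, I would argue by contradiction. Suppose $\phi_2(v) < \phi_2(v')$, and define a map $\psi : V \to L_2$ agreeing with $\phi_2$ everywhere except that $\psi(v') = \phi_2(v)$. The heart of the proof is to verify that $\psi$ is still a morphism $G_2 \to \L_2$, which is where monotonicity of $\L_2$ is used. Edges of $G_2$ disjoint from $v'$ are unchanged. For an edge $v' \re{c_2} w$ with $w \neq v'$, the observation yields $v \re{c_2} w \tin G_2$, so $\psi(v') = \phi_2(v) \re{c_2} \phi_2(w) = \psi(w)$ since $\phi_2$ is a morphism. For an edge $x \re{c_2} v'$ with $x \neq v'$, the morphism $\phi_2$ gives $\phi_2(x) \re{c_2} \phi_2(v')$, and right-composition in $\L_2$ (using $\phi_2(v') \geq \phi_2(v) = \psi(v')$) yields $\phi_2(x) \re{c_2} \psi(v')$. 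The degenerate loop $v' \re{c_2} v'$ is handled the same way: the observation gives $v \re{c_2} v' \tin G_2$, hence $\phi_2(v) \re{c_2} \phi_2(v')$, and right-composition yields $\phi_2(v) \re{c_2} \phi_2(v) = \psi(v')$. Thus $\psi$ is a morphism.

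Since $\psi \leq \phi_2$ pointwise and $\psi(v') < \phi_2(v')$, the map $\psi$ is a morphism strictly below $\phi_2$, contradicting the pointwise minimality of $\phi_2$. Hence $\phi_2(v) \geq \phi_2(v')$, as required. I expect the only delicate point --- the main obstacle --- to be the careful treatment of the edges incident to $v'$ (and in particular the possible loop at $v'$), where one cannot appeal to the morphism property of $\phi_2$ directly at the modified value and must instead invoke right-composition in $\L_2$; everything else is routine bookkeeping.
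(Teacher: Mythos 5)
Your proof is correct and follows essentially the same route as the paper's: the same observation that a $C_1$-edge $v \re{c_1} v'$ transfers all outgoing $C_2$-edges of $v'$ in $G_2$ to $v$, the same modified map sending $v'$ to $\phi_2(v)$, the same three-way case analysis (including the loop at $v'$ handled via right-composition), and the same contradiction with pointwise minimality. No gaps.
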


\begin{proof}
Assume for contradiction that there is such an edge.
Then in $G_2$, for all $c_2 \in C_{2}$, any $c_2$-successor of $v'$ is also a $c_2$-successor of $v$.
Consider the map $\phi_2': V(G) \to V(U_2)$ given by $\phi_2'(v')=\phi_2(v)$ and elsewhere equal to $\phi_2$; note that $\phi_2' < \phi_2$.

We claim that $\phi_2'$ defines a morphism from $G_2$ to $U_2$, which contradicts minimality of $\phi_2$.
There are three cases.
\begin{itemize}
\item Edges in $G_2$ not adjacent to $v'$ are preserved by $\phi'_2$ because they are preserved by $\phi_2$.
\item Let $e = v' \re{c_2} v''$ be an edge outgoing from~$v'$ in~$G_2$.
Then we saw that $v \re {c_2} v''$ is an edge in~$G_2$.
\begin{itemize}
\item If $v'' \neq v'$ then we have
\[
\phi_2'(v') = \phi_2(v) \re {c_2} \phi_2(v'') = \phi_2'(v'') \tin U_2,
\]
the wanted result.
\item If $v'' = v'$ then we have
\[
\phi'_2(v') = \phi_2(v) \re {c_2} \phi_2(v'') = \phi_2(v') \geq \phi_2'(v') \tin U_2
\]
so we conclude thanks to monotone composition in $U_2$ that $\phi'_2(v') \re {c_2} \phi'_2(v') = \phi'_2(v'')$.
\end{itemize}
\item Last, for edges $v'' \re {c_2} v'$ incoming in $v'$ with $v'' \neq v'$ (the case where $v''=v'$ is treated just above), we conclude directly by monotone composition since
\[
\phi'_2(v'') = \phi_2(v'') \re {c_2} \phi_2(v') \geq \phi_2(v') \tin U_2. \qedhere
\]
\end{itemize} 
\end{proof}

Now for each $u_{2} \in V(U_{2})$, we let $G^{u_{2}}_1$ denote the restriction of $G$ to $\phi_2^{-1}(u_{2})$ and to edges with color in $C_1$.
Again, $G_1^{u_{2}}$ is only a pregraph in general, which is not an issue.
(Also it may be empty for some $u_{2}$'s which is not an issue either.)

For each $u_{2} \in V(U_{2})$, the pregraph $G^{u_{2}}_1$ satisfies $W$ since $G$ does, and it even satisfies $W_1$ since it has only edges with color in $C_1$ and $W \cap C_1^\omega = W_1$.
Thus there exists for each $u_{2} \in V(U_{2})$ a morphism $\phi_1^{u_{2}}$ from $G^{u_{2}}_1$ to $U_{1}$.

We now define a map $\psi: V(G) \to V(U)$ by
\[
\psi(v)_1 = \phi_1^{\phi_2(v)}(v) \qquad \tand \qquad \psi(v)_{2} = \phi_2(v).
\]
The following result concludes the proof of Theorem~\ref{thm:universality_lexicographical_product}.

\begin{lemma}{}
The map $\psi$ defines a morphism from $G$ to $U$.
\end{lemma}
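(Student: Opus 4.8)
The plan is to verify the morphism condition edge by edge, splitting on the color of the edge and reducing each case to the morphism property of one of the component maps $\phi_2$ or $\phi_1^{\ell_2}$. Fix an edge $v \re c v' \tin G$; the goal is to show $\psi(v) \re c \psi(v') \tin \L$. Recall that $\psi(v)_1 = \phi_1^{\phi_2(v)}(v)$ and $\psi(v)_2 = \phi_2(v)$, so the second coordinate of $\psi$ is exactly $\phi_2$.

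First I would treat colors $c = c_2 \in C_2$. By definition of the lexicographical product, an edge labelled $c_2$ in $\L$ depends only on the second coordinate: $(\ell_1,\ell_2) \re{c_2} (\ell'_1,\ell'_2) \tin \L$ iff $\ell_2 \re{c_2} \ell_2' \tin \L_2$. The edge $v \re{c_2} v' \tin G$ yields directly an edge $v \re{c_2} v' \tin G_2$ (take the empty $C_1^*$-paths on both sides, i.e. $u=v$ and $u'=v'$), so since $\phi_2$ is a morphism we obtain $\phi_2(v) \re{c_2} \phi_2(v') \tin \L_2$, which is exactly what is required; the first coordinate plays no role here.

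Next I would treat colors $c = c_1 \in C_1$, which is where Lemma~\ref{lem:no_small_edge_go_up} becomes essential. By definition of $\L$, $(\ell_1,\ell_2) \re{c_1} (\ell'_1,\ell'_2) \tin \L$ iff $\ell_2 > \ell_2'$, or $\ell_2 = \ell_2'$ and $\ell_1 \re{c_1} \ell_1' \tin \L_1$. Since $v \re{c_1} v'$ is a $C_1$-edge of $G$, Lemma~\ref{lem:no_small_edge_go_up} forbids $\phi_2(v) < \phi_2(v')$, leaving two possibilities. If $\phi_2(v) > \phi_2(v')$, the first disjunct holds and we are done regardless of the first coordinates. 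If $\phi_2(v) = \phi_2(v') =: \ell_2$, then both $v$ and $v'$ lie in $V^{\ell_2} = \phi_2^{-1}(\ell_2)$ and the edge $v \re{c_1} v'$ belongs to the pregraph $G_1^{\ell_2}$; applying the morphism $\phi_1^{\ell_2}$ gives $\phi_1^{\ell_2}(v) \re{c_1} \phi_1^{\ell_2}(v') \tin \L_1$. As $\psi(v)_1 = \phi_1^{\ell_2}(v)$ and $\psi(v')_1 = \phi_1^{\ell_2}(v')$ (using $\phi_2(v) = \phi_2(v') = \ell_2$), the second disjunct holds.

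I expect no serious obstacle at this stage: all the difficulty has been front-loaded into Lemma~\ref{lem:no_small_edge_go_up}, whose role is precisely to guarantee that the second coordinate never strictly increases along a $C_1$-edge, so that such edges are always correctly oriented in the lexicographical order and, when the second coordinate is constant, fall back to the layerwise morphism $\phi_1^{\ell_2}$. The only point requiring care is the bookkeeping that the first coordinates of $\psi(v)$ and $\psi(v')$ are computed through the \emph{same} map $\phi_1^{\ell_2}$ exactly in the case $\phi_2(v) = \phi_2(v')$, which is precisely the case in which the first coordinate is relevant to the edge relation in $\L$.
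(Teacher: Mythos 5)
Your proof is correct and follows essentially the same route as the paper's: split on whether the edge color lies in $C_1$ or $C_2$, use that $\phi_2$ is a morphism on $G_2$ for the $C_2$ case, and for the $C_1$ case invoke Lemma~\ref{lem:no_small_edge_go_up} to rule out a strict increase in the second coordinate, falling back to the layerwise morphism $\phi_1^{\ell_2}$ when the second coordinates coincide. Your write-up is in fact slightly more explicit than the paper's (e.g.\ noting that a $C_2$-edge of $G$ is an edge of $G_2$ via empty $C_1^*$-paths), but there is no substantive difference.
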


\begin{proof}
We have to verify that
\[
v \re c v' \tin G \qquad \implies \qquad \psi(v) \re c \psi(v') \tin U,
\]
and we separate two cases.
\begin{itemize}
\item If $c \in C_2$ then $v \re c v'$ in $G_2$ thus $\psi_2(v) \re c \psi_2(v')$ in $U_{2}$ which yields the result.
\item If $c \in C_1$ we know by Lemma~\ref{lem:no_small_edge_go_up} that $\psi_2(v) \geq \psi_2(v')$.
If this inequality is strict then the $\psi(v) \re {c} \psi{v'}$ by definition of $U$.
Otherwise we have $\phi_2(v)=\phi_2(v')$, and conclude thanks to the fact that $\phi^{\phi_2(v)}$ is a morphism from $G^{\phi_2(v)}$ to $U_{1}$. \qedhere
\end{itemize}
\end{proof}

\section{A class of positional objectives closed under countable unions}\label{sec:healing}

Before giving a high-level overview of this section, we formally introduce directed sums of monotone graphs.
In this section, we will require using strategies for Adam; these are defined just like strategies for Eve modulo inverting the roles of the two players.

\paragraph*{Directed sums.}

Let $(G_\lambda)_{\lambda < \alpha}$ be an ordinal sequence of $C$-monotone graphs.
Their directed sum $G = \sum_{\lambda < \alpha} G_\lambda$ is defined over $V(G) = \sum_{\lambda < \alpha} V(G_\lambda) \times \{\lambda\}$ by
\[
    E(G) = \{((v,\lambda) \re c (v',\lambda)) \mid v \re c v' \tin G_\lambda\} \cup \{(v,\lambda) \re c (v',\lambda') \mid \lambda > \lambda'\}.
\]
In words, we take the disjoint union of the $G_\lambda$'s, and close it by adding all descending edges between different copies.
It is naturally ordered lexicographically, meaning that
\[
    (v,\lambda) \geq (v',\lambda') \quad \iff \quad \lambda > \lambda' \tor [\lambda = \lambda' \tand v \geq v'].
\]
It is a direct check that $G$ is monotone.
Ordinal sums of well-ordered sets are well-ordered lexicographically, thus if the $G_\lambda$'s are well-monotone then so is $G$.

\paragraph*{High-level overview.}

In this section, we elaborate on Kopczy\'nski's conjecture that prefix-independent positional objectives are closed under unions.
It is not clear, given two well-monotone universal graphs $U_1,U_2$ respectively for $W_1$ and $W_2$, how to construct a well-monotone universal graph for $W_1 \cup W_2$ in general.
A naive construction that quickly comes to mind, is to ``alternate and repeat'', formally we consider the graph
\[
    U_\alpha = (U_1 + U_2) \alpha,
\]
for some big enough ordinal $\alpha$.
See Figure~\ref{fig:alternate_and_repeat}.

\begin{figure}[h]
    \begin{center}
    \includegraphics[width=\linewidth]{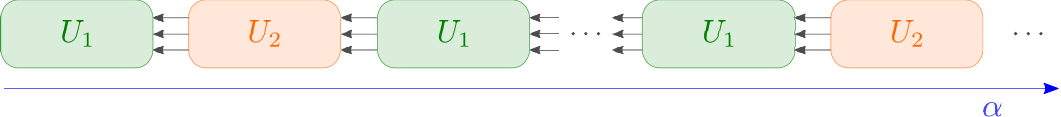}
    \end{center}
    \caption{The graph $U_\alpha = (U_1 + U_2) \alpha$ used as a naive construction for the union of $W_1$ and $W_2$.}\label{fig:alternate_and_repeat}
\end{figure}

Unfortunately, this construction is not well-behaved in general: for instance, if $C=\{a,b\}$ and $W_1=(b^*a)^\omega$ and $W_2=(a^*b)^\omega$ are B\"uchi conditions, then $W_1 \cup W_2 = C^\omega$ therefore it has a 1-vertex (thus well-monotone) uniformly universal graph $U_0$.
However, if $U_1$ and $U_2$ are taken to be the universal graphs discussed previously for B\"uchi conditions, then it is not hard to see that $U_\alpha$ does not embed $U_0$ (and therefore, cannot be $(U_1 \cup U_2)$-universal).

It turns out that there is a wide and natural class $\W$ of positional prefix-independent objectives for which the above construction actually works.
This class is similar to what Kopczy\'nski calls ``positional/suspendable'' conditions in his thesis~\cite{Kopczynski06}; however we give a simpler definition and obtain completely different proofs.

\paragraph*{Healing.}
A prefix of an infinite word $w=w_0w_1 \dots \in C^\omega$ is a finite word $w_0w_1 \dots w_k \in C^*$.
We let $\Pref(w) \subseteq C^*$ denote the set of prefixes of a given infinite word $w$.
Given a condition $W$, we say that a finite word $u \in C^*$ heals an infinite word $w \notin W$ if for all infinite sequences of nonempty prefixes $p_0,p_1, \dots \in \Pref(w)$ of $w$, it holds that $p_0up_1u\dots \in W$.
We insist that in our definition, $p_0,p_1 \dots$ is any sequence of prefixes of $w$; it may or may not be bounded, and it does not hold in general that $w=p_0p_1\dots$.

For example, if $W=(\wait^*\good)^\omega$ is a B\"uchi objective, then the word $u=\good$ heals the word $\wait^\omega \notin W$.
However, for a co-B\"uchi objective $W=(\safe + \bad)^* \safe^\omega$, an infinite word $w \notin W$ cannot be healed by any finite word.
More generally, we say that an objective $W$ admits healing if there is a $w \notin W$ healed by a $u \in C^*$, and otherwise, that it excludes healing.
We will be interested in prefix-independent positional objectives that exclude healing; some examples are given below.
We refer to Section~\ref{sec:examples} for their formal definitions.

\begin{lemma}
The following prefix-independent positional objectives:
\begin{itemize}
\item $\Cobuchi$, and more generally, $K$-monotone objectives
\item $\QEnergy$
\item $\BoundedCounter$
\item $\FinParity$
\end{itemize}
exclude healing.
However, $\Buchi$ admits healing.
\end{lemma} 

We exclude a proof of the lemma which is easy to verify in each case.

\paragraph*{Main result.} Our main result in this section is the following.

\begin{theorem}\label{thm:main_healing}
    Let $W_0,W_1 \dots \subseteq C^\omega$ be a sequence of determined\footnote{An objective is determined if from every vertex, one of the two player has a winning strategy. All objectives considered in this paper are determined.} prefix-independent positional objectives with a neutral letter and which exclude healing, let $\kappa$ be a cardinal number, let $\alpha$ be an ordinal with $|\alpha| \geq \kappa$, and let $U_0,U_1,\dots$ be well-monotone graphs such that $U_i$ is $(\kappa,W_i)$-universal.

    The well-monotone graph $U = (\sum_{i < \omega} U_i) \alpha$ is $(\kappa, \bigcup_{i<\omega} W_i)$-universal.
\end{theorem}

Combining it with Theorem~\ref{thm:main}, we obtain that countable unions of prefix-independent positional objectives which admit a neutral letter and exclude healing are positional.
In contrast, it was proved in~\cite{Kopczynski06} that uncountable unions of co-B\"uchi objectives are not always positional.

\paragraph*{Healing game.}
Toward proving Theorem~\ref{thm:main_healing}, we introduce an auxiliary game $\H_W$ defined for any condition $W \subseteq C^\omega$ with a neutral letter $\eps$ and which we call the $W$-healing game.
It is played in rounds as follows: in the $i$-th round,
\begin{itemize}
\item Eve chooses an infinite word $w_i \notin W$;
\item Adam chooses a nonempty finite prefix $p_i \in \Pref(w_i)$; 
\item Eve chooses a finite word $u_i \in C^*$;
\end{itemize}
then the game proceeds to the next round.
After infinitely many rounds, Eve wins the game if $p_0 u_0 p_1 u_1 \dots \in W$.

Formally, we define $\H_W=(H_W,\VE,W)$ as the game with condition $W$ on the infinite $C$-graph $H_W$ given by 
\[
    V(H_W) = C^* \times \{0,1\} \cup (C^\omega \setminus W),
\]
with only two states controlled by Eve, $\VE = \{(\emptyword,0),(\emptyword,1)\}$, and edges
\[
    \begin{aligned}
    E(H_W)  = & \{(w_0 \dots w_k ,b) \re {w_0} (w_1 \dots w_k,b) \mid k\geq 0, b \in \{0,1\}\} \\
        & \cup \{ (\emptyword,0) \re \eps w \mid w \notin W\} \\
        & \cup \{ w \re \eps (p,1) \mid p \in \Pref(w) \setminus \{ \emptyword\} \} \\
        & \cup \{ (\emptyword,1) \re \eps (u,0) \mid u \in C^* \}.
    \end{aligned}
\]
The game $\H_W$ is depicted in Figure~\ref{fig:healing_game}.

\begin{figure}[h]
    \begin{center}
    \includegraphics[width=0.7\linewidth]{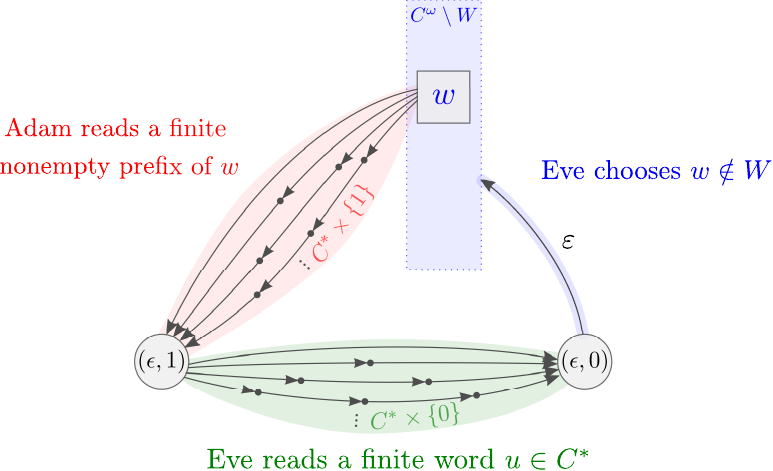}
    \end{center}
    \caption{The healing game $\H_W$.}\label{fig:healing_game}
\end{figure}

\begin{lemma}\label{lem:healing_game}
Let $W$ be a prefix-independent positional condition with a neutral letter.
Then Eve wins $\H_W$ (from any vertex) if and only if $W$ admits healing.
\end{lemma}

\begin{proof}
Assume Eve wins $\H_W$.
Then she wins with a positional strategy $P$.
Consider the two choices of Eve, namely let $w \notin W$ be such that the edge $(\emptyword,0) \re \eps w$ belongs to $P$, and let $u \in C^*$ be such that the edge $(\emptyword,1) \re \eps (u,0)$ belongs to $P$.
Then for any sequence $w_0, w_1, \dots$ of nonempty prefixes of $w$, there is a path of the form
\[
    (\emptyword,0) \rp{\eps w_0} (\emptyword,1) \rp{\eps u} (\emptyword,0) \rp{\eps w_1} (\emptyword,1) \rp{\eps u} \dots
\]
in $P$.
This path has coloration $\eps w_0 \eps u \eps w_1 \dots$, and since $P$ is winning, $w$ is healed by $u$.

Conversely, if $w \notin W$ is healed by $u$, then the positional strategy $P$ obtained by restricting edges outgoing from $\VE$ in $H_W$ to $(\emptyword,0) \re {\eps} w$ and to $(\emptyword,1) \re \eps (u,0)$ is winning: up to removing finite prefixes, all infinite paths have colorations of the form $w_0 u w_1 u \dots$, where $w_0,w_1, \dots$ are finite prefixes of $w$.
\end{proof}

We will use the lemma in the contrapositive: if $W$ excludes healing then Adam wins the healing game, assuming determinacy of $W$.

\paragraph*{Proof of Theorem~\ref{thm:main_healing}.}
Let us repeat the hypotheses of the theorem. 
We let $W_0,W_1,\dots \subseteq C^\omega$ be a sequence of determined prefix-independent positional objectives with neutral letters $\eps_i \in C$ and which exclude healing, let $\kappa$ be a cardinal and let $U_0,U_1, \dots$ be well-monotone graphs such that $U_i$ is $(\kappa,W_i)$-universal.
Let $U=\sum_{i<\omega} U_i$ and let $W=\bigcup_{i<\omega} W_i$.

\begin{lemma}
    The graph $U=\sum_{i < \omega} U_i$ is almost $(\kappa,W)$-universal.
\end{lemma}

This implies Theorem~\ref{thm:main_healing} thanks to Lemma~\ref{lem:rongeur_de_croutes}.
The proof of the lemma hinges on the construction of an infinite path in a graph $G$.
We refer the reader to Figure~\ref{fig:proof_healing} which illustrates and explains the $j$-th step of the construction.

\begin{figure}[h]
    \begin{center}
    \includegraphics[width=\linewidth]{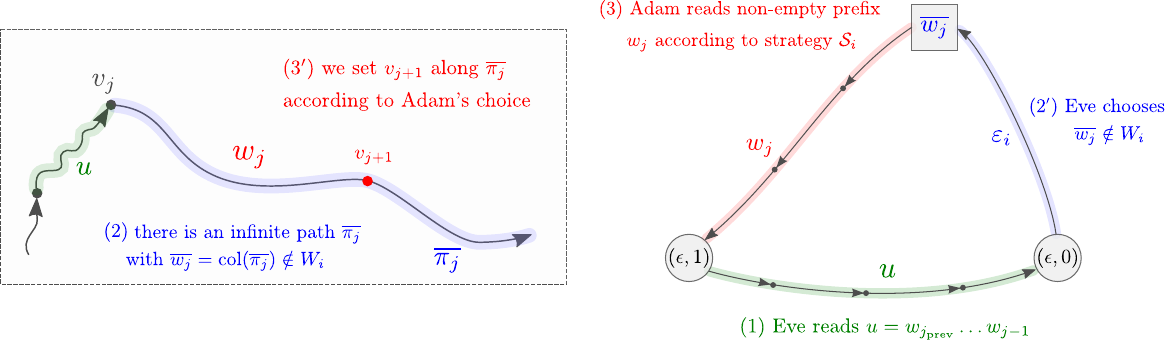}
    \end{center}
    \caption{A depiction of the $j$-th step of the proof. On the left, the graph $G$, and on the right, the game $\H_{W_{i}}$, where $i=e_j$ (see proof below).}\label{fig:proof_healing}
\end{figure}

\begin{proof}
Since each of the $U_i$'s satisfy $W_i$ and since $W$ is prefix-independent, their directed sum~$U$ satisfies $W$ as required.

Let $G$ be a graph of cardinality $< \kappa$ satisfying $W$, and assume towards contradiction that for all $v \in V(G)$ and for all $i \in \omega$, $G[v]$ does not satisfy $W_i$.
Pick any vertex $v_0$; our aim is to construct a path from $v_0$ in $G$ which does not satisfy any of the $W_i$'s, which contradicts the fact that $G \models W$.

By Lemma~\ref{lem:healing_game} and determinacy of the $W_i$'s, Adam wins the healing games $\H_{W_i}$ from all vertices.
For all $i$ we pick winning strategies $\S_i=(S_i,\phi_i,s_{0,i})$ for Adam in $\H_{W_i}$ from\footnote{To avoid overly cluttered notations, we use the same names for vertices in different healing games $\H_{W_i}$. All ambiguities introduced by this choice are easily resolved in context.} $(\emptyword,1)$ (formally, $\phi_i(s_{0,i}) = (\emptyword,1)$).

We construct in parallel an infinite path $\pi$ from $v_0$ in $G$, as well as for each $i$, an infinite path $\pi_i$ from $s_{0,i}$ in $S_i$.
By construction, up to removal of occurrences of neutral letters, all these paths will have the same coloration $w=\col(\pi)$ as $\pi$.
This leads to a contradiction: since $S_i$ is winning for Adam we get $w \notin W_i$ for all $i$, however it colors a path in $G$.

Let $e=e_0e_1 \dots \in \omega^\omega$ be an infinite sequence of integer containing infinitely many occurrences of each $i \in \omega$, for instance $e=001012012301234\dots$.
We construct our paths incrementally; we will have finite nonempty paths $\pi^0,\pi^1,\dots$ such that $\pi=\pi^0 \pi^1 \dots$, and likewise, finite (possibly empty) paths $\pi^0_i,\pi^1_i,\dots$ such that $\pi_i=\pi_i^0 \pi_i^1 \dots$.

We will denote $w_j=\col(\pi^j)$, and construct $v_0,v_1, \dots \in V(G)$ such that $\pi^j$ is a nonempty path from $v_j$ to $v_{j+1}$ in $G$.
Overall, all the $\pi_i^{j}$'s will be starting and ending in $\phi_i^{-1}(\emptyword,1) \subseteq V(S_i)$ (that is, just before Eve chooses a finite word in $\H_{W_i}$).

We now fix $j \geq 0$, assume the paths $\pi^{j'}$ and $\pi_i^{j'}$ constructed for $j'<j$, and let $i=e_j$.
In the $j$-th step, we will only extend $\pi$ as well as the $i$-th path $\pi_i$.
Formally, $\pi_i^j$ is the empty path whenever $i \neq e_j$.
We proceed as follows.
\begin{itemize}
\item Let $j_\prev$ denote the previous occurrence of $i$ in $e$ plus one, or $0$ if there is none.
Let $u=w_{j_\prev} w_{j_\prev +1} \dots w_{j-1} \in C^*$, and let $s \in V(S_i)$ be the last vertex on $\pi_i^{j_\prev}$ if $j_\prev \geq 1$, or $s_{0,i}$ if $j_\prev=0$.
It holds by construction that $\phi_i(s) = (\emptyword,1)$.

Since $\S_i$ is an Adam-strategy and $(\emptyword,1)$ belongs to Eve in $\H_{W_i}$, there is a path of the form
\[
    \tag{1}
    s \rp{\eps_i u} s_1 
\]
in $S_i$, with $\phi_i(s') = (\emptyword,0) \in V(H_{W_i})$.

\item By our assumption on $G$, there exists an infinite path $\li{\pi_j}$ from $v_j$ in $G$ with coloration $\li{w_{j}} \notin W$.
Since $\S_i$ is an Adam-strategy and $(\emptyword,0)$ belongs to Eve in $\H_{W_i}$, there is an edge 
\[
    \tag{2}
    s_1 \re {\eps_i} s_2
\]
in $S_i$, with $\phi_i(s_2)=\li {w_j}$.

\item Now consider an outgoing edge $s_2 \re {\eps_i} s'$ in $S_i$; since $\phi_i$ is a morphism it satisfies $\phi_i(s') = (p,1)$ for some nonempty prefix $p$ of $\li{w_j}$.
We set $w_j=p \in \Pref(\li{w_j})$.
There is a path
\[
    \tag{3}
    s_2 \re{\eps_i} s' \rp{w_j} s_3
\]
in $S_i$, with $\phi_i(s_3) = (\emptyword,1)$.
\item We set $\pi^j$ to be the prefix of $\li{\pi_j}$ with coloration $w_j$; it defines a non-empty finite path 
\[
     \pi^j : v_j \rp{w_j} v_{j+1}
\]
in $G$.
\item We set $\pi_i^j : s \rp{} s_3 \in S_i$ to be the concatenation of the paths $(1),(2)$ and $(3)$ above.
Since $\phi_i(s_3) = (\emptyword,1)$, the required invariant is satisfied.
Moreover, we have 
\[
    \col(\pi_i^j) = \eps_i u \eps_i^2 w_j,
\]
which is equal, up to removal of some occurrences of $\eps_i$, to $w_{j_\prev} w_{j_\prev +1} \dots w_{j-1} w_j$.
Thus is holds by induction that, up to removal of some occurrences of $\eps_i$, the coloration of $\pi_i^j$ matches that of $\pi_i$.
\end{itemize}
This concludes the construction of the paths, and leads to the wanted contradiction: $w=w_0w_1\dots=\col(\pi) \in W$ since it colors a path in $G$, however it cannot belong to any $W_i$ since it colors (up to removal of some neutral letters) a path in each $S_i$ (which are winning for Adam).

Therefore there exists $v \in V(G)$ such that $G[v] \models W_i$.
By universality of $U_i$, we then get $G[v] \to U_i \to U$, which defines the wanted morphism.
\end{proof}

\section{Conclusion}

In this paper, we introduced well-monotone graphs, and established that existence of universal such graphs, for a given valuation (or objective) characterizes its positionality.
Our proof of the implication from positionality to existence of universal structures requires a neutral color; we do not know whether this imposes a restriction on the class of positional valuations.

\paragraph*{A method for proving positionality.}
We believe that our work provides a very efficient tool for proving positionality of various valuations or objectives.
Experience has shown that, given, say, an objective $W \subseteq C^\omega$, one may usually approach proving positionality of $W$ as follows.
First, look for a candidate universal well-monotone graph $U$.
Then try to prove its universality (or often, we rather aim for almost universality which is sufficient thanks to Lemma~\ref{lem:rongeur_de_croutes}).
If this does not succeed, then one finds a graph $G$ that does not embed in $U$; this usually leads to two possible outcomes:
\begin{itemize}
    \item either we can enrich $U$ or slightly alter its structure in order to embed $G$, and then continue the process;
    \item or $G$ leads to a counter example to $W$'s positionality, for instance by constructing the game~$\cG'$ obtained from~$G$ as in Section~\ref{subsec:positionality_implies_structure}.
\end{itemize}

Let us illustrate this method on an example.
Let $C=\omega$, and for $w \in \omega^\omega$, let $\inf(w) \subseteq \omega$ denote the set of colors that have infinitely many occurrences in $w$.
We consider the objective
\[
    W = \{w \in \omega^\omega \mid \inf(w) \text{ is finite}\}.
\] 
A reasonable first candidate for (almost) universality of $W$ is the graph $U$ over $V(U)=\omega$ given by
\[
    u \re n u' \tin U \quad \iff \quad u>u' \tor [u=u' \tand n \leq u].
\]
See Figure~\ref{fig:first_candidate}.

\begin{figure}[h]
    \begin{center}
    \includegraphics[width=0.3\linewidth]{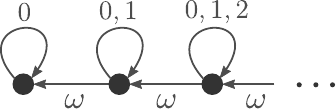}
    \end{center}
    \caption{The graph $U$, a first candidate for universality.}\label{fig:first_candidate}
\end{figure}

We then aim to prove that $U$ is (uniformly) almost universal: for this, one should prove that in a (small) graph $G$ satisfying $W$, there is a vertex $v \in V(G)$ and a bound $N \in \omega$ (corresponding to a valid position for $v$ in a mapping towards $U$) such that on a path from $v$ in $G$, there are at most $N$ occurrences of a color $> N$.

Assume that this is not the case: for all $v \in V(G)$ and for each $N \in \omega$, there is a path from~$v$ with $>N$ occurrences of colors $>N$.
Does this contradict the fact that $G \models W$?
Actually, it does not: the graph $G$ over $V(G)=\omega$ with edges
\[
    E(G) = \{n \re {n} n+1 \mid n \in \omega\}
\]
satisfies $W$ (colorations $w$ on $G$ satisfy $\inf(w)=\emptyset$) but does not meet the above requirement: it contradicts almost universality of $U$.

\begin{figure}[h]
    \begin{center}
    \includegraphics[width=0.75\linewidth]{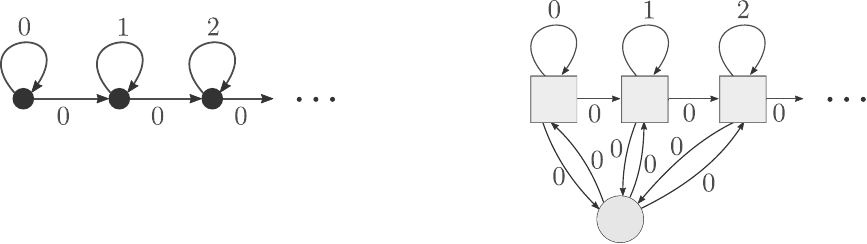}
    \end{center}
    \caption{On the left, the difficult graph $G$. On the right, the game $\game'$ obtained from $G$, and contradicting the positionality of $W$.}\label{fig:difficult_graph}
\end{figure}

At this point, we aim to update our candidate universal graph so that it should embed our new ``difficult graph'' $G$.
However, this fails in this case; the reason is that our $G$ can in fact be turned into a counter example to $W$'s positionality simply by mimicking the gadget from Section~\ref{subsec:positionality_implies_structure} as follows.

Let $\game'=(G',\VE',W)$ be the game with condition $W$ played over $G'$ with $V(G') = \omega \cup \{\bot\}$, $\VE=\{\bot\}$, and
\[
    E(G') = \{n \re n n+1 \mid n \in \omega\} \cup \{n \re 0 \top \mid n \in \omega\} \cup \{\top \re 0 n \mid n \in \omega\}.   
\]
By playing the back-and-forth strategy (when Adam goes to $\top$, go back to the same place), Eve ensures that the coloration $w$ satisfies $\inf(w) \subseteq \{0\}$ and thus wins the game.
However, Adam can beat any positional strategy by playing in a round robin fashion.
This concludes our discussion about $W$: it is not positional.

\paragraph*{A wealth of applications.}
We applied the above method in Section~\ref{sec:examples} to provide positionality proofs for a number of well-studied objectives.
For a few of them (namely, $\QEnergy,\BoundedCounter$ and $\FinParity$), the positionality result we derived is novel.
Going further, we provided two generic constructions for combining together universal graphs for prefix-independent objectives: finite lexicographical products in Section~\ref{sec:lexico}, and countable unions of objectives which exclude healing in 
Section~\ref{sec:healing}.
Despite being introduced only very recently in~\cite{Ohlmann22}, well-monotone graphs have already been exploited by Bouyer, Casares, Randour and Vandenhove~\cite{BCRV22} to characterize positional objectives recognized by deterministic B\"uchi automata.

\paragraph*{Open problems.}
The most tantalizing open question remains Kopczy\'nski conjecture: are unions of prefix-independent positional objectives positional?
This was recently answered in the negative in the case of finite game graphs by Kozachinskiy~\cite{Kozachinskiy22}; however the question remains open in the setting of infinite game graphs considered in this paper.
We believe that well-monotone graphs provide a nice tool to attack this question: on one hand, one can look for constructions combining well-monotone graphs to preserve unions (this is achieved in Section~\ref{sec:healing} assuming healing is excluded), and on the other hand, graphs that are ``hard to embed'' can provide counterexamples (see above for instance).

Other open problems include understanding the role of neutral letters (is it the case in general that if $W$ is positional then so is $W^\eps$? is it the case assuming $W$ is prefix-independent?), and characterizing positionality for $\omega$-regular objectives.
Another direction which we leave to future work is to extend the technology of well-monotone graphs to the case of finite-memory strategies; presently, no characterization result is known for objectives (or valuations) which are finite-memory determined for Eve over infinite graphs.

\printbibliography

\end{document}